\pgfplotsset{compat=1.11}
\newcommand{\Halmos}{}
\newtheorem{assumption}{Assumption}
\newtheorem{definition}{Definition}
\newtheorem{lemma}{Lemma}
\newtheorem{theorem}{Theorem}
\DeclareMathOperator{\ind}{\mathbbm{1}}
\DeclareMathOperator{\supp}{supp}
\DeclareMathOperator{\comp}{ComP}
\DeclareMathOperator{\alwayst}{At}
\DeclareMathOperator{\nevert}{Nt}
\DeclareMathOperator{\defier}{DeF}
\DeclareMathOperator{\indes}{Ind}
\DeclareMathOperator{\cat}{Cat}
\newcommand{\abbreviation}[2]{\textit{#2 (#1)}}
\newcommand{\sprodk}[2]{\left\langle #1 , \, #2 \right\rangle} % scalar product field
\newcommand{\hldef}[1]{\textit{#1}}
\newcommand{\abbrev}[2]{#1 \textit{(#2)}}
\newcommand{\defeq}{\coloneqq} % nice :=
\newcommand{\Rn}{\mathbb{R}} % real numbers
\newcommand{\Nn}{\mathbb{N}} % real numbers
\newcommand{\KEYWORDS}[1]{\textbf{Keywords:} #1}
\title{Effect Identification and Unit Categorization in the Multi-Score Regression Discontinuity Design with Application to LED Manufacturing}
\author{
  Philipp Alexander Schwarz\footnote{The authors contributed equally.} \\
  AMS Osram \\
  University of Hamburg \\
  \texttt{Philipp.Schwarz@ams-osram.com} \\
  \and
  Oliver Schacht\footnotemark[1] \\
  University of Hamburg \\
  \texttt{oliver.schacht@uni-hamburg.de} \\
  \and
  Sven Klaassen \\
  University of Hamburg \\
  Economic AI\\
  \and
  Johannes Oberpriller \\
  AMS Osram \\
  \and
  Martin Spindler \\
  University of Hamburg \\
  Economic AI \\
}
\date{}
\begin{document}

\maketitle

\abstract{%
% rdd + problem
\abbreviation{Regression discontinuity design}{RDD} is a widely used framework for identifying and estimating
causal effects at the cutoff of a single running variable.
In practice, however, decision-making often involves multiple thresholds and criteria, especially in production systems.
Standard \abbreviation{multi-score RDD}{MRD} methods address this complexity by reducing the problem to a one-dimensional design.
This simplification allows existing approaches to be used to identify and estimate causal effects,
but it can introduce non-compliance by misclassifying units relative to the original cutoff rules.
% theoretical solution
We develop theoretical tools to detect and reduce ``fuzziness'' 
when estimating the cutoff effect for units that comply with individual subrules of a multi-rule system.
In particular, we propose a formal definition and categorization of unit behavior types under multi-dimensional cutoff rules,
extending standard classifications of compliers, alwaystakers, and nevertakers, and incorporating defiers and indecisive units.
We further identify conditions under which cutoff effects for compliers can be estimated in multiple dimensions,
and establish when identification remains valid after excluding nevertakers and alwaystakers.
In addition, we examine how decomposing complex Boolean cutoff rules (such as AND- and OR-type rules) into simpler components
affects the classification of units into behavioral types and improves estimation 
by making it possible to identify and remove non-compliant units more accurately.
% emperical validation
We validate our framework using both semi-synthetic simulations calibrated to production data and real-world data from opto-electronic semiconductor manufacturing.
The empirical results demonstrate that our approach has practical value in refining production policies and reduces estimation variance.
This underscores the usefulness of the MRD framework in manufacturing contexts.
}%

\KEYWORDS{Causal Inference, Machine Learning, Regression Discontinuity, Causal Effect Identification, Multi-Score RDD, Rework, Production Optimization, Phosphor-converted white LEDs}

\doublespacing

\section{Introduction}
In many areas of economics and management, decision-making depends on multiple criteria. Examples include targeted marketing, education policies, and political advertising. The complexity of such settings has motivated the development of formal approaches such as multi-criteria decision analysis, which make trade-offs explicit and offer methods for comparing and ranking alternatives. Once such decision rules are applied in practice, a further challenge is to assess their causal impact. The focus then shifts from choosing among options to evaluating the consequences of those choices. Causal inference, and in particular regression discontinuity design (RDD), provides tools for this purpose.

RDD is a well-established quasi-experimental approach for identifying causal effects whenever treatment assignment changes discontinuously at a threshold in an observed running variable. It has been applied in diverse fields, including economics  \citep{Hartmann2011, Card2015, Flammer2015, Calvo2019},
public policy \citep{Lee2008} and the social sciences \citep{Angrist1999}.
The appeal of RDD lies in its ability to produce credible causal estimates without relying on the assumptions that are required for most causal inference approaches. In particular, it does not require \textit{unconfoundedness} \cite{Rubin1974} or \textit{global positivity} \citep{Austin2011}, but instead depends on continuity of potential outcomes and the absence of precise manipulation around the cutoff.

The classic RDD exploits a discontinuity in a single treatment assignment rule that depends on an observed running or score variable. When correctly specified, this setting enables average treatment effects to be identified for units near the cutoff. Under mild conditions, it further allows counterfactual reasoning about the decision rule in a small neighborhood around the boundary  \citep{Cattaneo2019, Imbens2008, Lee2010}.

As noted above, treatment decisions in many real-world contexts are based on multiple criteria, particularly in industrial, operational, and engineering systems \citep{Sabaei2015}.
The natural extension of the classical RDD is multi-score RDD (MRD). In MRD, treatment depends on a joint rule requiring that multiple score variables satisfy specified threshold conditions simultaneously. The idea was first proposed by  \citet{papay2011}. Influential surveys that develop estimators and examine them empirically include  \citet{Reardon2012}, \citet{wong2013} and \citet{porter2017}. MRD has been widely applied, especially in test-score-based settings \citep{an2024} and geographical applications \citep{keele2015}. However, strategies for identifying MRD effects when multiple cutoffs are combined through arbitrary Boolean rules are still underdeveloped in the literature.

This paper addresses the identification and estimation challenges that arise in multi-score RDD settings. We develop new results tailored to these contexts. In particular, we introduce a formal framework that defines and categorizes behavioral types (``compliers'', ``alwaystakers'', ``nevertakers'', ``defiers'') and analyzes how these types transform when treatment assignment rules change. We investigate general Boolean cutoff mechanisms, such as ``AND-type'' and ``OR-type'' rules, which are commonly observed in MRD applications, and analyze how their properties influence local identification. Further, we demonstrate how the complier effect can be identified using these categories. Finally, we illustrate our results with data from a light-emitting diode (LED) production facility, in which each batch is evaluated based on several metrics, such as mean color point measurements or the proportion of units within a lot that meet specified quality requirements.
%A batch is approved for shipment only if all these indicators meet predefined thresholds.
%This conjunctive decision rule ensures product quality but may also lead to unnecessary rework or waste if not optimally calibrated.
%Understanding the causal impact of such multi-threshold policies is crucial.

Our application of the proposed strategies to data from the LED manufacturing industry demonstrates the usefulness of the framework in a real-world setting where treatment assignment depends on multiple quality indicators. We combine the framework with recently developed estimators that use machine learning adjustments to further reduce variance in estimation \citep{noack2024}. Estimating the causal effect of threshold-based production decisions provides guidance for tuning decision thresholds to improve manufacturing outcomes. In practice, managers need to understand how marginal adjustments to these thresholds affect outcomes such as yield rates and thus overall production efficiency. We complement the empirical analysis with a simulated study based on a semi-synthetic replica of the production environment, highlighting the value of multi-score RDD for counterfactual analysis and manufacturing policy design.

This study contributes to the operations research literature by linking advances in causal machine learning with practical decision-making in manufacturing systems. Our theoretical results on effect identification in MRD create new opportunities for its application in decision-making contexts. By accommodating complex, multi-criteria assignment mechanisms, the framework reflects the realities of operational decision-making practice and provides tools for more informed and effective policy design.

The paper is organized as follows. Section 2 reviews related literature on RDD and its extensions to multi-score settings. Section 3 presents our theoretical framework for identification under general Boolean assignment rules. Section 4 describes the empirical application in opto-electronic semiconductor manufacturing. Section 5 reports results from synthetic and semi-synthetic environments. Section 6 discusses findings from the real-data application, emphasizing the implications for threshold calibration and production efficiency. Section 7 concludes.

\section{Related Literature}\label{sec:prev_work}
RDD dates back to a study on scholarship programs by  \citet{thistlethwaite1960}. In the following decades, the approach gained popularity in the social sciences, with notable applications by  \cite{Angrist1999} and \cite{Black1999}. Building on this early empirical work, \citet{Hahn2001} provided the first formalized treatment of RDD within the potential outcomes framework and established non-parametric identification and estimation results for both sharp and fuzzy designs, firmly establishing RDD as a core method in modern econometrics.

Important surveys, including those by  \citet{Imbens2008}, \citet{Lee2010} and \citet{Cattaneo2019}, synthesize guidance on estimation choices in RDD, such as bandwidth selection, polynomial order, and diagnostic checks. More recently, several theoretical advances have refined the method. \citet{Calonico2014} developed bias-corrected point estimators and robust confidence intervals that improve coverage accuracy and have become the empirical default for RDD inference. \cite{calonico2019} examined the inclusion of covariates in local regression to reduce variance, while \citet{Calonico2019b} analyzed optimal bandwidth selection for robust estimators. Further extensions incorporate machine-learning-based covariate adjustments to further reduce variance, integrating RDD into the broader toolbox of causal machine learning \citep{Kreiss2022, Arai2025, noack2024}.

The idea of extending RDD to multi-score settings dates back to \citet{papay2011}, who showed that causal effects can be identified along a two-dimensional cutoff frontier when program eligibility depends jointly on multiple scores. \citet{Reardon2012} systematically collected and formalized MRD identification strategies for cases in which running variables jointly cross their cutoffs to form a multidimensional frontier, corresponding to an ``AND''-type rule. They introduced frontier-projection and distance-based estimators, and highlighted the steep data-density and bandwidth challenges that arise as the dimensionality of the assignment space grows. \citet{wong2013} advanced this line of work with extensive simulation studies, demonstrating that estimators that weight observations along the entire cutoff frontier achieve lower bias and higher efficiency than naïve one-dimension-at-a-time approaches. Their findings underscore the importance of exploiting the joint assignment rule in practice. In another large-scale study, \cite{porter2017}  reached similar conclusions and provided practical recommendations for MRD applications. A related special case, geographic RDD, was introduced by \citet{keele2015} to evaluate policies along irregular geographic borders. 

More recent research has focused on theoretical advances. \citet{Choi2018} allowed for partial treatment effects when only some running variables cross their cutoffs and introduced thin-plate spline estimators to flexibly recover heterogeneous impacts along a high-dimensional frontier. Building upon earlier work by \cite{Imbens2009}, \cite{Choi2023} analyzed two-dimensional MRD under limited compliance and examined the consequences for estimating complier effects in fuzzy MRD.

Research has also produced more sophisticated MRD estimators. \citep{Imbens2019} proposed a minimax approach, \citep{liu2024} introduced decision-tree methods, and \citep{Sawada2025} developed fully multivariate local-polynomial estimators. \citet{Cattaneo2025} examined the statistical properties of important MRD estimators in detail and provided practical recommendations.

RDD has additionally been applied in a variety of business and management contexts. \citep{Hnermund2021} studied its use in business decision-making, while \citet{Ho2017, Mithas2022} examined applications in production and operations management. \citet{Calvo2019} used a one-dimensional fuzzy RDD to estimate the effect of procurement officers on budget overruns and delays in public infrastructure projects. \citet{Leung2020} analyzed the effects of labor unionization on the operating performance of supplier firms. In the automobile sector, \citet{Hu2021} employed RDD to assess the impact of regulatory changes on compliance. To the best of our knowledge, however, neither RDD nor MRD has yet been directly applied to production policies.

This paper contributes to the literature in two main ways. First, we derive new theoretical results on unit categorization and effect identification in MRD, which are particularly useful in settings with complex, multi-dimensional decision rules. Secondly, we apply these results to real data, demonstrating the applicability of MRD in operations policy-making and providing evidence to guide actual production policy.

\section{General Identification Strategies in Multi-score RDD}
\label{sec:theory}
In this section, we first review the basic elements of the RDD
and its extension to the multi-score case. 
We then introduce a formal definition of common unit types (e.g., compliers, defiers)
in multi-score, two-stage decision settings that employ cutoff rules for treatment assignment.
Using this categorization, we derive an identification result and show that,
under certain assumptions, identification does not depend on subsets of unit types with constant response.
Finally, we derive rules describing how unit classifications change as assignment rules become more complex.
\subsection{Regression Discontinuity Design}
RDD builds on the groundwork of \cite{Hahn2001}, which provides a formal justification for effect identification and estimation.
An RDD consists of three main elements:
a score $X$ that measures a characteristic of the units, a cutoff $c$ that splits the support of the score into two groups,
and a treatment $D$ that is assigned to each unit depending on whether its score lies above or below the cutoff, $D_i = \ind[X_i\ge c]$ \citep{Cattaneo2019}. \\

Typically, the parameter of interest is the average treatment effect at the cutoff $c$,
$\mathbb{E}[Y(1)-Y(0)\,|\,X=c]$, where $Y(1)$ and $Y(0)$ denote the potential outcomes under treatment and control, respectively \cite{rubin2005}.
Identification of this effect relies on the following key assumption:
\begin{assumption}[Continuity]\label{assumption:std_rdd_continuity}
The conditional mean of the potential outcomes, $\mathbb{E}[Y_i(d)\mid X_i = x]$ for $d\in \{0,1\}$, is continuous at the cutoff $c$.
\end{assumption}

This assumption implies that units with scores close to but on opposite sides of the cutoff are comparable in all relevant aspects \citep{Hahn2001}. 
To rule out selection on gains, it is further required that units near the threshold cannot perfectly manipulate their scores.
Formally, for sufficiently small \(\epsilon > 0\), $\mathbb{E}[Y(1)-Y(0)\mid D,\, X=x] = \mathbb{E}[Y(1)-Y(0)\mid X=x]$ must hold for $x\in (c-\varepsilon, c+\varepsilon)$ \citep{chernozhukov2024}.

Under these relatively mild assumptions, RDD provides inference around the threshold that is as credible as inference from a randomized experiment \citep{Lee2008}.
The average treatment effect at the cutoff,
\begin{equation}
 \tau_0 = \mathbb{E}[Y_i(1)-Y_i(0)\mid X_i = c]
\end{equation}
can be identified as
\begin{equation}
\tau_0 = \lim_{x \to c^+} \mathbb{E}[Y_i \mid X_i = x] - \lim_{x \to c^-} \mathbb{E}[Y_i \mid X_i = x]
\end{equation}
\cite{Hahn2001}.
\subsection{Problem description}
While standard RDD is suitable for analyzing the effect of a single score variable at a cutoff,
real-world treatment decisions often depend on multiple score variables.
For example, treatment \(T_i = 1\) may be assigned to unit \(i\) only if both criteria \(X_{1,i} > c_1\) and \(X_{2,i} > c_2\) are met.
Such multi-score cases are often reduced to a standard RDD by transforming the variables into a single composite score,
allowing identification and estimation to be justified by results from the one-dimensional case.
A drawback of this approach, however, is the reduced interpretability of the transformed score.
This raises the question of the precise conditions under which a treatment effects exist
and can be identified in a multi-score setting.  
In particular, a well-defined multidimensional cutoff effect
requires some independence of the directions from which the cutoff is approached.

In applications, the aim is usually to optimize the treatment assignment \(T_i\) in order to improve the outcome \(Y_i\).
In some critical systems, however, only gradual adjustments to \(T_i\) are feasible.
In such cases, part of the assignment rule \(T_i\) may be treated as fixed;
for example \(H_i \defeq \ind[X_{2,i} > c_2]\) is accepted as given,  
while the remaining part, that is \(G_i \defeq \ind[X_{1,i} > c_1]\), is subject to optimization. 
The effect of a subrule \(G\) at its cutoff can therefore provide insights
into how the overall assignment rule \(T\) might be improved.
In some settings, the assignment rule \(T\) serves only as a recommendation, 
and the actual treatment \(D\) may deviate from it.
In the RDD literature, this case is referred to as ``fuzzy''.
Altogether, this leads to a hierarchy of decision rules \(G \to T \to D\). 
Each step in the hierarchy obscures the assessment of \(G\) (for example, evaluation of the optimal value of cutoff \(c_1\))
because at every level of it a unit may become non-compliant.
Knowledge of the unit's compliance type with respect to \((G,\, D)\), however allows these ambiguities to be addressed.
In practice, this is usually handled using a fuzzy RDD estimator, which adjusts 
the sharp effect estimate by incorporating an estimate of the treatment probability \(\Pr(D_i=1 \,|\, G_i)\), 
thereby recovering a complier effect.

In contrast, we categorize unit behavior with respect to the rules \((G,\, D)\)
into the standard compliance types: compliers, defiers, nevertakers, alwaystakers,
and an additional category of indecisive units.
Based on this categorization, we derive assumptions for identifying the 
complier effect with respect to \((G,\,D)\).
We then examine conditions under which non-change units (e.g., nevertakers) can be excluded
without affecting identification and thus avoiding some of the ambiguities.
For example, units with \(X_{2,i} \leq c_2\) always have \(T_i = 0\) regardless of \(G_i\) and thus  
may not contribute to the complier effect of \(G\).
In some cases, this subsetting principle justifies the use of sharp RDD estimators,
which are typically more stable than their fuzzy counterparts.
Finally, we show how the unit classifications change when moving
on the hierarchy from simple to more complex assignment rules,
incorporating knowledge of intermediate categorizations.

\subsection{Setup}
We use the notation
\(\Rn^K_{> 0} \defeq \left\{x \in \Rn^K \,\middle|\, \forall \, 0 \leq j \leq K \,:\, x_j > 0 \right\}\)
for vectors with positive components
and adapt it in the canonical way for the cross product of
other ordered sets and relations (e.g., \(\Rn^K_{< y}\) or \(\Nn_{\leq k}\)).
Let \(e_1, \ldots, e_K \in \Rn^K\) denote the standard basis vectors,
and let \(\sprodk{\cdot}{\cdot}\) the standard inner product in \(\Rn^K\).
For linear spaces \(U\) and \(V\) over \(\Rn\) with \(U \cap V = \{0\}\), we write \(U \oplus V\)
for their direct sum.

For each individual \(i\), let \(X_i = (X_{1,i}, \, \ldots, \, X_{K,i}) \in \Rn^K\) 
denote the vector of score variables. 
Given \(c \in \Rn^K\), let
\(I_i(c) \defeq (I_{1,i}(c_1), \, \ldots, \, I_{K,i}(c_K))\)
define the indicator vector, where
\(I_{k,i}(c_k) \defeq \ind[X_{k,i} > c_k]\).
Let the observed outcome for individual \(i\) be \(Y_i\).
We treat the entries of \(I_i(c)\) as Boolean variables and allow 
any composition of AND, OR, and negation (\(\land, \, \lor, \overline{\square}\))
over this set of atoms to form general Boolean functions \(g(I_i(c))\).
\begin{definition}
  A mapping \(T: \Rn^K \to \{0, 1\}\) is called a decision rule.
  We say that a decision rule \(T\) is a cutoff rule on \(\Rn^K\) if there exists a Boolean mapping \(g\) such that
  \(T_i = T(X_i) = g(I_i)\).
  That is, \(T\) determines the individual decisions \(T_i\) by applying cutoffs to the components of \(X_i\).
\end{definition}
With slight abuse of notation, we use \(T(c)\), \(T(X_i \,|\, c)\) and \(T_i(c)\) to indicate the use of a specific cutoff \(c \in \Rn^K\).
Note that   
\begin{align}\label{eq:transf_cutoff}
  T(X + \epsilon \,|\, c) = T(X \,|\, c - \epsilon) \quad 
  \mbox{and} \quad
  T(\lambda X \,|\, c) = T\left(X\,\middle|\, \frac{c}{\lambda}\right)
\end{align}
holds for \(\epsilon \in \Rn^K, \lambda \in \Rn_{>0}\).
Thus, without loss of generality, we assume the cutoff of interest is \(c = 0\).
We suppose that the outcome \(Y_i\) depends in the following way on a cutoff rule \(T\) and a general decision rule \(D\):
\begin{align*}
  Y_i = Y_i(T_i,\, D_i) &= \bigg(Y_i(0,\, 0)(1-D_i) + Y_i(0,\, 1)D_i\bigg)(1-T_i)\\
                      &+ \bigg(Y_i(1,\, 0)(1-D_i) + Y_i(1,\, 1)D_i\bigg)T_i,
\end{align*}
where \(T\) is the treatment assignment and \(D\) is the implemented treatment.
Unless otherwise stated, we impose no further assumptions on \(D\), except that it is a decision rule.
\subsection{Unit Categorization}
Given this setup, certain groups of individuals are of particular interest:
nevertakers, alwaystakers, compliers and defiers with respect to the pair \((T,\, D)\) or to \((G,\, D)\), where \(G\) is a subrule of \(T\).
A meaningful categorization of unit \(i\)'s requires counterfactual reasoning:
we must consider how hypothetical changes in the cutoff (or, equivalently, in the observed score values)
would have changed the responses \(T_i\) and \(D_i\).
\begin{figure}[ht]
  \centering
  \begin{tikzpicture}
  \begin{axis}[
    xmax=2,ymax=2,xmin=-2,ymin=-2,
    axis lines=middle,
    axis line style={Stealth-Stealth,thick},
    xtick distance=1,
    ytick distance=1,
    yticklabel=\empty,
    xticklabel=\empty,
    grid=major,
    grid style={thin,densely dotted,black!20},
    ylabel=\small $X_2$,
    xlabel=\small $X_1$
  ]
    % T
    \draw [orange] (1,1.2) node {\small $T=1$};
    \draw [orange] (1,-1.2) node {\small $T=1$};

    \draw [orange] (-1,1.2) node {\small $T=0$};
    \draw [orange] (-1,-1.2) node {\small $T=0$};

    % D
    \draw [cyan] (-1, 1.6) node {\small $D=0$};
    \draw [cyan] (1,-1.6) node {\small $D=0$};
    \draw [cyan] (-1,-1.6) node {\small $D=0$};
    \draw [cyan] (1,1.6) node {\small $D=1$};

    % cutoff change
    \node at (-0.9, 0.25)[circle,fill,inner sep=1.6pt] (X) {};
    \node at (0.6, 0.25)[draw,circle,gray!90,inner sep=1.6pt,fill=gray!90] (Xc) {};
    \draw[solid,densely dotted,color=gray!40,very thick,->] (X) edge (Xc); 
    \node[above=0.02cm of X] (XT) {\small \(X_i\)};
    \node[above=0.02cm of Xc,color=gray!90] (XcT) {\small \(X_i-c\)};
  \end{axis}
\end{tikzpicture}
  \caption{
    Cutoff rules \(D = I_1 \land I_2\) and \(T = I_1\).
    The decision boundaries coincide with the coordinate axes.
    When \(X_2 > 0\), \(T\) complies with \(D\);
    otherwise, \(D = 0\) regardless of the value of \(X_1\), which governs the behavior of \(T\).
  }
  \label{fig:cutoff_shift_and}
\end{figure}
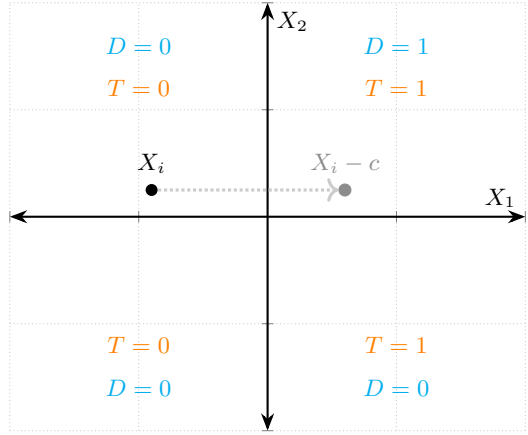
Figure \ref{fig:cutoff_shift_and} illustrates this idea for \(D \defeq I_1 \land I_2\) and \(T \defeq I_1\),
and shows that only changes \(c \in \Rn^2\) that affect \(T\) are relevant for categorizing the behavior of a unit \(i\) with respect to \((T, \, D)\).
Additional variables that affect only \(D\) may not be controllable or even observable.
The following definition formalizes the intuition behind relevant directions of change in a cutoff rule.
\begin{definition}
  Let \(T\) be a cutoff rule on \(\Rn^K\).
  Define
  \[S(T) \defeq \left\{k \in \Nn_{\leq K} \,\middle|\, \exists c \in \Rn^K, \lambda \in \Rn: \, T(0 \,|\, c + \lambda e_k) \neq T(0 \,|\, c)\right\}\]
  as the set of support directions of \(T\).
  The support of \(T\) is then defined as the linear span of these directions:
  \[\supp(T) \defeq \left\{\sum_{k \in S(T)} \lambda_k e_k \,\middle|\, \lambda_k \in \Rn,\, k \in S(T) \right\}\]
\end{definition}
This technical definition greatly simplifies the treatment of equivalent cutoff rules
for example when \(T \defeq I_1 \land (I_2 \lor \overline{I_2})\) and \(G \defeq I_1\).
It also makes it possible to compare \(T\) and \(D\) on their common domain.
The notion of the support of a cutoff rule is motivated by the existence of changes along coordinate directions.
It follows almost immediately that if the support is empty,
then no changes are possible along linear combinations of the coordinate directions.
This property justifies the definition.
\begin{restatable}{lemma}{lemconstsupp} \label{lem:constsupp}
  \(T\) is constant if and only if \(S(T) = \emptyset\), or equivalently, if and only if \(\supp(T) = \{0\}\).
\end{restatable}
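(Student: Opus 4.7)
The equivalence between $S(T) = \emptyset$ and $\supp(T) = \{0\}$ is immediate from the definition, since $\supp(T)$ is the linear span of $\{e_k : k \in S(T)\}$, which equals $\{0\}$ precisely when $S(T)$ is empty. It therefore suffices to prove that $T$ is constant if and only if $S(T) = \emptyset$.

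My first step is to convert the cutoff-shift condition defining $S(T)$ into an invariance statement about $T$ as a function on $\mathbb{R}^K$. Applying the translation identity \eqref{eq:transf_cutoff} with $\epsilon = c$ and $X = -c$ yields $T(0 \mid c) = T(-c \mid 0) = T(-c)$ under the convention that $T(\cdot) = T(\cdot \mid 0)$. Substituting $x = -c$, the condition ``$k \notin S(T)$'' (i.e., $T(0 \mid c + \lambda e_k) = T(0 \mid c)$ for all $c, \lambda$) becomes $T(x + \lambda e_k) = T(x)$ for all $x \in \mathbb{R}^K$ and $\lambda \in \mathbb{R}$. Hence $S(T) = \emptyset$ is equivalent to $T$ being invariant under translation along every coordinate axis.

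The forward implication is then immediate: constant functions are invariant under any translation. For the converse, given arbitrary $x, y \in \mathbb{R}^K$, I construct the telescoping chain $x = z^{(0)}, z^{(1)}, \ldots, z^{(K)} = y$ by setting $z^{(k)} = z^{(k-1)} + (y_k - x_k)e_k$. Each consecutive pair differs by a translation along a single coordinate direction, so axis-wise invariance gives $T(z^{(k)}) = T(z^{(k-1)})$, and chaining these equalities yields $T(x) = T(y)$. Since $x$ and $y$ were arbitrary, $T$ is constant.

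The only mildly delicate step is the bookkeeping that translates between the cutoff-perturbation viewpoint used to define $S(T)$ and the score-perturbation viewpoint that makes the connectivity argument transparent; once that correspondence is established via \eqref{eq:transf_cutoff}, the remainder is a routine coordinate-by-coordinate chaining argument and poses no real obstacle.
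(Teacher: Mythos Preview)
Your proposal is correct and follows essentially the same approach as the paper: both arguments reduce $S(T)=\emptyset$ to axis-wise translation invariance of $T$ (you via the identity \eqref{eq:transf_cutoff}, the paper via the auxiliary sets $S(T\,|\,X)$ and the relation $S(T)=\bigcup_X S(T\,|\,X)$), and then run the identical coordinate-by-coordinate telescoping chain. The only cosmetic difference is that you argue directly while the paper argues by contrapositive.
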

From this point on, we assume that \(T\) is non-degenerate, i.e., \(\supp(T) \neq \{0\}\).
In the introductory AND-rule example,
with \(D \defeq I_1(c) \land I_2(c)\) and \(T \defeq I_1(c)\),
we obtain \(\supp(T) = \Rn \times \{ 0 \}\) and \(\supp(D) = \Rn^2\).
If a unit \(i\) complies with the assigned treatment \(T\) relative to the actual treatment \(D\),
then \(T_i\) should coincide with \(D_i\) even under hypothetical changes of the cutoff \(c_1\),
as illustrated in Figure \ref{fig:cutoff_shift_and}.
A more refined, local definition could weaken this requirement to  ``any reasonable changes'' (e.g. perturbations in a neighborhood of the cutoff),
but for clarity we adopt the global one.
In other words, both rules should agree on the support of \(T\), which in this case occurs if \(X_{2,i} > c_2\).
Thus, it is \(X_{2,i}\) and \(c_2\) that govern the behavior of unit \(i\).
To make this distinction explicit, it is useful to introduce notation
for entries \(X \in \Rn^K\) that do not affect \(T\): 
\[N^T \defeq \left\{ X \in \Rn^K \,\middle|\, T(X \,|\, c) = T(0\,|\,c) \, \mbox{for all} \, c \in \Rn^K \right\}\]
In particular, one can show the following.
\begin{restatable}{prop}{propdecomposition}\label{prop:decomposition}
For each \(X \in \Rn^K\), there exists a unique decomposition \(X = X^T + X^{\perp T}\)
with \(X^T \in \supp(T)\) and \(X^{\perp T} \in N^T\).
The decomposition given by the orthogonal projection \(P_T(X) \defeq \sum_{k \in S(T)} \sprodk{X}{e_k}e_k \) onto \(\supp(T)\)
satisfies the above properties.
\end{restatable}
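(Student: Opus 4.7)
The plan has three components: first verify that $N^T$ is a linear subspace, then show that the orthogonal projection $P_T$ yields the claimed decomposition, and finally establish uniqueness via $\supp(T) \cap N^T = \{0\}$.

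First, I would use the shift rule from (\ref{eq:transf_cutoff}) to show that $N^T$ is closed under addition and negation. For $v,w \in N^T$, the chain $T(v+w\,|\,c) = T(v\,|\,c-w) = T(0\,|\,c-w) = T(w\,|\,c) = T(0\,|\,c)$ gives $v+w \in N^T$, and substituting $c' = c-v$ in $T(v\,|\,c) = T(0\,|\,c)$ to deduce $T(0\,|\,c'+v) = T(0\,|\,c')$ yields $-v \in N^T$. Closure under subtraction is what will eventually reduce uniqueness to showing that $\supp(T)$ and $N^T$ intersect trivially.

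Next, to verify that $P_T$ produces a valid decomposition, note that $P_T(X) \in \supp(T)$ is immediate from the definition, while $X - P_T(X) = \sum_{k \notin S(T)} \sprodk{X}{e_k} e_k$ lies in the linear span of $\{e_k : k \notin S(T)\}$. I would show this span is contained in $N^T$: for any $v = \sum_{k \notin S(T)} \alpha_k e_k$, the shift rule gives $T(v\,|\,c) = T(0\,|\,c - v)$, and since $c$ and $c - v$ differ only in coordinates outside $S(T)$, iteratively applying the defining property of those coordinates (i.e., $T(0\,|\,c' + \lambda e_k) = T(0\,|\,c')$ whenever $k \notin S(T)$) collapses $T(0\,|\,c-v)$ to $T(0\,|\,c)$. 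Hence the orthogonal split satisfies both membership conditions.

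For uniqueness, if $X = u_1 + w_1 = u_2 + w_2$ with $u_i \in \supp(T)$ and $w_i \in N^T$, then $u_1 - u_2 = w_2 - w_1 \in \supp(T) \cap N^T$, so it suffices to show this intersection is $\{0\}$. I would argue by contradiction: suppose there is $v \in \supp(T) \cap N^T$ with $v_{k_0} \neq 0$ for some $k_0 \in S(T)$. Since $T$ is a cutoff rule, there is a Boolean function $g$ with $T(X\,|\,c) = g(\ind[X_j > c_j]:j)$, and $k_0 \in S(T)$ translates exactly to the existence of an assignment $b \in \{0,1\}^K$ such that $g(b)$ changes when the $k_0$-th component of $b$ is flipped. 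The main obstacle, and the crux of the proof, is to realize this Boolean witness via a cutoff $c$: I would choose $c_{k_0}$ strictly between $0$ and $v_{k_0}$ so that $\ind[v_{k_0} > c_{k_0}] \neq \ind[0 > c_{k_0}]$, and for each $j \neq k_0$ pick $c_j$ outside the interval bounded by $0$ and $v_j$ so that $\ind[v_j > c_j] = \ind[0 > c_j]$ can be tuned to the prescribed $b_j$. The resulting inputs to $g$ differ exactly in position $k_0$ and along the witnessing assignment $b$, yielding $T(v\,|\,c) \neq T(0\,|\,c)$ and contradicting $v \in N^T$. This closes the argument.
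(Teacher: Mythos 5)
Your proposal is correct, and its skeleton matches the paper's proof: both take the coordinate projection \(P_T(X)=\sum_{k\in S(T)}\sprodk{X}{e_k}e_k\), both verify \(X-P_T(X)\in N^T\) by shifting the cutoff one coordinate at a time and invoking the defining property of the coordinates \(k\notin S(T)\), both check that \(N^T\) is closed under sums and negation via the shift rule, and both reduce uniqueness to \(\supp(T)\cap N^T=\{0\}\). The substantive difference is how that trivial-intersection step is established. The paper argues that for \(k\in S(T)\) some multiple of \(e_k\) lies outside \(N^T\), hence \(e_k\notin N^T\) by linearity, and concludes from there; as written this explicitly excludes only the basis vectors \(e_k\), \(k\in S(T)\), leaving the extension to an arbitrary nonzero element of \(\supp(T)\cap N^T\) implicit (e.g.\ via first showing \(N^T=\mathrm{span}\{e_k : k\notin S(T)\}\)). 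You instead handle a general \(v\) with \(v_{k_0}\neq 0\), \(k_0\in S(T)\), directly: you translate \(k_0\in S(T)\) into a Boolean witness for \(g\) (an assignment whose value flips when coordinate \(k_0\) is flipped) and realize it by an explicit cutoff — \(c_{k_0}\) strictly between \(0\) and \(v_{k_0}\) so the \(k_0\)-th indicators of \(v\) and \(0\) disagree, and \(c_j\) below \(\min(0,v_j)\) or at least \(\max(0,v_j)\) so the \(j\)-th indicators agree and can be set to the prescribed value — forcing \(T(v\,|\,c)\neq T(0\,|\,c)\) and contradicting \(v\in N^T\). This is a bit longer but self-contained exactly where the paper is terse, and it covers arbitrary directions in \(\supp(T)\) rather than only coordinate ones; the paper's route is shorter but asks the reader to supply that last extension. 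The remaining steps (membership of the projection residual in \(N^T\), and uniqueness from the trivial intersection plus closure of \(N^T\) under differences) are essentially identical in the two arguments.
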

Thus, according to Proposition \ref{prop:decomposition}, the score space decomposes as
\[
  \Rn^{|S(T)|} \times \Rn^{K-|S(T)|}
  \simeq\footnote{Both linear spaces are isomorphic}
  \supp(T) \, \oplus N^T = \Rn^K.
\]
The first component captures the behavior of \(T\),
whereas the second component consists of free variables that do not affect \(T\),
but may influence the unit category.
This observation motivates the following general definition of the unit categories.
\begin{definition}\label{def:unit_categories}
  Let \(i\) be a unit and let \(X_i = X_i^T + X_i^{\perp T}\). 
  Then \(i\) is said to be:
  \begin{enumerate}
    \item a \hldef{nevertaker} (of \(T\) with respect to \(D\)) iff \(D(X_i^{\perp T} - c) = 0\)
    \item an \hldef{alwaystaker} (of \(T\) with respect to \(D\)) iff \(D(X_i^{\perp T} - c) = 1\)
    \item a \hldef{complier} (of \(T\) with respect to \(D\)) iff \(T(0\,|\, c) = D(X_i^{\perp T} - c)\)
    \item a \hldef{defier} (of \(T\) with respect to \(D\)) iff \(T(0\,|\, c) \neq D(X_i^{\perp T} - c)\)
  \end{enumerate}
  for all \(c \in \supp(T)\).
  Let the sets of alwaystakers, nevertakers, compliers and defiers be denoted by
  \(\alwayst(T, D)\), \(\nevert(T, D)\), \(\comp(T, D)\) and \(\defier(T, D)\).
\end{definition}
In the special case where \(D\) is a cutoff rule, these definitions admit the following intuitive equivalencies, 
which capture the notion of simultaneous cutoff changes along the relevant directions.
\begin{restatable}{prop}{proequivcutoffrules}\label{prop:equiv:cutoffrules}
  Let \(D\) be a cutoff rule on \(\Rn^K\).
  The unit \(i\) is:
  \begin{enumerate}
    \item a \hldef{nevertaker} (of \(T\) with respect to \(D\)) iff \(D_i(c) = 0\)
    \item an \hldef{alwaystaker} (of \(T\) with respect to \(D\)) iff \(D_i(c) = 1\)
    \item a \hldef{complier} (of \(T\) with respect to \(D\)) iff \(T_i(c) = D_i(c)\)
    \item a \hldef{defier} (of \(T\) with respect to \(D\)) iff \(D_i(c) \neq T_i(c)\)
  \end{enumerate}
  for all \(c \in \supp(T)\).
\end{restatable}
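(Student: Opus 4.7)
The plan is to deduce all four equivalences simultaneously by establishing two change-of-variable identities and then substituting them into Definition \ref{def:unit_categories}. The key structural observation is that $\supp(T)$ is a linear subspace of $\mathbb{R}^K$ and, by Proposition \ref{prop:decomposition}, $X_i^T \in \supp(T)$, so the affine map $c \mapsto c - X_i^T$ is a bijection $\supp(T) \to \supp(T)$. Consequently any statement of the form ``$\forall c \in \supp(T) : P(c)$'' is preserved under this reparameterization of the dummy variable.

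First I would rewrite $D_i(c)$ using the translation identity (\ref{eq:transf_cutoff}) as $D_i(c) = D(X_i \mid c) = D(X_i - c \mid 0)$, and decompose via Proposition \ref{prop:decomposition} to obtain $X_i - c = X_i^{\perp T} - (c - X_i^T)$. Setting $c' = c - X_i^T$ yields the identity $D_i(c) = D(X_i^{\perp T} - c' \mid 0)$. Running $c$ over $\supp(T)$ is the same as running $c'$ over $\supp(T)$, so ``$\forall c \in \supp(T): D_i(c) = v$'' is equivalent to ``$\forall c' \in \supp(T): D(X_i^{\perp T} - c') = v$''. Specializing to $v = 0$ and $v = 1$ gives the nevertaker and alwaystaker equivalences.

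Next I would establish the analogous identity for $T$. Combining the definition of $N^T$ with (\ref{eq:transf_cutoff}) yields the translation invariance $T(y + X_i^{\perp T} \mid 0) = T(y \mid 0)$ for every $y \in \mathbb{R}^K$ (obtained by writing the defining equation $T(X_i^{\perp T} \mid c) = T(0 \mid c)$ as $T(X_i^{\perp T} - c \mid 0) = T(-c \mid 0)$ and then substituting $y = -c$). Applying this with $y = X_i^T - c$ gives
\[
T_i(c) = T(X_i - c \mid 0) = T(X_i^T - c \mid 0) = T(0 \mid c - X_i^T) = T(0 \mid c').
\]
Together with the $D$-identity from the previous step, the condition $T_i(c) = D_i(c)$ (respectively $\neq$) becomes $T(0 \mid c') = D(X_i^{\perp T} - c')$ (respectively $\neq$), which, after relabeling $c'$ as $c$, is precisely the complier (resp.\ defier) condition in Definition \ref{def:unit_categories}.

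The only genuine obstacle, if one can call it that, is bookkeeping: one must verify carefully that $N^T$ is translation-invariant for $T$ at cutoff zero (not merely along sequences of the form $-c$), and that the shift $c \mapsto c - X_i^T$ actually permutes $\supp(T)$. Both points follow immediately from the linearity of $\supp(T)$, from $X_i^T \in \supp(T)$, and from the transformation identity (\ref{eq:transf_cutoff}); no further machinery is needed, and the non-degeneracy hypothesis $\supp(T) \neq \{0\}$ plays no role in the argument.
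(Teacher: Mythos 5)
Your proposal is correct and follows essentially the same route as the paper's own proof: both rewrite \(T_i(c)=T(0\,|\,c-X_i^T)\) and \(D_i(c)=D(X_i^{\perp T}-(c-X_i^T))\) via the transformation identity (\ref{eq:transf_cutoff}) and Proposition \ref{prop:decomposition}, and then use that \(\hat{c}=c-X_i^T\) ranges over \(\supp(T)\) as \(c\) does. Your version merely spells out the translation invariance along \(N^T\) and the bijectivity of the shift, which the paper leaves implicit.
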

Definition \ref{def:unit_categories} indeed introduces well-defined categories, which can be summarized as follows:
\begin{restatable}{prop}{propdisjoint}\label{prop:disjoint}
  The sets \(\alwayst(T,\, D),\, \nevert(T,\, D),\, \comp(T,\, D)\) and \(\defier(T,\, D)\)
  are  pairwise disjoint.
\end{restatable}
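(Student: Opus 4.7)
The plan is to prove disjointness by checking each of the six unordered pairs of categories and showing that simultaneous membership in both would contradict either a direct defining equation or the non-degeneracy assumption $\supp(T) \neq \{0\}$. Throughout I will exploit that each definition is a universal statement over $c \in \supp(T)$, so it suffices to produce a single $c \in \supp(T)$ at which the two definitions are incompatible.

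First I would dispose of the two easy pairs. For $\alwayst(T,D) \cap \nevert(T,D)$, pick any $c \in \supp(T)$ (which exists because $\supp(T) \neq \{0\}$); alwaystaker gives $D(X_i^{\perp T} - c)=1$ while nevertaker gives $D(X_i^{\perp T} - c)=0$, an immediate contradiction. For $\comp(T,D) \cap \defier(T,D)$, any $c \in \supp(T)$ simultaneously forces $T(0\mid c)=D(X_i^{\perp T}-c)$ and $T(0\mid c)\neq D(X_i^{\perp T}-c)$, again a contradiction. So these two cases only use non-emptiness of $\supp(T)$.

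The remaining four pairs (alwaystaker vs.\ complier, alwaystaker vs.\ defier, nevertaker vs.\ complier, nevertaker vs.\ defier) all collapse to the same structural issue: each forces $T(0\mid c)$ to be a prescribed constant value as $c$ ranges over $\supp(T)$. For instance, being simultaneously an alwaystaker and a complier forces $T(0\mid c) = D(X_i^{\perp T}-c) = 1$ for every $c \in \supp(T)$; an alwaystaker-defier forces $T(0\mid c)=0$ throughout; and the two nevertaker combinations force the symmetric conclusions. Hence each of these four cases reduces to the auxiliary claim: \emph{if $T$ is non-degenerate, then $c \mapsto T(0\mid c)$ takes both values $0$ and $1$ on $\supp(T)$}. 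Once this claim is established, every one of the four pairs yields an immediate contradiction.

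The main obstacle is proving this auxiliary claim, because the definition of $S(T)$ witnesses a sign flip using an arbitrary $c \in \Rn^K$, whereas we need the witnesses to lie in $\supp(T)$. I would argue as follows. By Lemma \ref{lem:constsupp}, non-degeneracy gives $S(T) \neq \emptyset$; fix $k \in S(T)$. Using the identity $T(X\mid c)=T(0\mid c-X)$ from \eqref{eq:transf_cutoff}, the value $T(0\mid c)$ depends only on the sign pattern $(\ind[c_j<0])_{j=1}^K$, and the components with $j \notin S(T)$ are irrelevant (any change in those coordinates leaves $T(0\mid \cdot)$ invariant, by the very definition of $N^T$ together with Proposition \ref{prop:decomposition}). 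Therefore I may take the witness pair from the definition of $k \in S(T)$ and zero out their $N^T$-components without changing the values of $T(0\mid \cdot)$, producing $c^{(0)},c^{(1)} \in \supp(T)$ with $T(0\mid c^{(0)}) \neq T(0\mid c^{(1)})$. With this, the four remaining pairs close and pairwise disjointness of all four categories follows.
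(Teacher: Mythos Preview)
Your proof is correct and follows essentially the same approach as the paper: dispose of the two trivial pairs directly, then show that membership in any of the remaining four mixed pairs would force $c \mapsto T(0\mid c)$ to be constant on $\supp(T)$, contradicting non-degeneracy. The only difference is cosmetic: the paper bundles the four mixed cases into two (via $C \defeq \nevert \cup \alwayst$) and simply asserts that constancy of $T(0\mid\cdot)$ on $\supp(T)$ ``contradicts $\supp(T)\neq\{0\}$,'' whereas you unpack that step explicitly by projecting the $S(T)$-witness pair onto $\supp(T)$.
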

In general, these categories are not exhaustive,
since \(D\) may not be constant nor equal to \(T\) or \(\overline{T}\) on the support of \(T\).
We call the remaining category 
\hldef{indecisive} and denote
the set of \hldef{indecisive units (of \(T\) with respect to \(D\))} by \(\indes(T,\, D)\).
A unit \(i\) is indecisive iff
\(c \mapsto D(X_i^{\perp T}-c)\) is not constant on \(\supp(T)\)
and
there exist \(c,\, \hat{c} \in \supp(T)\) such that
\(T(X_i \,|\, c) = D(X_i^{\perp T} - c) \)
and
\(T(X_i \,|\, \hat{c}) \neq D(X_i^{\perp T} - \hat{c})\).
Figure \ref{fig:cutoff_general} visualizes the general case in which \(D\) is not a cutoff rule.
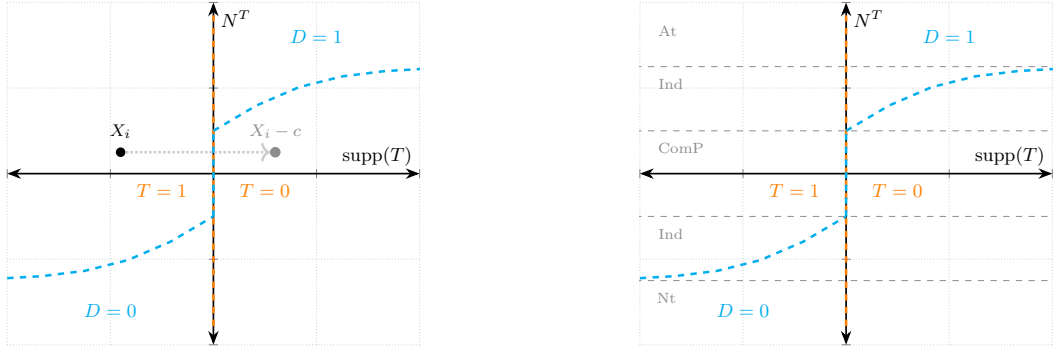
\begin{figure}[ht!]
  \centering
  \begin{subfigure}[ht]{0.5\linewidth}
    \begin{adjustbox}{minipage=\linewidth, scale=0.8}
      \begin{tikzpicture}
  \begin{axis}[
    xmax=2,ymax=2,xmin=-2,ymin=-2,
    axis lines=middle,
    axis line style={Stealth-Stealth,thick},
    xtick distance=1,
    ytick distance=1,
    yticklabel=\empty,
    xticklabel=\empty,
    grid=major,
    grid style={thin,densely dotted,black!20},
    ylabel=\small $N^T$,
    xlabel=\small $\supp(T)$,
    declare function = {
      border_d(\t) = (\t < 0) * (0.75 * tanh(\t) - 0.5) + (\t >= 0) * (0.75 * tanh(\t) + 0.5);
    }
  ]
    % T
    \draw [orange,dashed,very thick] plot coordinates {(0,1.85)(0,-1.85)};
    \draw [orange] (0.5,-0.2) node {\small $T=0$};
    \draw [orange] (-0.5,-0.2) node {\small $T=1$};

    % D
    \addplot[dashed,draw=cyan,mark=none,very thick,domain=-10:0]{0.75*tanh(\x)-0.5};
    \addplot[dashed,draw=cyan,mark=none,very thick,domain=0:10]{0.75*tanh(\x)+0.5};
    \draw[dashed,draw=cyan,mark=non,very thick] (0, -0.5) -- (0, 0.5);
    \draw [cyan] (-1,-1.60) node {\small $D=0$};
    \draw [cyan] (1,1.60) node {\small $D=1$};

    % cutoff change
    %\draw (-1, 0.45) node {\small $X_i$} -- (1, 0.45) node {\small $X_i - c$};
    \node at (-0.9, 0.25)[circle,fill,inner sep=1.6pt] (X) {};
    \node at (0.6, 0.25)[draw,circle,gray!90,inner sep=1.6pt,fill=gray!90] (Xc) {};
    \draw[solid,densely dotted,color=gray!40,very thick,->] (X) edge (Xc); 
    \node[above=0.02cm of X] (XT) {\footnotesize \(X_i\)};
    \node[above=0.02cm of Xc, color=gray!90] (XcT) {\footnotesize \(X_i-c\)};

    %\addplot [Latex-Latex,domain=-5:3,samples=2] {x*2/3} node[right]{$a$};

    % final policy
    %\draw [dashed,cyan] plot [smooth,tension=2] coordinates {(-1.0,1.5)(0.4,-0.1)(1.6,0.1)(2,2)};
    %\draw [cyan] (0.65,-0.5) node {$D=0$};
    %\draw [cyan] (-0.3,+0.7) node {$D=1$};

    %\draw [gray!50]  (0,0) -- (1,1) -- (3,1) -- (1,0)  -- (2,-1) -- cycle;
    %\draw [red] plot [smooth cycle] coordinates {(0,0) (1,1) (3,1) (1,0) (2,-1)};
    %
    %\draw [gray!50, xshift=4cm]  (0,0) -- (1,1) -- (2,-2) -- (3,0);
    %\draw [cyan, xshift=4cm] plot [smooth, tension=2] coordinates { (0,0) (1,1) (2,-2) (3,0)};
  \end{axis}
  % \tkzGrid
  % \tkzAxeXY
  % \draw[ thick,latex-latex] (-1,4) -- (4,-6) node[anchor=south west] {$a$}; % two points for drawing 2x+y=2
  %\tkzText[above](0,6.75){Desired Output}
\end{tikzpicture}
    \end{adjustbox}
    \centering
    \caption{
      No matter which change \(c\) of \(X_i\) in the \(\supp(T)\) plane we imagine,
      the responses of \(D\) and \(T\) are equal. 
      A shift of score \(X_i\) by \(c\) is equivalent to a simultaneous shift 
      of the decision boundaries by \(c\), which amounts to a coordinate shift for \(D\) and a 
      cutoff shift for \(T\) motivating Definition \ref{def:unit_categories}.
    }
  \end{subfigure}%
  ~
  \begin{subfigure}[ht]{0.5\linewidth}
    \begin{adjustbox}{minipage=\linewidth, scale=0.8}
      \begin{tikzpicture}
  \begin{axis}[
    xmax=2,ymax=2,xmin=-2,ymin=-2,
    axis lines=middle,
    axis line style={Stealth-Stealth,thick},
    xtick distance=1,
    ytick distance=1,
    yticklabel=\empty,
    xticklabel=\empty,
    grid=major,
    grid style={thin,densely dotted,black!20},
    ylabel=\small $N^T$,
    xlabel=\small $\supp(T)$,
    declare function = {
      border_d(\t) = (\t < 0) * (0.75 * tanh(\t) - 0.5) + (\t >= 0) * (0.75 * tanh(\t) + 0.5);
    }
  ]
    % T
    \draw [orange,dashed,very thick] plot coordinates {(0,1.85)(0,-1.85)};
    \draw [orange] (0.5,-0.2) node {\small $T=0$};
    \draw [orange] (-0.5,-0.2) node {\small $T=1$};

    % D
    \addplot[dashed,draw=cyan,mark=none,very thick,domain=-10:0]{0.75*tanh(\x)-0.5};
    \addplot[dashed,draw=cyan,mark=none,very thick,domain=0:10]{0.75*tanh(\x)+0.5};
    \draw[dashed,draw=cyan,mark=non,very thick] (0, -0.5) -- (0, 0.5);
    \draw [cyan] (-1,-1.60) node {\small $D=0$};
    \draw [cyan] (1,1.60) node {\small $D=1$};

    % categorization
    \addplot[dashed,draw=gray!90,mark=none]{0.5};
    \addplot[dashed,draw=gray!90,mark=none]{-0.5};
    \addplot[dashed,draw=gray!90,mark=none]{1.25};
    \addplot[dashed,draw=gray!90,mark=none]{-1.25};

    \draw [text=gray!90] (-1.73,1.67) node {\scriptsize $\alwayst$};
    \draw [text=gray!90] (-1.7,1.05) node {\scriptsize $\indes$};
    \draw [text=gray!90] (-1.6,0.3) node {\scriptsize $\comp$};
    \draw [text=gray!90] (-1.7,-0.7) node {\scriptsize $\indes$};
    \draw [text=gray!90] (-1.74,-1.45) node {\scriptsize $\nevert$};

    %\addplot [Latex-Latex,domain=-5:3,samples=2] {x*2/3} node[right]{$a$};

    % final policy
    %\draw [dashed,cyan] plot [smooth,tension=2] coordinates {(-1.0,1.5)(0.4,-0.1)(1.6,0.1)(2,2)};
    %\draw [cyan] (0.65,-0.5) node {$D=0$};
    %\draw [cyan] (-0.3,+0.7) node {$D=1$};

    %\draw [gray!50]  (0,0) -- (1,1) -- (3,1) -- (1,0)  -- (2,-1) -- cycle;
    %\draw [red] plot [smooth cycle] coordinates {(0,0) (1,1) (3,1) (1,0) (2,-1)};
    %
    %\draw [gray!50, xshift=4cm]  (0,0) -- (1,1) -- (2,-2) -- (3,0);
    %\draw [cyan, xshift=4cm] plot [smooth, tension=2] coordinates { (0,0) (1,1) (2,-2) (3,0)};
  \end{axis}
  % \tkzGrid
  % \tkzAxeXY
  % \draw[ thick,latex-latex] (-1,4) -- (4,-6) node[anchor=south west] {$a$}; % two points for drawing 2x+y=2
  %\tkzText[above](0,6.75){Desired Output}
\end{tikzpicture}
    \end{adjustbox}
    \centering
    \caption{
      Unit categorizations with respect to \((T, \, D)\).
      While \(\supp(T)\) captures the directions in which \(T\) can change,
      unit behavior is determined by \(N^T\).
      The space is partitioned (from top to bottom) into areas
      of alwaystakers, indecisives, compliers, indecisives and nevertakers of \(T\) with respect to \(D\).
    }
  \end{subfigure}
  \caption{
    General case with \(D\) not being a cutoff rule.
    The decision boundaries of \(T\) and \(D\) are indicated with orange and blue dashed lines.
  }
  \label{fig:cutoff_general}
\end{figure}

From this point forward, whenever we assume that \(D\) is a cutoff rule,
we restrict attention to the cases in which \(D\) contains at least as much information
for decision-making as \(T\).
This means that \(D\) might depend on \(I_{k,i}(0)\) for \(k \in S(T)\). 
In addition, we suppose that \(D\) does not depend on \(I_{k,i}(c)\) with \(c \neq 0\)
for \(k \in \supp(T)\), effectively restricting attention to rules with zero cutoff.
This restriction narrows the general case.
For example, suppose the decision-maker \(D\) behaves opportunistically, ignoring all but one
score \(X_k\), only complying with \(T\) when \(X_k\) exceeds some higher cutoff \(0 < c_k < X_k\).
%\todo[inline]{Is this an interesting edge case? Then I want it named: the Schwarzian-Opportunist.}
% \todo[inline]{
%   It is essential that the score variables are independent?
%   \(T = I_1\) with cutoff at \(0\), \(D = I_1 \land I_2\) with cutoff at \(c\)
%   Transform \(D\) to \(G = I_1 \land I_2 \land I_2 \land I_4\)
%   with \(X_3 = X_1 - c_1\), \(X_4 = X_2 - c_2\).
%   Thus, one would have \(\comp(T, G) = \Rn \times \Rn^3_{> 0}\).
%   And without further knowledge get the wrong areas.
%   (and \(\nevert(T, G) = \{i \,|\, X_{2,i} - c_2 \leq 0 \lor X_{1,i} - c_2 \leq 0\}\))
% }
% 

Even for cutoff rules \(D\) under these restrictions, indecisive units are possible.
For example, let \(D \defeq (\overline{I}_1 \land I_2)\) and \(T \defeq I_1 \land I_2\).
Then \(\supp(T) = \Rn^2 = \supp(D)\) and for \(T(0 \,|\, 0) = D(0 \,|\, 0)\) but \(0 = T(0 \,| (-1, 1)) \neq D(0 \,|\, (-1, 1)) = 1\).
Thus, \(D\) is not constant and differs from \(T\) and \(\overline{T}\) on the support of \(T\).
Given \(T\) with \(\dim(\supp(T)) > 1\), one can always construct a cutoff rule \(D\) that produces indecisive items.
At least one can prove the following:
\begin{restatable}{prop}{propexhaustive}\label{prop:exhaustive}
  Let \(i\) denote an individual.
  Then \(i \in \alwayst(T, D) \cup \nevert(T, D) \cup \defier(T, D) \cup \comp(T, D)\) for all cutoff rules \(D\)
  if and only if \(\dim(\supp(T)) = 1\).
\end{restatable}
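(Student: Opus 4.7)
I handle the two implications separately.

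For the ``if'' direction, suppose $\dim(\supp(T)) = 1$, so $S(T) = \{k\}$ and $\supp(T) = \Rn e_k$. For any $c \in \supp(T)$ only $c_k$ is nonzero, and by Proposition~\ref{prop:decomposition} $(X_i^{\perp T})_k = 0$. Since $D$ is a cutoff rule with cutoff zero at direction $k$, the map $c \mapsto D(X_i^{\perp T} - c)$ restricted to $\supp(T)$ depends only on the single Boolean variable $\ind[c_k < 0]$---all remaining indicator inputs of $D$ are fixed by $X_i$'s coordinates outside $S(T)$. Any Boolean function of one variable is either a constant or is the identity or its negation; combined with a parallel analysis of the non-constant map $c \mapsto T(0\,|\,c)$, a four-way case distinction then places $i$ into exactly one of $\nevert(T,D)$, $\alwayst(T,D)$, $\comp(T,D)$, or $\defier(T,D)$.

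For the ``only if'' direction, assume $\dim(\supp(T)) \ge 2$, fix distinct $k, l \in S(T)$, and set $D(X) \defeq \ind[X_k > 0]$. I will show that every individual $i$ is indecisive with respect to $(T, D)$. First, $D(X_i^{\perp T} - c) = \ind[c_k < 0]$ for $c \in \supp(T)$, which is not constant. Second, using the shift identity \eqref{eq:transf_cutoff} together with the fact that $T$ does not depend on $X_i^{\perp T}$, one obtains $T(X_i \,|\, c) = \tilde g\bigl((\ind[c_j < (X_i^T)_j])_{j \in S(T)}\bigr)$ for the Boolean function $\tilde g$ that encodes $T$ on its support coordinates. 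Because $k, l \in S(T)$, $\tilde g$ is sensitive to both its $k$-th and $l$-th inputs, and consequently $\tilde g$ can equal neither the projection $b \mapsto b_k$ nor its complement.

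Hence there exist indicator patterns $b, \hat b \in \{0,1\}^{|S(T)|}$ with $\tilde g(b) = b_k$ and $\tilde g(\hat b) \neq \hat b_k$. The last step is to realize each pattern by an actual $c \in \supp(T)$ that satisfies simultaneously $\ind[c_j < (X_i^T)_j] = b_j$ for every $j \in S(T)$ and $\ind[c_k < 0] = b_k$: on the $k$-th coordinate, take $c_k < \min\{0, (X_i^T)_k\}$ if $b_k = 1$ and $c_k \ge \max\{0, (X_i^T)_k\}$ if $b_k = 0$, forcing both threshold comparisons to coincide with $b_k$; the other coordinates in $S(T)$ are free. The resulting $c$ and $\hat c$ witness $T(X_i|c) = D(X_i^{\perp T} - c)$ and $T(X_i|\hat c) \neq D(X_i^{\perp T} - \hat c)$, placing $i$ in the indecisive category. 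The main obstacle is precisely this simultaneous realization on the $k$-th coordinate; choosing $D$ to depend only on a single support direction is what makes the ``extreme-$c_k$'' trick go through cleanly for an arbitrary $X_i^T$.
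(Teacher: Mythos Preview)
Your argument is correct in both directions. The ``if'' direction matches the paper's approach: both reduce the question to the observation that, when $\supp(T)=\Rn e_k$, the maps $c\mapsto T(0\,|\,c)$ and $c\mapsto D(X_i^{\perp T}-c)$ on $\supp(T)$ are Boolean functions of the single bit $\ind[c_k<0]$, and a non-constant Boolean function of one bit is either the identity or its negation.

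The ``only if'' direction differs genuinely. The paper builds a comparatively elaborate rule
\[
D=(T\land I_k\land I_l)\lor(\overline{T}\land\overline{I_k}\land I_l)\lor(I_k\land\overline{I_l})
\]
and then verifies indecisiveness only for individuals with $X_{k,i}=X_{l,i}=0$. You instead take the minimal choice $D=I_k$ with $k\in S(T)$ and show that \emph{every} individual is indecisive: since $l\in S(T)$ with $l\neq k$, the Boolean function $\tilde g$ underlying $T$ is sensitive to its $l$-th input and hence can equal neither $b\mapsto b_k$ nor $b\mapsto\overline{b_k}$, which immediately yields patterns $b,\hat b$ witnessing agreement and disagreement. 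Your ``extreme-$c_k$'' trick then realises both patterns for an arbitrary $X_i^T$. This is cleaner and, read literally, the proposition asserts the equivalence for each fixed individual $i$; your construction delivers that directly, whereas the paper's proof as written exhibits only one indecisive $X_i$ and thus establishes only the weaker statement $\exists D\,\exists i:\,i\in\indes(T,D)$. The trade-off is that the paper's $D$ is tailored so that checking a single convenient $X_i$ is almost immediate, while your argument requires the extra realisation step on the $k$-th coordinate---but that step is short and buys uniformity over all individuals.
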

Moreover, our definitions of alwaystakers, nevertakers and compliers imply
the corresponding definitions in \cite{Imbens2008}.
For this, let \(c^+, c^- \in \supp(T)\) be two directions that induce change in \(T\), so that:
\[
  \lim_{\lambda \to 0} T(X_i \,|\, \lambda c^+) = 1
  \, \mbox{ and } \,
  \lim_{\lambda \to 0} T(X_i \,|\, \lambda c^-) = 0
\]
Using the above complier definition
\[
  T(X_i \,|\, c)
  = T(X_i^T \,|\, c) = T(0 \,|\, c - X_i^T)
  = D(X_i^{\perp T} + X_i^T - c) = D(X_i - c)
\]
for \(c \in \supp(T)\). 
In particular, \(T\) and \(D\) coincide in a neighborhood of zero,
and thus
\[ 
  \lim_{\lambda \to 0} D_i(\lambda c^+) = \lim_{\lambda \to 0} D(X_i - \lambda c^+) = 1
  \, \mbox{ and } \,
  \lim_{\lambda \to 0} D_i(\lambda c^-) = \lim_{\lambda \to 0} D(X_i - \lambda c^-) = 0
\]
where \(D_i(c) \defeq D(X_i - c)\) for \(c \in \supp(T)\).
The corresponding statements for nevertakers and alwaystakers follow analogously.
By requiring consistency in the limit for any direction, one arrives at a local
 definition of unit categories that is sufficient for the multi-score RDD setting.
Before turning to the identification results, we expand on the AND-rule example.

\paragraph{Example (AND-Rules)}\label{ex:and_rule}
Let \(D \defeq \bigwedge_{j=1}^K I_{j}\) and \(T \defeq \bigwedge_{j=1}^k I_{j} \) for \(k \in \Nn_{< K}\).
Then \(\supp(T) = \Rn^k \times \{0\}^{K-k}\), 
and the potential outcomes reduce to 
\[Y_i  = Y_i(0, 0)(1-T_i) + Y_i(1, 0)(1-D_i)T_i + Y(1, 1)D_i\],
dropping one of the mixed terms. Additionally, the unit categorizations are as follows:
\begin{enumerate}
  \item \(\comp(T,\, D) = \{i \,|\, X_i \in \Rn^k \times \Rn^{K-k}_{>0} \}\)

    Note that \(X_i \in \Rn^k \times \Rn^{K-k}_{>0}\) is equivalent to \(\land_{j=k+1}^K I_{j,i}(c)\) being one
    for \(c \in \supp(T)\). Thus, the rule \(D_i\) effectively reduces to \(T_i\).

  \item \(\alwayst(T,\, D) = \emptyset\)

    Since for each \(X_i \in \Rn^K\) there exists a \(c \in \supp(T)\) such that \(D_i(c) = 0\).
    For example choose any \(c\) with \(X_{1,i} \leq c_1\) and \(c_j = 0\) for \(j \neq 1\).

  \item \(\nevert(T,\, D) = \{i \,|\, \exists  k < j \leq K: \, X_{j,i} \leq 0 \} = \{i \,|\, X_i \in \Rn^k \times (\Rn^{K-k} \setminus \Rn^{K-k}_{>0}) \} \)

    A unit \(i\) being in the set on the right side is equivalent to \(\land_{j=k+1}^K I_{j,i}(c)\) being zero
    for all \(c \in \supp(T)\). This is equivalent to \(D_i(c) = 0\) for all \(c \in \supp(T)\),
    since for this unit one can always find a cutoff \(c \in \supp(T)\) such that \(T_i(c) = 1\), 
    which implies that one of the indicators \(I_{j,i}\) for \(j > k\) has to be zero.

  \item \(\defier(T, \, D) = \emptyset\)

    Note that \(T_i(c) = 0\) implies \(D_i(c) = 0\) for \(c \in \supp(T)\).
    Now suppose that \(T_i(c) = 1\) and \(D_i(c) = 0\) for some \(c \in \supp(T)\).
    Thus, \(I_{j,i}(c) = 0\) for some \(j > k\).
    Since \(I_{j,i}\) independent of \(c \in \supp(T)\), \(D_i = 0\) on \(\supp(T)\).
\end{enumerate}
Further examples can be found in Appendix \ref{appendix:examples},
which introduces additional instances of the remaining unit categories (excluding the indecisive case).
We conclude that the definitions introduced above are reasonable.
\subsection{Effect Identification}
Inspired by the work of \cite{Hahn2001} and \cite{Imbens1994},
we use the unit categories introduced above to prove an identification theorem for the local complier effect at the cutoff.
Throughout this section, we require that the outcome \(Y\) does not directly depend on the treatment assignment \(T\).
Let the set of all unit categories be
\[
  \mathcal{C} \defeq \{
    \comp(T,\,D),\, \nevert(T,\,D),\, \alwayst(T,\,D),\, \defier(T,\,D),\, \indes(T,\,D)
  \}
\]
and define the set of non-change categories as
\[\mathcal{C}^0 \defeq \{\nevert(T,\,D), \, \alwayst(T,\,D) \}.\]
We assume that the categorization of a unit is independent of the support part of \(T\)
in a neighborhood of the cutoff, that is:
\begin{restatable}{assumption}{assumptionindependence}\label{assumption:independence}
  There exists an \(\epsilon > 0\) such that
  \(
    \Pr\left(i \in \cat \,\middle|\, X_i^T = x\right) =
    \Pr\left(i \in \cat \,\middle|\, X_i^T=0\right)
  \)
  for \(\|x\| \leq \epsilon\) and \(\cat \in \mathcal{C}\).
\end{restatable}
This assumption relates to the independence assumptions in \cite{Imbens1994}.
Using Assumption \ref{assumption:independence} and further assuming 
\(\Pr\left(i \in \cat \,\middle|\, X^T_i = 0\right) > 0\)
we obtain
\[
  \mathbb{E}\left(Y_i \,\middle|\, X_i^T = x\right) =
  \sum_{\cat \in \mathcal{C}} \mathbb{E}\left(Y_i \,\middle|\, X_i^T = x, \, i \in \cat\right)
  \Pr\left(i \in \cat \,\middle|\, X_i^T = 0 \right) 
\]
and thus
\begin{align*}
  \mathbb{E}\left(Y_i \,\middle|\, X_i^T = x^+\right) - \mathbb{E}\left(Y_i \,\middle|\, X_i^T  = x^-\right) &=\\
  \sum_{\cat \in \mathcal{C}}\bigg(\mathbb{E}\left(Y_i\,\middle| X_i^T = x^+,\, i \in \cat \right) & - \mathbb{E}\left(Y_i\,\middle| X_i^T = x^-,\, i \in \cat\right) \bigg)
  \Pr\left(i \in \cat \,\middle|\, X_i^T = 0\right)
\end{align*}
We also rely on the following local continuity assumption, which adapts the 
standard continuity condition (Assumption \ref{assumption:std_rdd_continuity}) to the unit categories:
\begin{restatable}{assumption}{assumptionconti}\label{assumption:conti}
  There exists an \(\epsilon > 0\) such that
  \(x \mapsto \mathbb{E}(Y_i(d) \,|\, X_i^T = x, \, i \in \cat)\)
  is continuous for \(\|x\| \leq \epsilon\), \(d \in \{0, 1\}\) and \(\cat \in \mathcal{C}\).
\end{restatable}
This assumption can be weakened by requiring only directional continuity,
in which case the effect would depend on the chosen directions.
Note that
\[ 
  E\left(Y_i \,\middle|\, X_i^T = x^{\pm},\, i \in \alwayst\right) = 
  E\left(Y_i(1) \,\middle|\, X_i^T = x^{\pm}, \, i \in \alwayst\right)
\]
and
\[ 
  E\left(Y_i \,\middle|\, X_i^T = x^{\pm},\, i \in \nevert\right) =
  E\left(Y_i(0) \,\middle|\, X_i^T = x^{\pm}, \, i \in \nevert\right)
\]
holds for the non-change unit categories.
Together with Assumption \ref{assumption:conti} this yields:
\begin{equation}\label{eq:identification:limit}
\begin{aligned}
  \lim_{\lambda \to 0} \mathbb{E}\left(Y_i \,\middle|\, X_i^T = \lambda x^+\right) &
  - \lim_{\lambda \to 0} \mathbb{E}\left(Y_i \,\middle|\, X_i^T  = \lambda x^-\right) =\\
  \sum_{\cat \in \mathcal{C}\setminus \mathcal{C}^0}\bigg(
    \lim_{\lambda \to 0} \mathbb{E}\left(Y_i\,\middle| X_i^T = \lambda x^+,\, i \in \cat \right) & -
    \lim_{\lambda \to 0} \mathbb{E}\left(Y_i\,\middle| X_i^T = \lambda x^-,\, i \in \cat\right)
  \bigg) \cdot \\
  & \cdot \Pr\left(i \in \cat \,\middle|\, X_i^T = 0\right)
\end{aligned}
\end{equation}
Two further assumptions are needed to make use of the continuity condition for the remaining categories.
First, we rule out the existence of indecisive units,
since this category does not allow structured conclusions about \(D\) based on knowledge of \(T\).
In other words, this category does not allow separating the potential outcomes \(Y_i(0)\) and \(Y_i(1)\).
\begin{restatable}{assumption}{assumptionnoindecisives}\label{assumption:no_indecisives}
  \(\indes(T, \, D) = \emptyset\)
\end{restatable}
Second, we assume that the directions \(x^+, \, x^- \in \supp(T)\) 
along which we estimate the complier effect induce a change in \(T\).
\begin{restatable}{assumption}{assumptionchangedirection}\label{assumption:change_direction}
  \(
    1 = \lim_{\lambda \to 0} T\left(\lambda x^+ \,\middle|\, 0\right) 
    \neq  
    \lim_{\lambda \to 0} T\left(\lambda x^- \,\middle|\, 0\right) = 0
  \)
\end{restatable}
This assumption is implicit in one-dimensional RDD designs
and imposes no substantive restriction in practice, as \(T\) is typically known.
With this in place, we know how \(D_i\) behaves for compliers and defiers when approaching
from the \(x^+\) and \(x^-\) directions. That is:
\[
  \lim_{\lambda \to 0}  \Pr\left(D_i=1 \,|\, X_i^T = \lambda x^+,\, i \in \comp\right) = 1
  \,\mbox{ and }\,
  \lim_{\lambda \to 0}  \Pr\left(D_i=1 \,|\, X_i^T = \lambda x^-,\, i \in \comp\right) = 0,
\]
as well as
\[
  \lim_{\lambda \to 0}  \Pr\left(D_i=1 \,|\, X_i^T = \lambda x^+,\, i \in \defier\right) = 0
  \,\mbox{ and }\,
  \lim_{\lambda \to 0}  \Pr\left(D_i=1 \,|\, X_i^T = \lambda x^-,\, i \in \defier\right) = 1.
\]
Thus, we can apply Assumption \ref{assumption:conti}
to these two remaining categories on the right-hand side of Equation \ref{eq:identification:limit} as well:
\begin{theorem}\label{thm:identification}
Let Assumptions \ref{assumption:independence}, \ref{assumption:conti}, 
\ref{assumption:no_indecisives} and \ref{assumption:change_direction} hold.
Then the complier effect at the cutoff is identified as
\begin{equation}\label{eq:identification_main}
  \begin{aligned}
    \mathbb{E}\left(Y_i(1) \,\middle|\, X_i^T = 0,\, i \in \comp\right) 
    &- \mathbb{E}\left(Y_i(0) \,\middle|\, X_i^T = 0,\, i \in \comp\right) = \\
    % complier part
    \frac{1}{\Pr\left(i \in \comp \,\middle|\, X_i^T = 0\right)}
    &\left(
     \lim_{\lambda \to 0} \mathbb{E}\left(Y_i \,\middle|\, X_i^T = \lambda x^+ \right)
    - \lim_{\lambda \to 0} \mathbb{E}\left(Y_i \,\middle|\, X_i^T = \lambda x^- \right) 
    \right) - C
    % defier correction
  \end{aligned}
\end{equation}
with \(C\) being the correction term for defiers:
\[
  C \defeq
  \frac{
    \Pr\left(i \in \defier \,\middle|\, X_i^T = 0\right)
  }{
    \Pr\left(i \in \comp \,\middle|\, X_i^T = 0\right)
  }
  \bigg(
   \mathbb{E}\left(Y_i(0) \,\middle|\, X_i^T = 0, \, i \in \defier \right)
  - \mathbb{E}\left(Y_i(1) \,\middle|\, X_i^T = 0, \, i \in \defier \right) 
  \bigg).
\]
\end{theorem}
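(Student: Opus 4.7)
The plan is to depart from Equation \ref{eq:identification:limit}, which already expresses the jump in $\mathbb{E}(Y_i \mid X_i^T = \lambda x^{\pm})$ as $\lambda \to 0$ as a weighted sum over the non-constant unit categories in $\mathcal{C} \setminus \mathcal{C}^0$. Assumption \ref{assumption:no_indecisives} immediately eliminates the indecisive term, so only the complier and defier contributions remain, each weighted by $\Pr(i \in \cat \mid X_i^T = 0)$.

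The next step is to evaluate, for each of the two remaining categories and each direction $x^\pm$, the limiting conditional expectation of $Y_i$. Because $Y$ does not directly depend on $T$, we have $Y_i = (1-D_i) Y_i(0) + D_i Y_i(1)$. Using Definition \ref{def:unit_categories} together with Assumption \ref{assumption:change_direction}, on each category $D_i$ is deterministic in a neighborhood of $X_i^T = 0$: on compliers it tracks $T$ (so $D_i \to 1$ along $x^+$ and $D_i \to 0$ along $x^-$), while on defiers the two directions are reversed. Substituting the appropriate potential outcome into $Y_i$ and then invoking the directional continuity in Assumption \ref{assumption:conti}, the complier block of the sum becomes
\[
 \bigl(\mathbb{E}(Y_i(1) \mid X_i^T = 0, i \in \comp) - \mathbb{E}(Y_i(0) \mid X_i^T = 0, i \in \comp)\bigr) \Pr(i \in \comp \mid X_i^T = 0),
\]
and the defier block has the same structure but with the two potential outcomes swapped, hence with the opposite sign. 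Solving the resulting identity for the complier effect and dividing by $\Pr(i \in \comp \mid X_i^T = 0)$ rearranges the defier part into the correction $C$ and produces Equation \ref{eq:identification_main}.

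The step I expect to require the most care is the passage from $\lim_{\lambda \to 0} \mathbb{E}(Y_i \mid X_i^T = \lambda x^{\pm}, i \in \cat)$ to $\mathbb{E}(Y_i(d) \mid X_i^T = 0, i \in \cat)$ for $\cat \in \{\comp(T,D), \defier(T,D)\}$. This exchange of limit and conditioning hinges on the fact that, restricted to the category and for $\lambda$ sufficiently small, $D_i$ is not random but pinned down by the category's definition on $\supp(T)$ (as already exploited in the derivation preceding Proposition \ref{prop:exhaustive}), so the mixture $(1-D_i) Y_i(0) + D_i Y_i(1)$ collapses to the correct potential outcome before continuity is applied. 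Tracking the sign convention consistently across $x^+$ versus $x^-$ and across compliers versus defiers is the main bookkeeping hazard; once it is done correctly, the defier contribution falls out exactly as the stated $C$.
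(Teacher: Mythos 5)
Your proposal is correct and follows essentially the same route as the paper: the paper also derives the theorem by starting from Equation \ref{eq:identification:limit}, dropping the indecisive term via Assumption \ref{assumption:no_indecisives}, using Assumption \ref{assumption:change_direction} to pin down the limiting behavior of \(D_i\) on compliers and defiers along \(x^{\pm}\), and then applying the continuity in Assumption \ref{assumption:conti} before rearranging the defier contribution into the correction term \(C\). Your sign bookkeeping for the defier block matches the paper's definition of \(C\), so nothing further is needed.
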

This identification result has two immediate implications.
First, one is free to choose among the directions \(x^+\) and \(x^-\)
satisfying Assumption \ref{assumption:change_direction}.
Second, the proof suggests that dropping subsets \(\Omega \subset \nevert \cup \alwayst\) 
does not affect identification, as long as doing so does not violate
Assumptions \ref{assumption:conti} and \ref{assumption:independence}.

%
% subset complier effect
%
Following this idea, we now investigate how excluding units in \(\alwayst \cup \nevert\) affects the above identification result.
For ease of presentation, we assume there are no defiers at the cutoff.
As a consequence, the correction term \(C\) in Equation \ref{eq:identification_main} equals zero.
Now let \(\Omega \subset \alwayst \cup \nevert\).
Then \(\Pr\left(i \in \comp, \, i \in \Omega \,\middle|\, X_i^T = 0\right) = 0\) and thus one has
\begin{align}\label{eq:non_change_drop:denominator}
  \Pr\left(i \in \comp \,\middle|\, X_i^T = 0\right)
  = \Pr\left(i \in \comp \,\middle|\, X_i^T = 0, \, i \notin \Omega\right)  \Pr\left(i \notin \Omega \,\middle|\, X_i^T = 0\right)
\end{align}
for the denominator in Equation \ref{eq:identification_main}.
For the numerator, we have:
\begin{align*}
  \mathbb{E}\left(Y_i \,\middle|\, X_i^T = \lambda x^{\pm}, \, i \in \Omega\right)
  &=  \mathbb{E}\left(Y_i(0) \,\middle|\, X_i^T = \lambda x^{\pm}, \, i \in \Omega \cap \nevert \right) \Pr\left(i \in \nevert \,\middle|\, X_i^T = \lambda x^{\pm}, \, i \in \Omega\right)\\
  &+  \mathbb{E}\left(Y_i(1) \,\middle|\, X_i^T = \lambda x^{\pm}, \, i \in \Omega \cap \alwayst \right) \Pr\left(i \in \alwayst \,\middle|\, X_i^T = \lambda x^{\pm}, \, i \in \Omega\right)
\end{align*}
Requiring Assumptions \ref{assumption:conti} and \ref{assumption:independence} to hold when conditioning
on \(\Omega \cap \nevert\) and \(\Omega \cap \alwayst\) (instead of \(\nevert\) and \(\alwayst\)), we obtain:
\begin{align}\label{eq:non_change_drop:zero_effect}
  \lim_{\lambda \to 0} \mathbb{E}\left(Y_i \,\middle|\, X_i^T = \lambda x^+, \, i \in \Omega\right)
  &- \lim_{\lambda \to 0} \mathbb{E}\left(Y_i \,\middle|\, X_i^T = \lambda x^-, i \in \Omega\right) = 0.
\end{align}
Since 
\begin{align*}
  \mathbb{E}\left(Y_i \,\middle|\, X_i^T = \lambda x^{\pm}\right)
  &= \mathbb{E}\left(Y_i \,\middle|\, X_i^T = \lambda x^{\pm}, \, i \in \Omega\right) \Pr\left(i \in \Omega \,\middle|\, X_i^T = \lambda x^{\pm} \right)\\
  &+ \mathbb{E}\left(Y_i \,\middle|\, X_i^T = \lambda x^{\pm}, \, i \notin \Omega\right)\Pr\left(i \notin \Omega \,\middle|\, X_i^T = \lambda x^{\pm} \right)
\end{align*}
holds, we require an assumption similar to Assumption \ref{assumption:independence} for \(\Pr\left(i \in \Omega \,\middle|\, X_i^T = \lambda x^{\pm} \right)\), 
in order to obtain
\begin{equation}\label{eq:non_change_drop:nominator}
\begin{aligned}
  \lim_{\lambda \to 0} \mathbb{E}\left(Y_i \,\middle|\, X_i^T = \lambda x^+\right)
  &- \lim_{\lambda \to 0} \mathbb{E}\left(Y_i \,\middle|\, X_i^T = \lambda x^-\right) \\
  &=
  \left(
    \lim_{\lambda \to 0} \mathbb{E}\left(Y_i \,\middle|\, X_i^T = \lambda x^+, \, i \notin \Omega\right)
    - \lim_{\lambda \to 0} \mathbb{E}\left(Y_i \,\middle|\, X_i^T = \lambda x^-, \, i \notin \Omega\right)
  \right) \cdot \\
  & \cdot \Pr\left(i \notin \Omega \,\middle|\, X_i^T = 0 \right)
\end{aligned}
\end{equation}
using Equation \ref{eq:non_change_drop:zero_effect}.
Combining Equations \ref{eq:non_change_drop:nominator} and \ref{eq:non_change_drop:denominator}, we obtain the following result:
\begin{theorem}\label{thm:identification:none_change_drop}
  Let Assumptions \ref{assumption:independence}, \ref{assumption:conti}, 
  \ref{assumption:no_indecisives} and \ref{assumption:change_direction} hold,
  and suppose \(\Pr(i \in \defier \,|\, X_i^T = 0) = 0\).
  Further, let \(\Omega \subset \nevert \cup \alwayst\) such that:
  \begin{enumerate}
    \item there exists an \(\epsilon > 0\) such that the functions
      \[x \mapsto \mathbb{E}\left(Y_i(0) \,\middle|\, X_i^T = x, \, i \in \Omega \cap \nevert \right)\]
      and
      \[x \mapsto \mathbb{E}\left(Y_i(1) \,\middle|\, X_i^T = x, \, i \in \Omega \cap \alwayst \right)\]
      are continuous for \(\|x\| < \epsilon\).
    \item there exists an \(\epsilon > 0\) such that
      \[\Pr\left(i \in \nevert \,\middle|\, X_i^T = x, \, i \in \Omega\right) = \Pr\left(i \in \nevert \,\middle|\, X_i^T = 0, \, i \in \Omega\right) \]
      and
      \[\Pr\left(i \in \alwayst \,\middle|\, X_i^T = x, \, i \in \Omega\right) = \Pr\left(i \in \alwayst \,\middle|\, X_i^T = 0, \, i \in \Omega\right) \]
      as well as
      \[\Pr\left(i \in \Omega \,\middle|\, X_i^T = x\right) = \Pr\left(i \in \Omega \,\middle|\, X_i^T = 0 \right) \]
      for \(\|x\| < \epsilon\).
  \end{enumerate}
  Then
  \begin{align*}
    \mathbb{E}\left(Y_i(1) \,\middle|\, X_i^T = 0,\, i \in \comp\right) 
    &- \mathbb{E}\left(Y_i(0) \,\middle|\, X_i^T = 0,\, i \in \comp\right) = \\
    % complier part
    \frac{1}{\Pr\left(i \in \comp \,\middle|\, X_i^T = 0, \, i \notin \Omega\right)}
    &\left(
     \lim_{\lambda \to 0} \mathbb{E}\left(Y_i \,\middle|\, X_i^T = \lambda x^+, \, i \notin \Omega \right)
    - \lim_{\lambda \to 0} \mathbb{E}\left(Y_i \,\middle|\, X_i^T = \lambda x^-, i \notin \Omega \right) 
    \right)
  \end{align*}
  holds.
\end{theorem}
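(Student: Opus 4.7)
The plan is to build on Theorem \ref{thm:identification} by carefully excising the subset $\Omega$ from both the numerator and denominator of the identified complier effect, exploiting the fact that units in $\Omega$ contribute a zero jump to $\mathbb{E}[Y_i \mid X_i^T = \lambda x^{\pm}]$ at the cutoff. Since $\Pr(i \in \defier \mid X_i^T = 0) = 0$, the correction term $C$ in Theorem \ref{thm:identification} vanishes, so the starting expression for the complier effect is simply the limit jump in $\mathbb{E}[Y_i \mid X_i^T = \lambda x^{\pm}]$ divided by $\Pr(i \in \comp \mid X_i^T = 0)$.

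For the denominator, I would observe that $\Omega \subset \nevert \cup \alwayst$ is disjoint from $\comp$ by Proposition \ref{prop:disjoint}, hence $\Pr(i \in \comp,\, i \in \Omega \mid X_i^T = 0) = 0$. Applying the law of total probability conditional on the partition $\{i \in \Omega\} \cup \{i \notin \Omega\}$ reproduces Equation \ref{eq:non_change_drop:denominator} immediately.

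For the numerator, the key substep is the zero-jump claim (Equation \ref{eq:non_change_drop:zero_effect}) for units restricted to $\Omega$. Since the observed outcome reduces to $Y_i(0)$ on $\nevert$ and to $Y_i(1)$ on $\alwayst$, the conditional expectation $\mathbb{E}[Y_i \mid X_i^T = \lambda x^{\pm}, i \in \Omega]$ decomposes via the law of total expectation into two pieces, each a continuous function of $x$ by hypothesis (1) weighted by a probability that is locally constant by hypothesis (2). Taking $\lambda \to 0$ in both directions therefore yields identical values, establishing the zero-jump claim. Coupling this with the law of total expectation decomposition of $\mathbb{E}[Y_i \mid X_i^T = \lambda x^{\pm}]$ into the $\Omega$ and $\Omega^c$ parts, and using the local constancy of $\Pr(i \in \Omega \mid X_i^T = x)$ supplied by the third part of hypothesis (2), produces Equation \ref{eq:non_change_drop:nominator}. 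Dividing numerator by denominator then cancels the factor $\Pr(i \notin \Omega \mid X_i^T = 0)$ and yields the claimed formula.

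The main obstacle, and the place where the argument must be handled most carefully, is the zero-jump claim: one must marshal the three continuity/independence conditions at exactly the right level of conditioning. Hypothesis (1) supplies continuity of the potential-outcome conditional means on $\Omega \cap \nevert$ and $\Omega \cap \alwayst$ \emph{separately}, hypothesis (2) supplies both local constancy of the within-$\Omega$ mix of nevertakers and alwaystakers and local constancy of the overall mass $\Pr(i \in \Omega \mid X_i^T = x)$, and each of these is used in a distinct step. Conflating them, or applying Assumption \ref{assumption:conti} directly to $\Omega \cap \nevert$ and $\Omega \cap \alwayst$ rather than to the weakened hypotheses (1)--(2), would overshoot what the theorem actually requires.
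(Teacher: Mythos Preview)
Your proposal is correct and follows essentially the same route as the paper: start from Theorem~\ref{thm:identification} with $C=0$, rewrite the denominator via Equation~\ref{eq:non_change_drop:denominator} using disjointness of $\comp$ and $\Omega$, establish the zero-jump claim (Equation~\ref{eq:non_change_drop:zero_effect}) by decomposing the $\Omega$-conditional mean into its $\nevert$ and $\alwayst$ pieces and invoking hypotheses (1)--(2), then split the unconditional mean into $\Omega$ and $\Omega^c$ parts to obtain Equation~\ref{eq:non_change_drop:nominator} and cancel the common factor $\Pr(i\notin\Omega\mid X_i^T=0)$. Your closing remark about keeping the three conditioning levels straight is exactly the point of the theorem's tailored hypotheses and matches the paper's phrasing that Assumptions~\ref{assumption:conti} and~\ref{assumption:independence} must hold ``when conditioning on $\Omega\cap\nevert$ and $\Omega\cap\alwayst$''.
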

We refer to estimates of the complier effect of \(T\) obtained when removing \(\Omega\)
as the \hldef{subset complier effect of \(T\) (excluding \(\Omega\)}).
Theorem \ref{thm:identification:none_change_drop} provides
sufficient conditions when both effects (the complier effect as identified in Theorem \ref{thm:identification}
and the subset complier effect) are equal.
\subsection{Inheritance}
In this section, we investigate how the categorization of units changes when
the treatment assignment \(T\) is altered, e.g., by replacing it with a subrule \(G\).
The goal is to identify unit behavior without relying on knowledge of unobservable parts of the decision rules.
In particular, we relate unit behavior under complex rules to that under simpler subrules.
We begin by considering the effect of negation:
\begin{restatable}{prop}{propdualities}\label{prop:dualities}
  The following duality statements hold:
  \begin{enumerate}
    \item \(\alwayst(T,\, D) = \nevert(T,\, \overline{D})\)
    \item \(\alwayst(T,\, D) = \alwayst(\overline{T},\, D)\) and 
          \(\nevert(T,\, D) = \nevert(\overline{T},\, D)\)
    \item \(\indes(T,\, D) = \indes(\overline{T},\, D) = \indes(T,\, \overline{D})\)
    \item \(\comp(T,\, D) = \defier(\overline{T},\, D) = \defier(T,\, \overline{D})\)
  \end{enumerate}
\end{restatable}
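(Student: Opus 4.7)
The plan is to verify each of the four duality statements by directly unfolding Definition \ref{def:unit_categories}, reducing everything to Boolean identities evaluated pointwise on \(\supp(T)\). The only preliminary input needed is the invariance of the geometric decomposition under negation of the cutoff rule: I claim \(S(\overline{T}) = S(T)\), which gives \(\supp(\overline{T}) = \supp(T)\), \(N^{\overline{T}} = N^T\), and in particular \(X_i^{\overline{T}} = X_i^T\) and \(X_i^{\perp \overline{T}} = X_i^{\perp T}\). This step is immediate from the definition of \(S\), since \(\overline{T}(0 \mid c + \lambda e_k) \neq \overline{T}(0 \mid c)\) iff \(T(0 \mid c + \lambda e_k) \neq T(0 \mid c)\); Proposition \ref{prop:decomposition}, together with the explicit projection formula \(P_T(X) = \sum_{k \in S(T)} \sprodk{X}{e_k} e_k\), then makes the two orthogonal decompositions coincide.

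With that invariance in hand, each assertion is a short check. For (1), ``\(D(X_i^{\perp T} - c) = 1\) for all \(c \in \supp(T)\)'' is literally the same statement as ``\(\overline{D}(X_i^{\perp T} - c) = 0\) for all such \(c\)''. For (2), the defining conditions for alwaystakers and nevertakers never evaluate \(T\) at any point; they only reference \(\supp(T)\) and \(X_i^{\perp T}\), both of which are preserved by the invariance just shown. For (4), I would invoke the elementary Boolean identity that, for \(a, b \in \{0,1\}\), one has \(a = b\) iff \(\overline{a} \neq b\) iff \(a \neq \overline{b}\); applying this to the complier condition \(T(0 \mid c) = D(X_i^{\perp T} - c)\) for every \(c \in \supp(T)\) yields both defier characterisations in one stroke.

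Part (3) I would handle as a corollary of the preceding three. By Proposition \ref{prop:disjoint} the four sets \(\alwayst, \nevert, \comp, \defier\) are pairwise disjoint, and the paper defines \(\indes(T, D)\) as the ``remaining category''—that is, the complement (within the relevant population) of their union. Parts (1), (2) and (4) show that this union is permuted to itself under each of the two negations \(T \leftrightarrow \overline{T}\) and \(D \leftrightarrow \overline{D}\) (e.g.\ under \(D \leftrightarrow \overline{D}\), (1) swaps \(\alwayst\) with \(\nevert\) and (4) swaps \(\comp\) with \(\defier\)). Hence its complement, the indecisive set, is likewise invariant under both negations.

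The main obstacle throughout is bookkeeping rather than substance: one has to keep careful track of which ingredients stay fixed and which swap roles under each of the two negations, and verify that neither negation disturbs the support or the orthogonal residual. Once those invariances are in hand, no real computation remains—every equality reduces to a one-line Boolean identity applied uniformly over \(\supp(T)\).
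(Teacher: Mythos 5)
Your proposal is correct and follows essentially the same route as the paper: your preliminary invariance claim (\(S(\overline{T}) = S(T)\), hence \(\supp(\overline{T}) = \supp(T)\), \(N^{\overline{T}} = N^T\) and the equality of decompositions) is exactly the paper's Lemma~\ref{lem:supp_negate}, after which the paper, like you, reduces each duality to a direct check of Definition~\ref{def:unit_categories}. Your complement-plus-permutation argument for part (3) is a slightly more explicit write-up of what the paper dismisses as ``easily checked,'' but it is the same approach in substance.
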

Indecisive units remain unchanged under negation of either decision rule.
Non-change units are stable under negation of \(T\) but flip when \(D\) is negated. 
In contrast, compliers and defiers always flip.
Next, we investigate the stability of the non-change categories under sub-rules.
In general, one has the following bounds:
\begin{restatable}{prop}{propboundatnt}\label{prop:bound:atnt}
  Let \(G\) be a cutoff rule on \(\Rn^K\) with \(\supp(G) \subset \supp(T)\). Then:
  \begin{enumerate}
    \item \(\alwayst(T, \, D) \subset \alwayst(G, \, D)\)
    \item \(\nevert(T, \, D) \subset \nevert(G, \, D)\)
  \end{enumerate}
\end{restatable}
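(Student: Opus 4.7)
The plan is to prove both inclusions simultaneously by exhibiting, for any $c' \in \supp(G)$, a $\tilde c \in \supp(T)$ such that $X_i^{\perp G} - c' = X_i^{\perp T} - \tilde c$. The alwaystaker (resp.\ nevertaker) condition with respect to $T$ then forces $D(X_i^{\perp G} - c') = 1$ (resp.\ $0$), which is exactly the alwaystaker (resp.\ nevertaker) condition with respect to $G$.

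First I would record the ingredient that $\supp(G) \subset \supp(T)$ implies $S(G) \subset S(T)$, using that $\{e_k\}_{k=1}^K$ is an orthonormal basis: any $e_k \in \supp(G) \subset \supp(T)$ must lie in $\mathrm{span}\{e_j : j \in S(T)\}$, which forces $k \in S(T)$. In particular, writing the orthogonal projections from Proposition \ref{prop:decomposition} as $X_i^T = \sum_{k \in S(T)} \langle X_i, e_k \rangle e_k$ and $X_i^G = \sum_{k \in S(G)} \langle X_i, e_k \rangle e_k$, both lie in $\supp(T)$, and therefore so does $X_i^T - X_i^G$.

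Next I would do the key computation. Fix $c' \in \supp(G) \subset \supp(T)$ and set $\tilde c \defeq c' + X_i^G - X_i^T$; by the previous paragraph $\tilde c \in \supp(T)$. Using $X_i = X_i^T + X_i^{\perp T} = X_i^G + X_i^{\perp G}$, a direct substitution gives
\begin{equation*}
    X_i^{\perp G} - c' \;=\; X_i - X_i^G - c' \;=\; X_i - X_i^T - \tilde c \;=\; X_i^{\perp T} - \tilde c.
\end{equation*}
If $i \in \alwayst(T,D)$, then by Definition \ref{def:unit_categories} applied with the cutoff $\tilde c \in \supp(T)$ we obtain $D(X_i^{\perp T} - \tilde c) = 1$, hence $D(X_i^{\perp G} - c') = 1$. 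Since $c' \in \supp(G)$ was arbitrary, $i \in \alwayst(G,D)$. The nevertaker case is identical with $1$ replaced by $0$.

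The main obstacle, such as it is, is purely bookkeeping: keeping the two projections $X_i^{\perp T}$ and $X_i^{\perp G}$ (which differ exactly by the $\supp(T) \cap N^G$ part of $X_i$) conceptually distinct and ensuring that the shift $\tilde c$ really lands in $\supp(T)$ rather than just in $\mathbb{R}^K$. Once that identification is made, the inclusion is immediate because the definition of alwaystaker/nevertaker quantifies over all cutoffs in the support of the coarser rule, and $\supp(G)$ is merely a subspace of $\supp(T)$, so strictly fewer conditions are imposed on $G$.
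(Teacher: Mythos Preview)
Your proof is correct and follows essentially the same approach as the paper: both arguments show that for any $c' \in \supp(G)$ one can write $X_i^{\perp G} - c' = X_i^{\perp T} - \tilde c$ with $\tilde c \in \supp(T)$, and then apply Definition~\ref{def:unit_categories}. The paper obtains the shift via the identity $(X_i^{\perp G})^{\perp T} = X_i^{\perp T}$ (established by uniqueness in Proposition~\ref{prop:decomposition}), setting $\tilde c = c' - (X_i^{\perp G})^T$; your choice $\tilde c = c' + X_i^G - X_i^T$ is literally the same vector, since $X_i^T = X_i^G + (X_i^{\perp G})^T$.
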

To obtain similar results for the other categories, assumptions about the
relation between \(G\) and \(T\) are required.
The following proposition generalizes the AND-rule and OR-rule examples
(Example \ref{ex:and_rule} and Example \ref{ex:or_rule}).
\begin{restatable}{prop}{propoperationsimple}\label{prop:operation_simple}
  Let \(G, \, H\) be cutoff rules on \(\Rn^K\) with \(\supp(G) \neq \{0\}\).
  Suppose further that \(\supp(T) = \supp(G) \oplus \supp(H)\).
  \begin{enumerate}
    \item If \(T = G \land H\), then:
    \begin{enumerate}
      \item \(\comp(G,\, T) = \{ i \,|\, H(X_i^H \,|\, 0) = 1 \}\)
      \item \(\nevert(G,\, T) = \{ i \,|\, H(X_i^H \,|\, 0) = 0 \}\)
      \item Every unit \(i\) is either a complier or a nevertaker of \(G\) with respect to \(T\).
    \end{enumerate}
    \item If \(T = G \lor H\), then:
    \begin{enumerate}
      \item \(\comp(G,\, T) = \{ i \,|\, H(X_i^H \,|\, 0) = 0 \}\)
      \item \(\alwayst(G,\, T) = \{ i \,|\, H(X_i^H \,|\, 0) = 1 \}\)
      \item Every unit \(i\) is either a complier or an alwaystaker of \(G\) with respect to \(T\).
    \end{enumerate}
  \end{enumerate}
\end{restatable}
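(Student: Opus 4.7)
The plan is to reduce each membership condition to the pointwise characterization of Proposition \ref{prop:equiv:cutoffrules} and then exploit the direct-sum hypothesis to render $H_i$ constant on the relevant test set $\supp(G)$.

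The crucial preliminary is a constancy lemma for $H$. Since both $\supp(G)$ and $\supp(H)$ are spanned by standard basis vectors, the hypothesis $\supp(G) \oplus \supp(H) = \supp(T)$ forces $S(G) \cap S(H) = \emptyset$. Therefore, for every $c \in \supp(G)$, $c_k = 0$ for each $k \in S(H)$, so the indicators read by $H$ are unchanged from their values at $c = 0$, giving
\[
  H(X_i \mid c) \;=\; H(X_i \mid 0) \;=\; H(X_i^H \mid 0) \qquad \text{for all } c \in \supp(G).
\]
This is the only substantive geometric input to the argument.

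For Part 1 with $T = G \land H$, substitution yields $T_i(c) = G_i(c) \land H(X_i^H \mid 0)$ on $\supp(G)$. I would then split on the value $H(X_i^H \mid 0) \in \{0,1\}$ and read off the characterizations directly from Proposition \ref{prop:equiv:cutoffrules}: if $H(X_i^H \mid 0) = 1$, then $T_i \equiv G_i$ on $\supp(G)$ and $i$ is a complier; if $H(X_i^H \mid 0) = 0$, then $T_i \equiv 0$ on $\supp(G)$ and $i$ is a nevertaker. These two cases are exhaustive and mutually exclusive, giving (a), (b), and (c). Alwaystaker and defier status are automatically excluded here because non-degeneracy $\supp(G) \neq \{0\}$ guarantees that $c \mapsto G_i(c)$ attains both truth values on $\supp(G)$ for every fixed $X_i$---any single $\ind[X_{k,i} > c_k]$ with $k \in S(G)$ can be flipped freely by varying $c_k$, and non-constancy of the underlying Boolean expression of $G$ lifts this to $G_i$ itself.

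For Part 2 with $T = G \lor H$, I would avoid redoing the case analysis by routing through the duality identities in Proposition \ref{prop:dualities}. Applying item 4 twice gives $\comp(G, D) = \comp(\overline{G}, \overline{D})$, and combining items 1 and 2 gives $\alwayst(G, D) = \nevert(\overline{G}, \overline{D})$. Together with De Morgan's law $\overline{G \lor H} = \overline{G} \land \overline{H}$ and the fact that negation preserves support ($\supp(\overline{G}) = \supp(G)$ and likewise for $H$), Part 1 applied to the pair $(\overline{G}, \overline{H})$ with rule $\overline{G} \land \overline{H}$ yields exactly the desired characterizations, with $\overline{H}(X_i^H \mid 0) = 1$ translating to $H(X_i^H \mid 0) = 0$; exhaustivity (c) transports in the same way. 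The main obstacle is really just the constancy lemma in the second paragraph---once $H_i$ is pinned down on $\supp(G)$, the rest is a mechanical case split together with standard duality bookkeeping.
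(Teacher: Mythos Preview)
Your proposal is correct and follows essentially the same strategy as the paper for Part~1: both arguments reduce to showing that $H(X_i \mid c) = H(X_i^H \mid 0)$ for all $c \in \supp(G)$ and then invoke non-degeneracy of $G$ to force the case split to be sharp. The paper phrases the constancy step as ``$\supp(G) \subset N^H$'' (citing the decomposition in Proposition~\ref{prop:decomposition}), whereas you argue it more explicitly from $S(G) \cap S(H) = \emptyset$; these are the same fact in different clothing.

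For Part~2 the paper simply remarks that the OR case follows by the same method. Your route through Proposition~\ref{prop:dualities} and De~Morgan is a legitimate and slightly more economical alternative: it converts $G \lor H$ into $\overline{G} \land \overline{H}$, feeds it back into Part~1, and translates the conclusions via the identities $\comp(G,T) = \comp(\overline{G},\overline{T})$ and $\alwayst(G,T) = \nevert(\overline{G},\overline{T})$. This buys you the OR case without a second case analysis, at the cost of the duality bookkeeping; the paper's approach is marginally more direct but requires repeating the argument.
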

In this setting, if the subrule \(H\) is known, we can identify the unit categories 
with respect to \((G,\, T)\), even if \(G\) itself is not fully observed.
Using Proposition \ref{prop:operation_simple}, we conclude that introducing new score variables
into cutoff rules does not create any defiers or indecisive units:
by definition, negation is required for these categories (see Definition \ref{def:unit_categories} and Proposition \ref{prop:dualities}).
The previous statement can be extended to the case of a general decision rule \(D\) setting as follows:
\begin{restatable}{prop}{propboundcompdef}\label{prop:bound:comp}
  Let \(G, \, H\) be cutoff rules on \(\Rn^K\) with
  \(\supp(T) = \supp(G) \oplus \supp(H)\) and \(\supp(G) \neq \{0\} \neq \supp(H)\).
  \begin{enumerate}
    \item If \(T = G \land H\) then:
      \begin{enumerate}
        \item\label{prop:bound:comp:and_factor}\(\comp(G,\, T) \cap \comp(T,\, D) \subset \comp(G,\, D)\)
        \item\label{prop:bound:comp:nt_factor}\(\nevert(G,\, T) \cap \comp(T,\, D) \subset \nevert(G,\, D)\)
        \item\label{prop:bound:comp:and_bound} \(\comp(T,\, D) \subset \comp(G,\, D) \cup \nevert(G,\, D) \)
      \end{enumerate}
    \item If \(T = G \lor H\) then:
      \begin{enumerate}
        \item\label{prop:bound:comp:or_factor} \(\comp(G,\, T) \cap \comp(T,\, D) \subset \comp(G,\, D)\)
        \item\label{prop:bound:comp:at_factor}\( \alwayst(G,\, T) \cap \comp(T,\, D) \subset \alwayst(G,\, D)\)
        \item\label{prop:bound:comp:or_bound} \(\comp(T,\, D) \subset \comp(G,\, D) \cup \alwayst(G,\, D) \)
      \end{enumerate}
  \end{enumerate}
\end{restatable}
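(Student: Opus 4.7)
The plan is to reduce both the AND and OR parts to a single central identity that relates $D$ (evaluated for the $(G,D)$-compliance test) to $T$ (evaluated at a shifted cutoff via the $(T,D)$-compliance relation). Because $\supp(T) = \supp(G) \oplus \supp(H)$ with $\supp(G)$ and $\supp(H)$ spanned by disjoint subsets of the standard basis, every $X_i$ admits an orthogonal decomposition $X_i = X_i^G + X_i^H + X_i^{\perp T}$, so that $X_i^{\perp G} = X_i^H + X_i^{\perp T}$ and, for any $c \in \supp(G)$, $c - X_i^H \in \supp(G) \oplus \supp(H) = \supp(T)$.

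The central observation is the rewriting
\[
  D\bigl(X_i^{\perp G} - c\bigr) \;=\; D\bigl(X_i^{\perp T} - (c - X_i^H)\bigr),
\]
valid for every $c \in \supp(G)$. If $i \in \comp(T,D)$, the right-hand side equals $T(0 \mid c - X_i^H)$ by Definition \ref{def:unit_categories}. Next I would evaluate $T = G \diamond H$ at this shifted cutoff coordinate by coordinate, using disjointness of $S(G)$ and $S(H)$: the $S(H)$-entries of the cutoff do not affect $G$, so $G(0 \mid c - X_i^H) = G(0 \mid c)$; and because $c$ has no $S(H)$-entries, $H(0 \mid c - X_i^H) = H(0 \mid -X_i^H) = H(X_i^H \mid 0)$, where the last equality just rewrites the indicators $\ind[0 > -X_{k,i}^H] = \ind[X_{k,i}^H > 0]$. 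Inserting the characterization from Proposition \ref{prop:operation_simple} then closes the Boolean expression: for $T = G \land H$ one has $H(X_i^H \mid 0) = 1$ on $\comp(G,T)$ and $= 0$ on $\nevert(G,T)$, yielding respectively $T(0 \mid c - X_i^H) = G(0 \mid c)$ and $= 0$, which proves (a) and (b); for $T = G \lor H$ one has $H(X_i^H \mid 0) = 0$ on $\comp(G,T)$ and $= 1$ on $\alwayst(G,T)$, giving $G(0 \mid c)$ and $1$ respectively, which proves the OR analogues.

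For the bounds (c) in each case, I would simply invoke Proposition \ref{prop:operation_simple} again: for $T = G \land H$ every unit lies in $\comp(G,T) \cup \nevert(G,T)$, and for $T = G \lor H$ every unit lies in $\comp(G,T) \cup \alwayst(G,T)$; intersecting with $\comp(T,D)$ and applying the already-proved parts (a) and (b) delivers the claimed inclusion. The main obstacle is essentially bookkeeping: one must carefully verify that $c - X_i^H$ lands in $\supp(T)$ and that each of $G$ and $H$ only ``sees'' its own coordinates of the shifted cutoff. Both facts rest on the standard-basis orthogonality of $\supp(G)$ and $\supp(H)$ implied by the direct-sum hypothesis; once that is in place, the Boolean algebra collapses immediately and no analytic machinery beyond Proposition \ref{prop:operation_simple} is needed.
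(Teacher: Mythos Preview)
Your proposal is correct and follows essentially the same route as the paper's proof: both specialize the $(T,D)$-compliance identity $T(0\mid c)=D(X_i^{\perp T}-c)$ to the particular cutoff $c^G-X_i^H\in\supp(T)$, use the decomposition $X_i^{\perp G}=X_i^H+X_i^{\perp T}$ to obtain $D(X_i^{\perp G}-c^G)=G(0\mid c^G)\,\diamond\, H(X_i^H\mid 0)$, and then case-split on $H(X_i^H\mid 0)$ via Proposition~\ref{prop:operation_simple}. The only cosmetic difference is that the paper packages the coordinate-disjointness facts (your ``$G$ only sees its own coordinates'') into Lemma~\ref{lem:supp_operation} ($N^G=\supp(H)\oplus N^T$), whereas you argue them directly from the standard-basis structure of the supports; both justifications are equivalent here.
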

Thus, compliers of \(T \defeq G \, \square \, H\) with respect to \(D\)
are bounded by the complier and non-change categories of the simpler rule \(G\)
as stated in Proposition \ref{prop:bound:comp} \ref{prop:bound:comp:or_bound}
and \ref{prop:bound:comp} \ref{prop:bound:comp:and_bound}.
However, the set equalities do not hold in general.
For example, let \(D \defeq (I_1 \land I_2) \lor I_3\),  \(T \defeq I_1 \land I_3\) and \(G \defeq I_1\).
Then 
\[ \emptyset = \comp(T,\, D) \neq \comp(G,\, D) = \{ i \,|\, X_{3,i} \leq 0, \, X_{2,i} > 0 \} \]
holds.
Furthermore, Proposition \ref{prop:bound:comp} \ref{prop:bound:comp:or_factor},
\ref{prop:bound:comp} \ref{prop:bound:comp:and_factor} 
and \ref{prop:bound:comp} \ref{prop:bound:comp:at_factor},
\ref{prop:bound:comp} \ref{prop:bound:comp:nt_factor} 
resemble a factorization rule:
loosely speaking, the intermediate rule \(T\) in the
hierarchy \((G,\, T)\) and \((T,\, D)\) factors out,  becoming \((G,\, D)\).
Using Propositions \ref{prop:bound:comp} and \ref{prop:dualities}, we can immediately draw conclusions about defiers:
\begin{restatable}{prop}{propbounddef}\label{prop:bound:def}
  Let \(G, \, H\) be cutoff rules on \(\Rn^K\) with
  \(\supp(T) = \supp(G) \oplus \supp(H)\) and \(\supp(G) \neq \{0\} \neq \supp(H)\).
  \begin{enumerate}
    \item If \(T = G \land H\) then:
      \begin{enumerate}
        \item\label{prop:bound:def:and_factor}\(\comp(G,\, T) \cap \defier(T,\, D) \subset \defier(G,\, D)\)
        \item\label{prop:bound:def:nt_factor}\(\nevert(G,\, T) \cap \defier(T,\, D) \subset \alwayst(G,\, D)\)
        \item\label{prop:bound:def:and_bound} \(\defier(T,\, D) \subset \defier(G,\, D) \cup \alwayst(G,\, D) \)
      \end{enumerate}
    \item If \(T = G \lor H\) then:
      \begin{enumerate}
        \item\label{prop:bound:def:or_factor} \(\comp(G,\, T) \cap \defier(T,\, D) \subset \defier(G,\, D)\)
        \item\label{prop:bound:def:at_factor}\( \alwayst(G,\, T) \cap \defier(T,\, D) \subset \nevert(G,\, D)\)
        \item\label{prop:bound:def:or_bound} \(\defier(T,\, D) \subset \defier(G,\, D) \cup \nevert(G,\, D) \)
      \end{enumerate}
  \end{enumerate}
\end{restatable}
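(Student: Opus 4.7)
The plan is to derive Proposition \ref{prop:bound:def} directly from Proposition \ref{prop:bound:comp} by invoking the duality \(\defier(T,\, D) = \comp(T,\, \overline{D})\) from Proposition \ref{prop:dualities}. Since each statement in Proposition \ref{prop:bound:def} mirrors a statement in Proposition \ref{prop:bound:comp} but with \(\defier\) playing the role of \(\comp\), and with \(\nevert\) and \(\alwayst\) swapped in the second slot, the natural strategy is to rewrite every defier set on the \(D\)-side as a complier set on the \(\overline{D}\)-side, apply the earlier proposition to the pair \((T,\, \overline{D})\), and then translate the conclusion back to the \(D\)-side using the duality lemma again.

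Concretely, I would proceed as follows. First, I would note that since \(G\) and \(H\) are cutoff rules on \(\Rn^K\) with \(\supp(T) = \supp(G) \oplus \supp(H)\), the structural hypotheses of Proposition \ref{prop:bound:comp} remain satisfied when \(D\) is replaced by \(\overline{D}\) (the hypotheses only involve \(G\), \(H\), and \(T\)). Second, in the AND case \(T = G \land H\), I would apply Proposition \ref{prop:bound:comp}(1) to the triple \((G,\, T,\, \overline{D})\) to obtain \(\comp(G,\, T) \cap \comp(T,\, \overline{D}) \subset \comp(G,\, \overline{D})\), \(\nevert(G,\, T) \cap \comp(T,\, \overline{D}) \subset \nevert(G,\, \overline{D})\), and \(\comp(T,\, \overline{D}) \subset \comp(G,\, \overline{D}) \cup \nevert(G,\, \overline{D})\). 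Then I would substitute \(\comp(\cdot,\, \overline{D}) = \defier(\cdot,\, D)\) and \(\nevert(\cdot,\, \overline{D}) = \alwayst(\cdot,\, D)\) (via Proposition \ref{prop:dualities}, items 4 and the consequence \(\alwayst(G,\, D) = \nevert(G,\, \overline{D})\) applied with \(D\) swapped for \(\overline{D}\), using that \(\overline{\overline{D}} = D\)) to obtain exactly the three claims in part 1 of Proposition \ref{prop:bound:def}. The OR case (\(T = G \lor H\)) is handled identically, applying Proposition \ref{prop:bound:comp}(2) to \((G,\, T,\, \overline{D})\) and then translating \(\alwayst(G,\, \overline{D}) = \nevert(G,\, D)\) to deliver the three statements in part 2.

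Given this reduction, there is no serious obstacle; the main care required is bookkeeping around the dualities. In particular, I would double-check that \(\comp(G,\, T)\), \(\nevert(G,\, T)\), and \(\alwayst(G,\, T)\) appearing on the left-hand sides are categorizations with respect to the pair \((G,\, T)\) and therefore unaffected by whether we work with \(D\) or \(\overline{D}\), so those factors pass through the translation unchanged. The only subtle point is that when we swap \(D \mapsto \overline{D}\) and later undo the swap, we must use the involutive identities \(\comp(\cdot,\, \overline{D}) = \defier(\cdot,\, D)\) and \(\alwayst(\cdot,\, \overline{D}) = \nevert(\cdot,\, D)\) in the correct direction; apart from this, the proof is a direct corollary and requires no additional analysis of the geometry of \(\supp(T)\) beyond what was already used in Proposition \ref{prop:bound:comp}.
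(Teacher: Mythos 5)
Your proposal is correct and matches the paper's own route: the paper derives Proposition \ref{prop:bound:def} as an immediate consequence of Proposition \ref{prop:bound:comp} applied to \((G,\,T,\,\overline{D})\) together with the dualities \(\comp(\cdot,\,\overline{D})=\defier(\cdot,\,D)\), \(\nevert(G,\,\overline{D})=\alwayst(G,\,D)\) and \(\alwayst(G,\,\overline{D})=\nevert(G,\,D)\) from Proposition \ref{prop:dualities}. Your bookkeeping, including the observation that the \((G,\,T)\)-categories pass through unchanged, is exactly the intended argument.
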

Combining both statements, we see that if a unit is a complier with respect to \((T,\, D)\),
its categorization under \((G,\, T)\) also holds for \((G,\, D)\). 
Figure \ref{fig:inheritance} summarizes the above findings for the AND-case.
Further, according to Proposition \ref{prop:operation_simple}, in the absence of indecisive items, these transitions are exhaustive,
since any unit is either a complier or nevertaker of \(G\) with respect to \(T\).
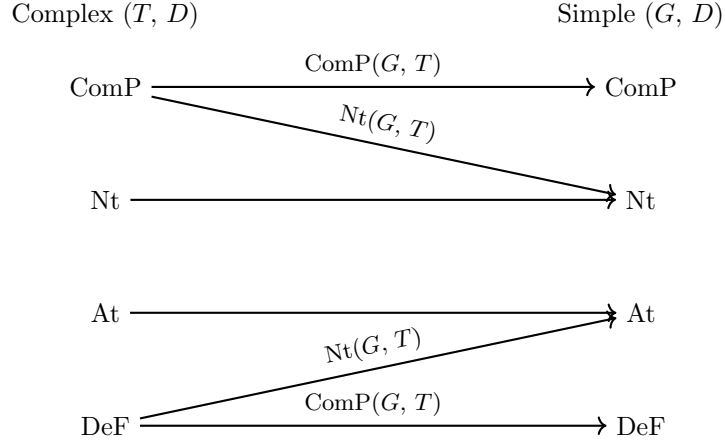
\begin{figure}[ht]
  \centering
  \begin{tikzpicture}
  \begin{scope}[thick]
    % headings
    \node[] (capComp) {Complex \((T,\,D)\)};
    \node[right=4.5cm of capComp] (capSim) {Simple \((G, \, D)\)};
    % complex
    \node[below=0.4cm of capComp] (cTD) {\(\comp\)};
    \node[below=of cTD] (nTD) {\(\nevert\)};
    \node[below=of nTD] (aTD) {\(\alwayst\)};
    \node[below=of aTD] (dTD) {\(\defier\)};
    % simple
    \node[below=0.4cm of capSim] (cGD) {\(\comp\)};
    \node[below=of cGD] (nGD) {\(\nevert\)};
    \node[below=of nGD] (aGD) {\(\alwayst\)};
    \node[below=of aGD] (dGD) {\(\defier\)};
    % migrations
    \draw[->] (nTD) edge node[sloped,above,midway] {} (nGD);
    \draw[->] (aTD) edge node[sloped,above,midway] {} (aGD);

    \draw[->] (dTD) edge node[sloped,above,midway] {\small\(\comp(G, \, T)\)} (dGD);
    \draw[->] (dTD) edge node[sloped,above,midway] {\small\(\nevert(G, \, T)\)} (aGD);

    \draw[->] (cTD) edge node[sloped,above,midway] {\small\(\comp(G, \, T)\)} (cGD);
    \draw[->] (cTD) edge node[sloped,above,midway] {{\small\(\nevert(G, \, T)\)}} (nGD);
  \end{scope}
\end{tikzpicture}

  %\begin{axis}[
  %  xmax=2,ymax=2,xmin=-2,ymin=-2,
  %  axis lines=middle,
  %  axis line style={Stealth-Stealth,thick},
  %  xtick distance=1,
  %  ytick distance=1,
  %  yticklabel=\empty,
  %  xticklabel=\empty,
  %  grid=major,
  %  grid style={thin,densely dotted,black!20},
  %  ylabel=\small $X_2$,
  %  xlabel=\small $X_1$
  %]
  %  %% T
  %  %\draw [orange] (1,1.2) node {\small $T=1$};
  %  %\draw [orange] (1,-1.2) node {\small $T=1$};

  %  %\draw [orange] (-1,1.2) node {\small $T=0$};
  %  %\draw [orange] (-1,-1.2) node {\small $T=0$};

  %  %% D
  %  %\draw [cyan] (-1, 1.6) node {\small $D=0$};
  %  %\draw [cyan] (1,-1.6) node {\small $D=0$};
  %  %\draw [cyan] (-1,-1.6) node {\small $D=0$};
  %  %\draw [cyan] (1,1.6) node {\small $D=1$};

  %  %% cutoff change
  %  %\node at (-0.9, 0.25)[circle,fill,inner sep=1.6pt] (X) {};
  %  %\node at (0.6, 0.25)[draw,circle,gray!90,inner sep=1.6pt,fill=gray!90] (Xc) {};
  %  %\draw[solid,densely dotted,color=gray!40,very thick,->] (X) edge (Xc); 
  %  %\node[above=0.02cm of X] (XT) {\small \(X_i\)};
  %  %\node[above=0.02cm of Xc,color=gray!90] (XcT) {\small \(X_i-c\)};
  %\end{axis}
  \caption{
    All possible category changes when transitioning from the complex rule \(T = G \land H\)
    to the simpler subrule \(G\) given that \(\supp(G) \oplus \supp(H) = \supp(T)\) and assuming that there are no indecisive units.
    The figure summarizes Propositions \ref{prop:bound:atnt}, \ref{prop:operation_simple},
    \ref{prop:bound:comp} and \ref{prop:bound:def}.
    The OR-case can be described analogously.
  }
  \label{fig:inheritance}
\end{figure}
%
%TODO
%\begin{restatable}{prop}{propandchar}\label{prop:and_char}
%  Let \(G,\, T\) be cutoff rules such that \(\emptyset \neq \supp(G) \subset \supp(T)\)
%  and \(i \in \nevert(G, \, T) \cup \comp(G, \, T)\) for all units \(i\).
%  Then there exists cutoff rule \(H\) with \(T = G \land H\).
%\end{restatable}

%

\section{Empirical Application}\label{sec:app}
In this section, we apply our theoretical results to a real-world decision-making scenario in opto-electronic semiconductor manufacturing. We analyze the effect of rework decisions during the color-conversion process in the production of  \abbrev{white light emitting diode}{wLED} on overall product yield. After a brief description of the application, we derive modeling implications based on the observed data. This leads to an algorithmic description of the production process, which serves as a data-generating process (DGP) and complements the empirical study with simulations using artificially generated data that mimic the real-world case.
\subsection{Rework}
Phosphor conversion is a crucial step in wLED production. To obtain white light, several layers of phosphorous substrate are applied to a blue light-emitting semiconductor, shifting the perceived color along a conversion curve from the blue region towards the white region. In this step, each production lot, consisting of $784$ individual wLEDs, is processed according to a standardized recipe. Every wLED in the lot undergoes the same procedure, resulting in an identical stack of phosphor layers.
The goal is to reach the specified target color by the end of production. To achieve this, intermediate color measurements are taken for selected wLEDs in the lot after the conversion step to assess proximity to the target. Subsequent processing steps are then adjusted to maximize the number of wLEDs that meet the target color, provided they are already close. To ensure a high yield (i.e., a large share of wLEDs in the lot reaching the target), a rework decision is made based on these intermediate measurements. If the target is not met, a correction layer of phosphor is applied to the entire lot. For further details on the conversion process, see  \citet{Cho2017} and \citet{schwarz2024}.

The rework decision is based on two scores:
The \textit{distance score}, $X_D$, measures the distance between the mean color point
\(C_1 = (C_x,\, C_y)\) after the regular conversion and the target color
\(C_T := (C_{T, x},\, C_{T, y})\).
The \textit{yield-improvement score}, $X_Y$,
is a relative measure of quality variation within a lot.
It evaluates a hypothetical scenario in which the target is ideally met by the mean color of the lot.
If variability within the lot is high, rework may reduce overall yield even if the distance score suggests treatment.
Figure \ref{fig:scores} illustrates the decision criteria in more detail.

\begin{figure}[htb]
  \centering
  \includegraphics[width=0.8\linewidth]{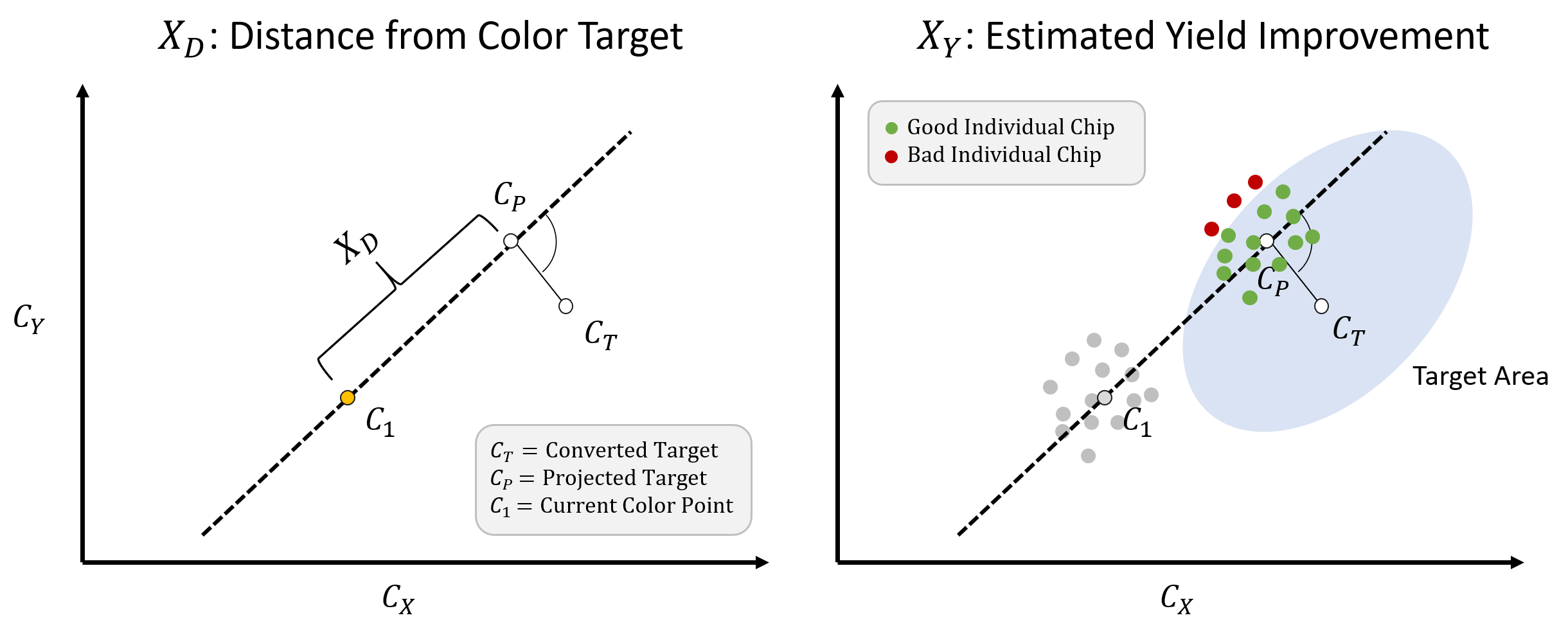}
  \caption{
    Left: $X_D$ is defined as the distance between the current mean color point $C_1$ and the target point $C_P$, which is the closest the lot can technically get to the initial target $C_T$. The slope of the dashed curve is given by a physical process and thus it is not possible to reach $C_T$ once the curve deviates.
    Right: $X_Y$ evaluates the expected improvement by calculating the share of in-specification chips in the lot. This is done by moving the current distribution of color points to the target.
  }
  \label{fig:scores}
\end{figure}

Treatment is assigned according to the cutoff rule
\(T = I_D \land I_Y\), with the goal of maximizing the outcome $Y$,
defined as the percentage of chips that reach the target color by the end of production.

In practice, however, the empirical data show that the operators responsible for performing the treatment $D$
do not always comply with $T$ (see Figure \ref{fig:realdata}).
We attribute this discrepancy to an informational advantage regarding the \(X_Y\) score,
which human operators may use to override $T$ in order to avoid possible yield losses
from applying another rework layer.

We formalize this cautious-operator assumption in the following section.
It reflects the observed one-sided fuzziness in the \(I_Y\) dimension,
as well as the strict compliance with the distance rule \(I_D\).

\begin{figure}[htb]
  \centering
  \includegraphics[width=0.8\linewidth]{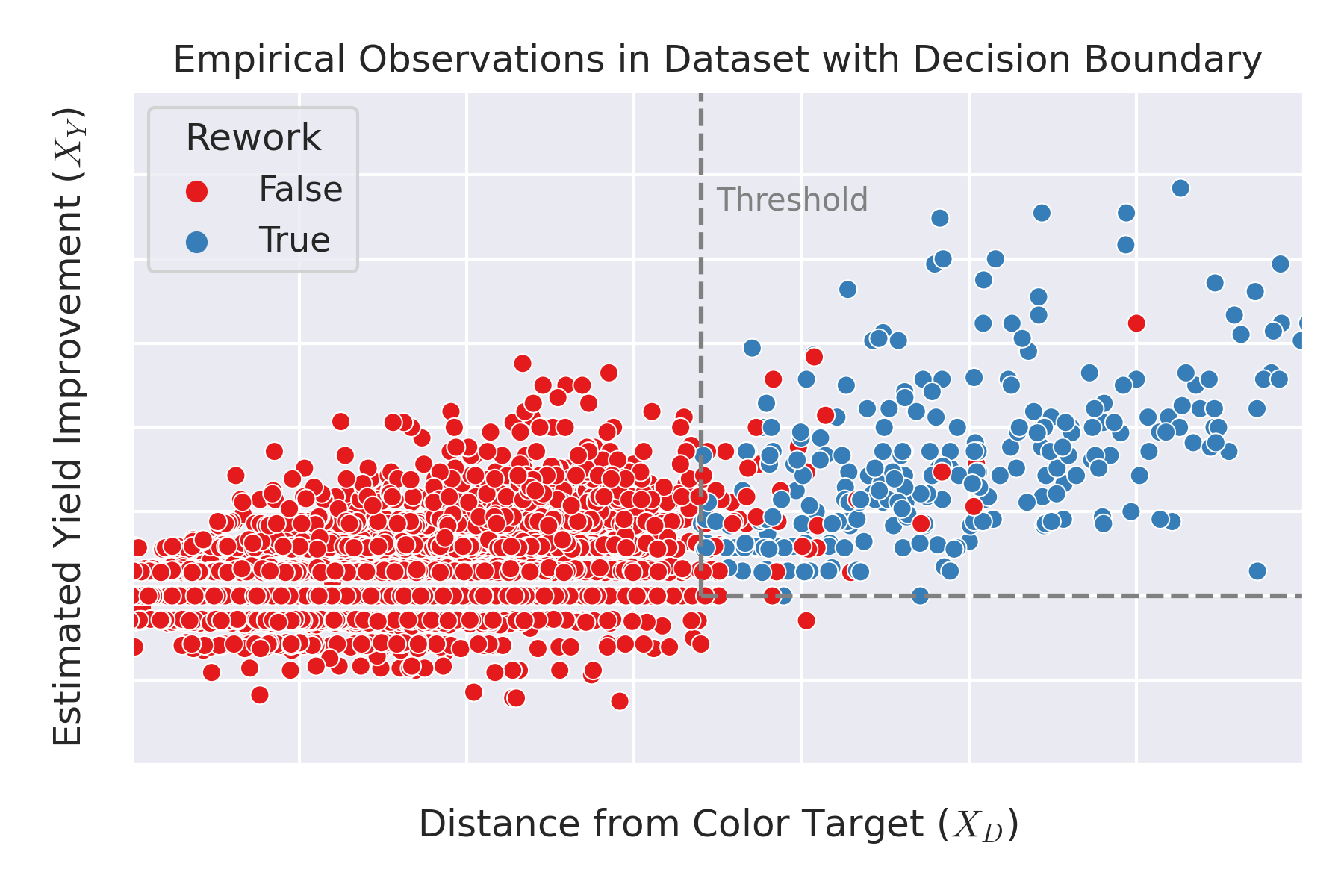}
  \caption{
    Real data plot with respect to the score components $X_D$ and $X_Y$.
    The decision boundary $T$ is dashed.
    The actual treatment assignment (red and blue) follows an unobserved rule $D$,
    rendering the MRD fuzzy.
  }
  \label{fig:realdata}
\end{figure}

From a policy or management perspective, we are interested in assessing the validity of the decision rule $T$ in this setting.
To this end, we apply the subset-identification framework from Section~\ref{sec:theory} to estimate effects at the cutoff.
By evaluating these effects, we draw conclusions for improving the treatment assignment $T$.

\subsection{Modeling assumptions and implications}
In this section, we derive modeling assumptions on the decision rule $D$ for the application setting,
combining insights from the observed data and the theory.
We assume that $D$ has an informational advantage over the initial decision rule
\(T = I_D \land I_Y\).
Specifically, we posit that this advantage arises from having more detailed information about the yield-improvement score.

The initial treatment assignment \(T\) is based only on the improvement estimates of every \(m\)-th 
item in the production lot, resulting in the score \(X_Y\).
In contrast, the final decision-maker has access to an overall yield-improvement estimate
\(X_E = X_Y + X_R\), where \(X_R\) reflects the contribution of items not included in \(X_Y\).
See Algorithm~\ref{algo:dgp} for details.

\begin{algorithm}[htb]
  \KwData{
    seed,
    lot-size \(n\),
    cutoff \(c\),
    measurement steps \(m\),
    yield criteria \(\mathcal{Y}\), 
    distance criteria \(\mathcal{D}\),
    operator-specific policy \(D_O\)
  }
  \KwResult{lot \(L\), scores \(X = (X_D, X_Y)\), assigned treatment \(T\), actual treatment \(D\), outcome \(Y\)}
  \(L \leftarrow (C_1, \ldots, C_n)\) generate a random production lot of \(n\) items\;
  \(X_D \leftarrow \mathcal{D}(L)\) calculate the distance to the target\;
  \(\hat{L}_A \leftarrow\) carry out an optimal rework step on \(L\)\;
  % assigned treatment
  \(X_Y \leftarrow \mathcal{Y}(\hat{L}_A) - \mathcal{Y}(L)\) on every \(m\)-th item\;
  \(T \leftarrow \ind[X_D > c_D] \land \ind[X_Y > c_Y]\)\;
  % actual treatment
  \(X_E \leftarrow \mathcal{Y}(\hat{L}_A) - \mathcal{Y}(L)\) on every item\;
  \(D \leftarrow \ind[X_D > c_D] \land D_O(X_Y, \, X_E)\)\;
  \(L_A \leftarrow\) carry out a realistic (noisy) rework step on \(L\)\;
  \eIf{D}{
    \(Y \leftarrow \mathcal{Y}(L_A)\)
  }{
    \(Y \leftarrow \mathcal{Y}(L)\)
  }
  \caption{DGP for a knowledgeable operator}
  \label{algo:dgp}
\end{algorithm}

Implementing this algorithm allows us to benchmark different
\textit{operator} assumptions and compare them with the real-world case.
In semiconductor manufacturing, the term operator refers to the human decision-maker 
implementing \(D\).
Figure \ref{fig:ackn_caut} shows an example of data generated by this DGP.

Although the knowledgeable operator has, in principle, access to a better improvement estimate,
there are several ways in which this information may be applied.
In the following, we discuss special cases of the knowledgeable-operator assumption,
each of which (at least in theory) permits identification of the unit categories.
\begin{figure}[htb]
  \centering
  \includegraphics[width=\linewidth]{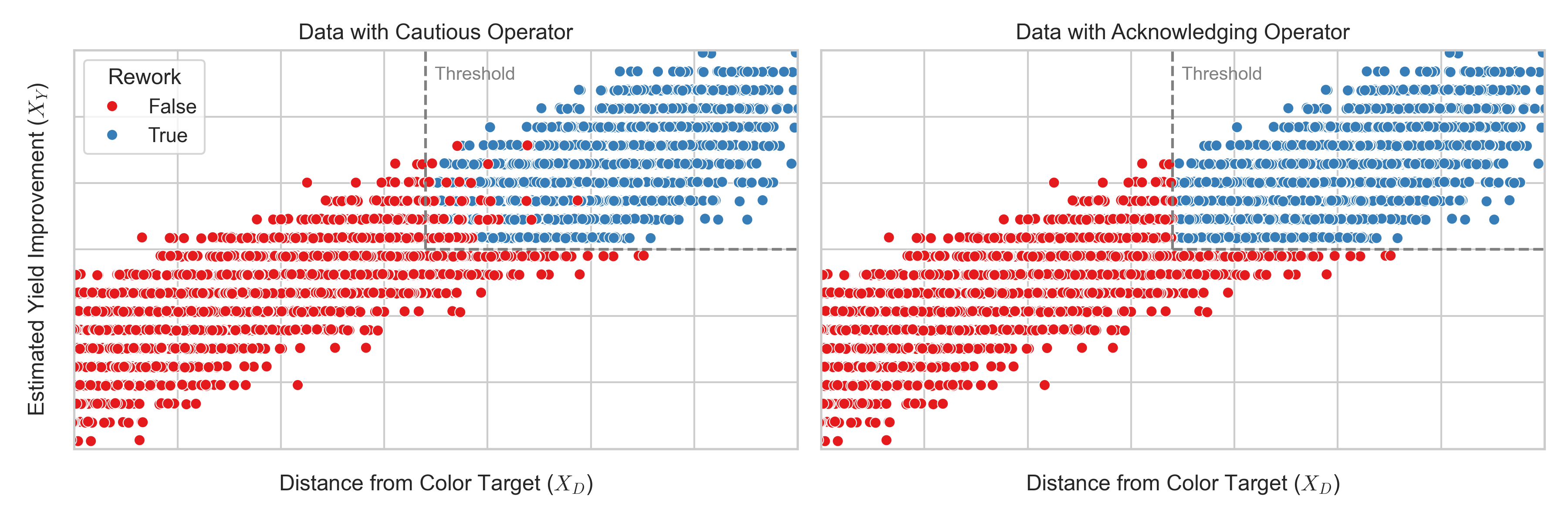}
  \caption{Comparison of data with cautious (left) and acknowledging (right) operator.}
  \label{fig:ackn_caut}
\end{figure}

\paragraph{The acknowledging operator}
In this scenario, the operator \(D\) accepts the treatment assignment prescribed by \(T\) despite having better information.
Thus, the final treatment assignment coincides with the initial decision rule:
\(D = T = I_D \land I_Y\).
This corresponds to the operator-specific policy in Algorithm~\ref{algo:dgp} with \(D_O = I_Y\).
Consequently, we obtain
\[
  \comp(T, D) = \mathcal{I}
\]
as well as
\[
  \comp(G, D) = \{i \,|\, X_{Y,i} > 0 \}
  \,\mbox{ and }\,
  \nevert(G, D) = \{i \,|\,  X_{Y,i} \leq 0 \}
\]
for \(G = I_D\), using Proposition~\ref{prop:operation_simple} or Example~\ref{ex:and_rule}.

\paragraph{The cautious operator}
Suppose the operator is particularly careful to avoid accidental degradation of the production lot caused by the optional rework step.
In cases where \(T\) recommends rework, the operator may override this recommendation.
Such overrides may be motivated either by better information or by additional constraints (e.g., time) that the intended treatment rule \(T\) does not capture.
In contrast, if \(T\) does not recommend rework, we assume the operator accepts this decision (e.g., out of concern about possible degradation).

We formalize this “negative overwrite” by introducing an additional score variable \(X_{op}\) and modeling \(D\) as:
\[
  D = (T \land I_{op}) \lor (\overline{T} \land 0) = T \land I_{op}.
\]
In particular, we set \(X_{op} = X_E\), which leads to
\[
  D_O(X \,|\, c) \defeq I_{X_Y} \land I_{X_E}.
\]
Thus, we have
\[
  \comp(T, D) = \{i \,|\, X_{E,i} > 0 \}
  \,\mbox{ and }\,
  \nevert(T, D) = \{i \,|\, X_{E,i} \leq 0 \}
\]
as well as
\[
  \comp(G, D) = \{i \,|\, X_{E,i} > 0, \, X_{Y,i} > 0 \}
  \,\mbox{ and }\,
  \nevert(G, D) = \{i \,|\, X_{E,i} \leq 0 \} \cup \{i \,|\, X_{Y,i} \leq 0 \}
\]
for the subrule \(G \defeq I_D\),
again using Proposition~\ref{prop:operation_simple} or Example~\ref{ex:and_rule}.

Employing Theorem~\ref{thm:identification:none_change_drop},
one can estimate the subset complier effect of \(G\), excluding the never-taker group
\(\Omega \defeq \{i \,|\, X_{Y,i} \leq 0\}\),
instead of the overall complier effect of \(G\).
Since we assume that \(X_E\) is known only to the operator,
nevertakers defined by \(X_{E,i} \leq 0\) cannot be excluded.
However, because the condition \(X_{Y,i} \leq 0\) applies globally,
the continuity and stability assumptions required for identification
are likely to hold in practice.

\paragraph{The reasonable operator}
Finally, consider an operator who always uses the additional information available.
In this case, the reasonable operator overwrites the intended assignment rule \(T\)
and bases the final decision solely on the full information \(X_E\) rather than on \(X_Y\):
\[
  D_O(X \,|\, c) = I_E(c).
\]
Accordingly, the final decision rule \(D\) can be expressed in relation to \(T\) as:
\[
  D(X \,|\, c) = I_D(c) \land \ind[X_Y + X_R > c_Y].
\]

A unit \(i\) belongs to \(\comp(T, D)\) if and only if \(X_{R,i} = 0\),
i.e., there is no improvement or degradation among the items excluded from \(X_Y\).
Now suppose \(X_{R,i} \neq 0\).
Then we can choose \(c_Y = 2 \max(X_{Y,i}, X_{R,i})\) and \(c_D < X_{D,i}\)
to obtain \(D(X_i \,|\, c) = T(X_i \,|\, c) = 0\).
However, there also exist cutoff values \(c \in \supp(T) = \mathbb{R}^2\)
for which \(D(X_i \,|\, c) \neq T(X_i \,|\, c)\).
For units with \(X_R \geq 0\), choose \(c_Y = X_Y\).
Then \(X_Y + X_R > c_Y\), so \(T(X_i \,|\, c) = 0\) and \(D(X_i \,|\, c) = 1\).
Otherwise, choose \(c = X_Y + X_R\).
Then \(c < X_Y\), so \(T(X_i \,|\, c) = 1\) and \(D(X_i \,|\, c) = 0\).

This shows that
\[
  \comp(T, D) = \{ i \,|\, X_{R,i} = 0 \},
\]
and that all other units fall into the indecisive category.

This model has several shortcomings.
First, it does not reflect the observed data: according to the model,
some treated items should not have received the initial assignment.
Second, the identification result depends on the absence of indecisive items.
In practice, small deviations in the improvement score (small \(X_R\))
would likely be considered compliant with the assignment \(T\),
which suggests the need for a more local definition of categories.

While this case is theoretically interesting (and useful as an example of the indecisive case),
our empirical investigation focuses on the two edge cases:
the acknowledging operator and the cautious operator,
as shown in Figure~\ref{fig:ackn_caut}.

\section{Simulation Study}
In this section, we present numerical results based on the semi-synthetic data-generating process (DGP) derived in the previous section.
The aim is to estimate the causal effect of rework decisions on the final yield in a neighborhood of the decision boundary.
To this end, we use different estimators at both decision thresholds separately and draw on the identification theorems from Section~\ref{sec:theory}.

\subsection{Set-Up}
We generate semi-synthetic data following Algorithm~\ref{algo:dgp}.
A Python implementation of the DGP and all estimators is publicly available.
The process is calibrated to match the characteristics of real production data.
We draw \(n = 10{,}000\) observations and repeat each experiment \(r = 250\) times.

We evaluate the cut-offs \(c_D\) and \(c_Y\) separately, estimating:
\begin{itemize}
  \item the complier effect of \(G \in \{I_Y, \, I_D\}\), as identified in Theorem~\ref{thm:identification};
  \item the subset complier effect of \(G\), conditional on its counterpart; and
  \item intent-to-treat (ITT) effects, with and without the subset conditioning.
\end{itemize}

Complier effects are estimated under a fuzzy design, whereas ITT effects are estimated under a sharp design.
Oracle values are obtained using local linear kernel regression on the differences in true potential outcomes.
Covariates consist of statistics describing the quality of individual items.

We consider a variety of estimators, including covariate-adjusted estimators, for complier and subset complier effects.
Table~\ref{tab:estimators} provides an overview of the estimators used in our analysis.

\begin{table}[!ht]
    \centering
    \scalebox{0.6}{
    \begin{tabular}{ll}
    \toprule
        \textbf{Method} & \textbf{Adjustment} \\ \midrule
        No Adjustment & - \\ 
        Conventional Adjustment & Linear \\ 
        RDFlex Lasso & Local Penalized Linear Estimators \\ 
        RDFlex Global Lasso & Global Penalized Linear Estimators \\ 
        RDFlex Boosting & Gradient Boosting Estimators \\ 
        RDFlex Stacking & Stacked Combination of Linear and Non-Linear Estimators \\ 
        \bottomrule
    \end{tabular}}
    \caption{Considered methods and estimators in this section.}
    \label{tab:estimators}
\end{table}

\subsection{Estimators}
The basic RDD estimator runs separate local linear regressions on each side of the cutoff:
\begin{equation}
	\hat{\tau}_{\text{base}}(h) = \sum_{i=1}^n w_i(h) Y_i,
\end{equation}
where the \(w_i(h)\) are local linear regression weights that depend on the data through the realizations of the running variable only, and \(h > 0\) is a bandwidth.

Under standard conditions (e.g., the running variable is continuously distributed and the bandwidth \(h\) tends to zero at an appropriate rate),
the estimator \(\hat{\tau}_{\text{base}}(h)\) is approximately normally distributed in large samples, with bias of order \(h^2\) and variance of order \((nh)^{-1}\).

A conventional extension is the covariate-adjusted estimator, which incorporates covariates linearly into the local regressions.
We also use modern RDD estimators with flexible covariate adjustment based on potentially nonlinear adjustment functions \(\eta\).
This estimator takes the form:
\begin{equation}
   \widehat{\tau}_{\text{RDFlex}}(h; \eta) = \sum_{i=1}^n w_i(h) M_i(\eta),
   \quad M_i(\eta) = Y_i - \eta(Z_i),
\end{equation}
where \(\eta\) denotes the influence of \(Z\) on the outcome \(Y\), estimated using machine learning methods.

\subsection{Cautious Operator}\label{sec:sim}
Figure~\ref{fig:res_cautious_comparison} presents side-by-side results for the estimators at \(c_D\) and \(c_Y\) in the case of the cautious operator.
The intent-to-treat (ITT) oracles are closer to zero than the complier effects because they include individuals who are nevertakers with respect to each cutoff rule.
As shown in Figure~\ref{fig:res_cautious_dis}, for \(I_D\) we estimate an overall negative effect, although it is not statistically significant at the 95\% level.

The subset effect for the fuzzy case exhibits a smaller bias, since the proportion of nevertakers in the estimation sample is lower.
The estimated effect remains unchanged because only nevertakers, but no compliers, were removed.
For the ITT estimator, a higher proportion of compliers in the subsample increases the estimated effect of treatment rule \(G\).
The subset estimators have a comparable variance (see Table~\ref{tab:cautious_dis}), with coverage appearing slightly more credible overall.
In general, covariate adjustment reduces standard errors, especially for the sharp estimators.

The estimates for \(I_Y\) are small and positive.
The fuzzy estimator on the full data has a high standard error, which increases further with ML adjustment.
This may be due to a small jump in treatment probability in the full data, destabilizing the ML estimate.
In contrast, the subset estimator along this axis removes more observations within the estimation bandwidth, thereby reducing variance.
Additional results can be found in Appendix~\ref{app:addres}.

\begin{figure}[htb]
    \centering
    \begin{subfigure}[b]{0.48\linewidth}
        \includegraphics[width=\linewidth]{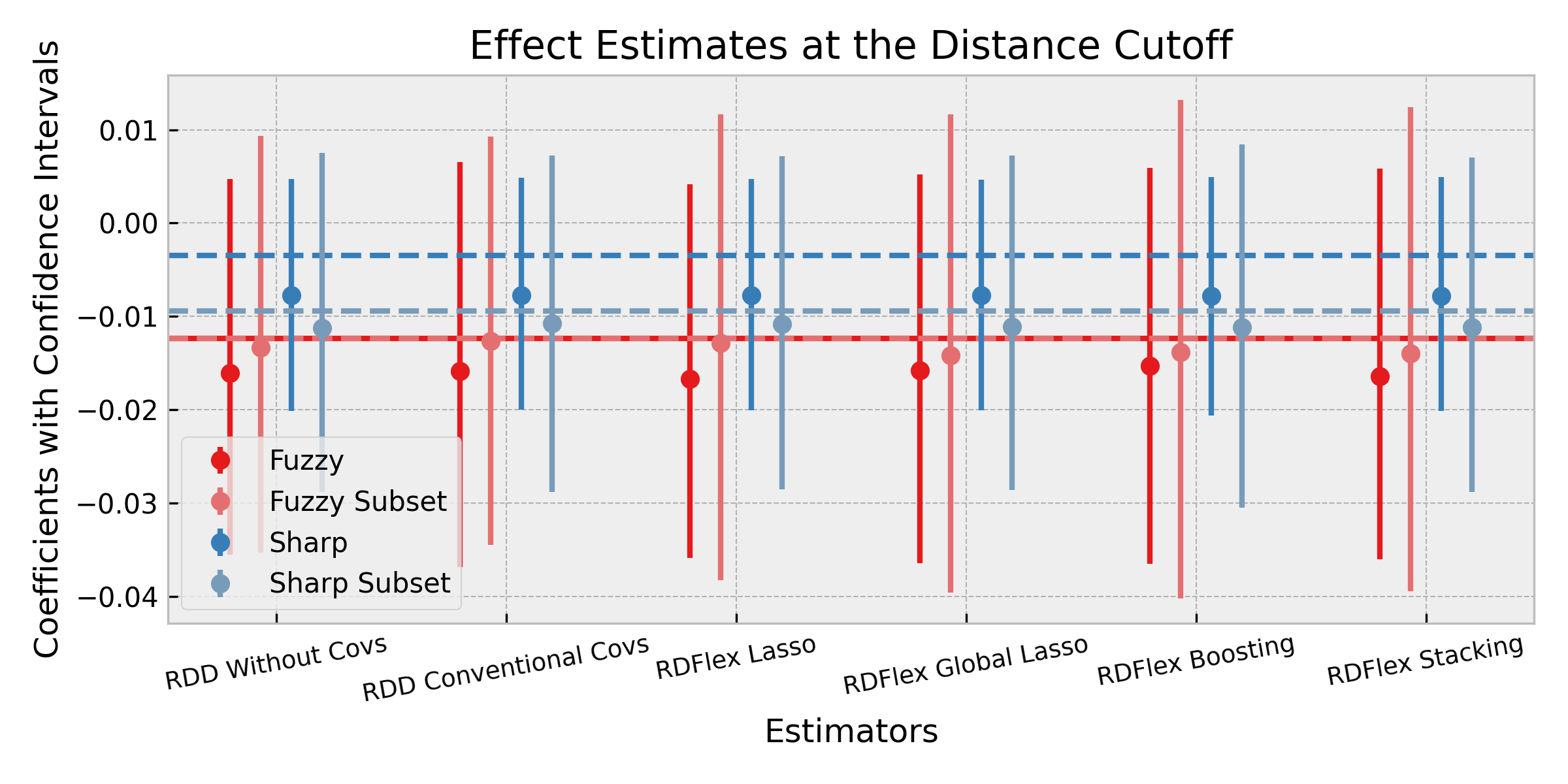}
        \caption{Results for fuzzy and sharp estimators at the cutoff \(c_D\) both on the full sample and on the subset.}
        \label{fig:res_cautious_dis}
    \end{subfigure}
    \hspace{0.02\linewidth}
    \begin{subfigure}[b]{0.48\linewidth}
        \includegraphics[width=\linewidth]{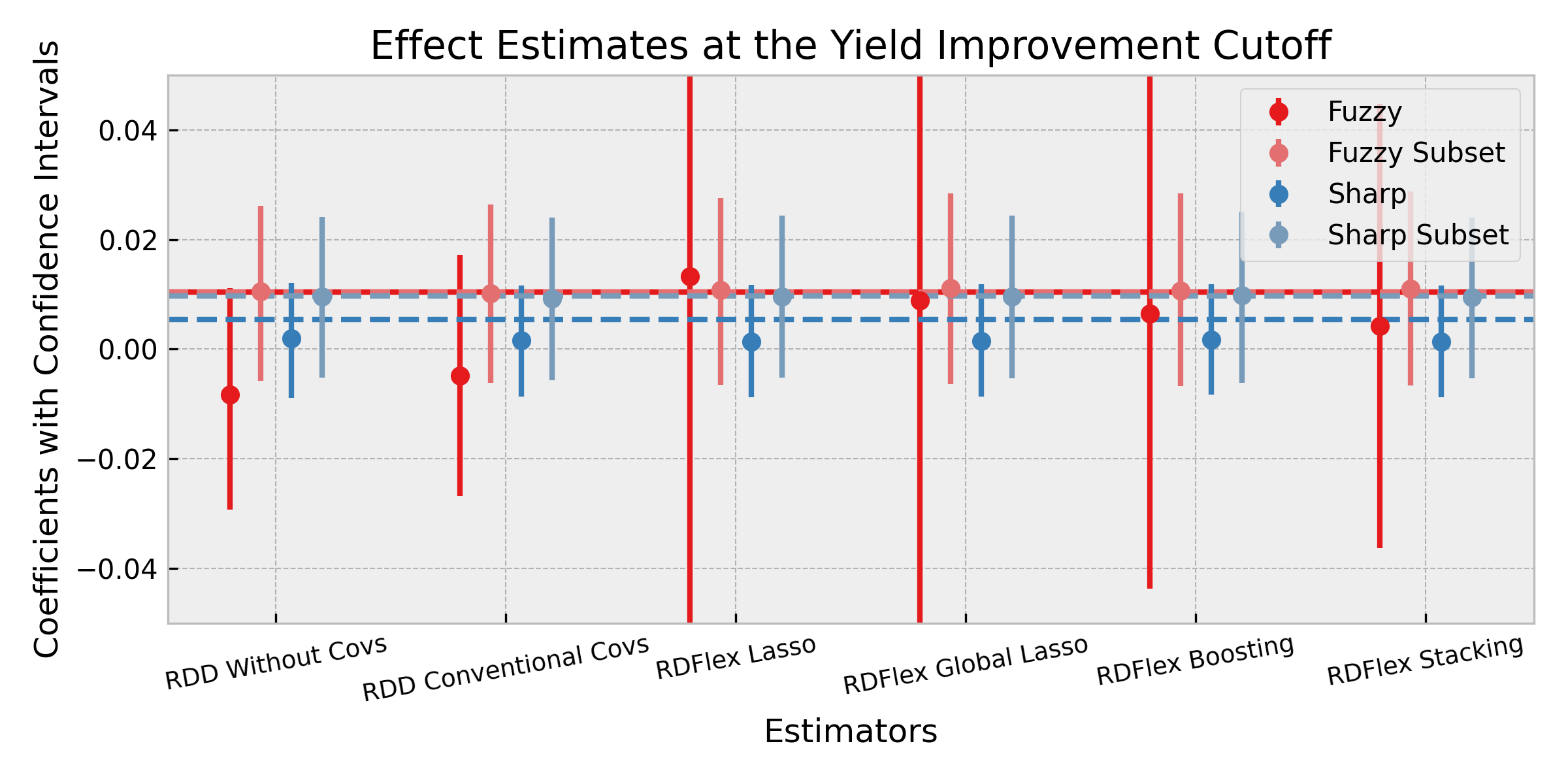}
        \caption{Results for fuzzy and sharp estimators at the cutoff \(c_Y\) both on the full sample and on the subset.}
        \label{fig:res_cautious_yie}
    \end{subfigure}
    \caption{
      Comparison of median estimates with median 95\% confidence intervals at \(c_D\) and \(c_Y\) on the simulated data with the ``cautious'' operator assumption.
      Different colors depict complier effect (fuzzy) as well as intent-to-treat effect (sharp).
      The subset estimator excludes nevertakers according to Theorem~\ref{thm:identification:none_change_drop}.
    }
    \label{fig:res_cautious_comparison}
\end{figure}

\begin{table}[htb]
\centering
\scalebox{0.5}{%
\begin{tabular}{llrrrrrrr}
\toprule
 &  & Mean Bias & s.e. & Coverage & RMSE left & Log loss left & RMSE right & Log loss right \\
setting & method &  &  &  &  &  &  &  \\
\midrule
\multirow[t]{6}{*}{Fuzzy} & RDD Conventional Covs & -0.0025 & 0.0111 & 0.9912 &  &  &  &  \\
 & RDD Without Covs & -0.0027 & 0.0109 & 0.9735 &  &  &  &  \\
 & RDFlex Boosting & -0.0025 & 0.0095 & 0.9912 & 0.0245 & 0.0532 & 0.0923 & 0.4996 \\
 & RDFlex Global Lasso & -0.0028 & 0.0097 & 0.9823 & 0.0399 & 0.0987 & 0.0897 & 0.6070 \\
 & RDFlex Lasso & -0.0030 & 0.0092 & 0.9823 & 0.0242 & 0.0787 & 0.0897 & 0.5645 \\
 & RDFlex Stacking & -0.0029 & 0.0092 & 0.9823 & 0.0243 & 0.0169 & 0.0898 & 0.4938 \\
\cline{1-9}
\multirow[t]{6}{*}{Fuzzy Subset} & RDD Conventional Covs & -0.0001 & 0.0117 & 0.9609 &  &  &  &  \\
 & RDD Without Covs & -0.0003 & 0.0118 & 0.9478 &  &  &  &  \\
 & RDFlex Boosting & -0.0009 & 0.0116 & 0.9652 & 0.0305 & 0.0741 & 0.1050 & 0.3786 \\
 & RDFlex Global Lasso & -0.0004 & 0.0111 & 0.9522 & 0.0259 & 0.1961 & 0.1012 & 0.3665 \\
 & RDFlex Lasso & -0.0002 & 0.0110 & 0.9435 & 0.0249 & 0.4230 & 0.1010 & 0.5140 \\
 & RDFlex Stacking & -0.0004 & 0.0113 & 0.9522 & 0.0258 & 0.0329 & 0.1011 & 0.3580 \\
\cline{1-9}
\multirow[t]{6}{*}{Sharp} & RDD Conventional Covs & -0.0040 & 0.0063 & 0.9120 &  &  &  &  \\
 & RDD Without Covs & -0.0040 & 0.0063 & 0.9240 &  &  &  &  \\
 & RDFlex Boosting & -0.0040 & 0.0056 & 0.9080 & 0.0247 &  & 0.0918 &  \\
 & RDFlex Global Lasso & -0.0040 & 0.0054 & 0.9240 & 0.0401 &  & 0.0888 &  \\
 & RDFlex Lasso & -0.0040 & 0.0055 & 0.9200 & 0.0242 &  & 0.0886 &  \\
 & RDFlex Stacking & -0.0040 & 0.0055 & 0.9240 & 0.0243 &  & 0.0888 &  \\
\cline{1-9}
\multirow[t]{6}{*}{Sharp Subset} & RDD Conventional Covs & -0.0014 & 0.0094 & 0.9680 &  &  &  &  \\
 & RDD Without Covs & -0.0014 & 0.0095 & 0.9600 &  &  &  &  \\
 & RDFlex Boosting & -0.0016 & 0.0085 & 0.9760 & 0.0280 &  & 0.1045 &  \\
 & RDFlex Global Lasso & -0.0014 & 0.0079 & 0.9640 & 0.0253 &  & 0.1006 &  \\
 & RDFlex Lasso & -0.0014 & 0.0079 & 0.9560 & 0.0246 &  & 0.1004 &  \\
 & RDFlex Stacking & -0.0015 & 0.0080 & 0.9600 & 0.0256 &  & 0.1006 &  \\
\cline{1-9}
\bottomrule
\end{tabular}

}
\caption{Mean bias, standard error, coverage, and first-stage prediction quality of different estimators in estimation at \(c_D\).}
\label{tab:cautious_dis}
\end{table}

\paragraph{Two-dimensional Estimators}
We also consider two-dimensional estimators proposed in previous literature.
We compare the ``binding-the-score'' approach, which normalizes and combines \(X_D\) and \(X_Y\) into a single score,
and the ``Euclidean distance'' approach, which computes the shortest distance of each observation to a predetermined point (here, a point on \(c_D\)).

\begin{figure}[htb]
\centering
\includegraphics[width=0.7\linewidth]{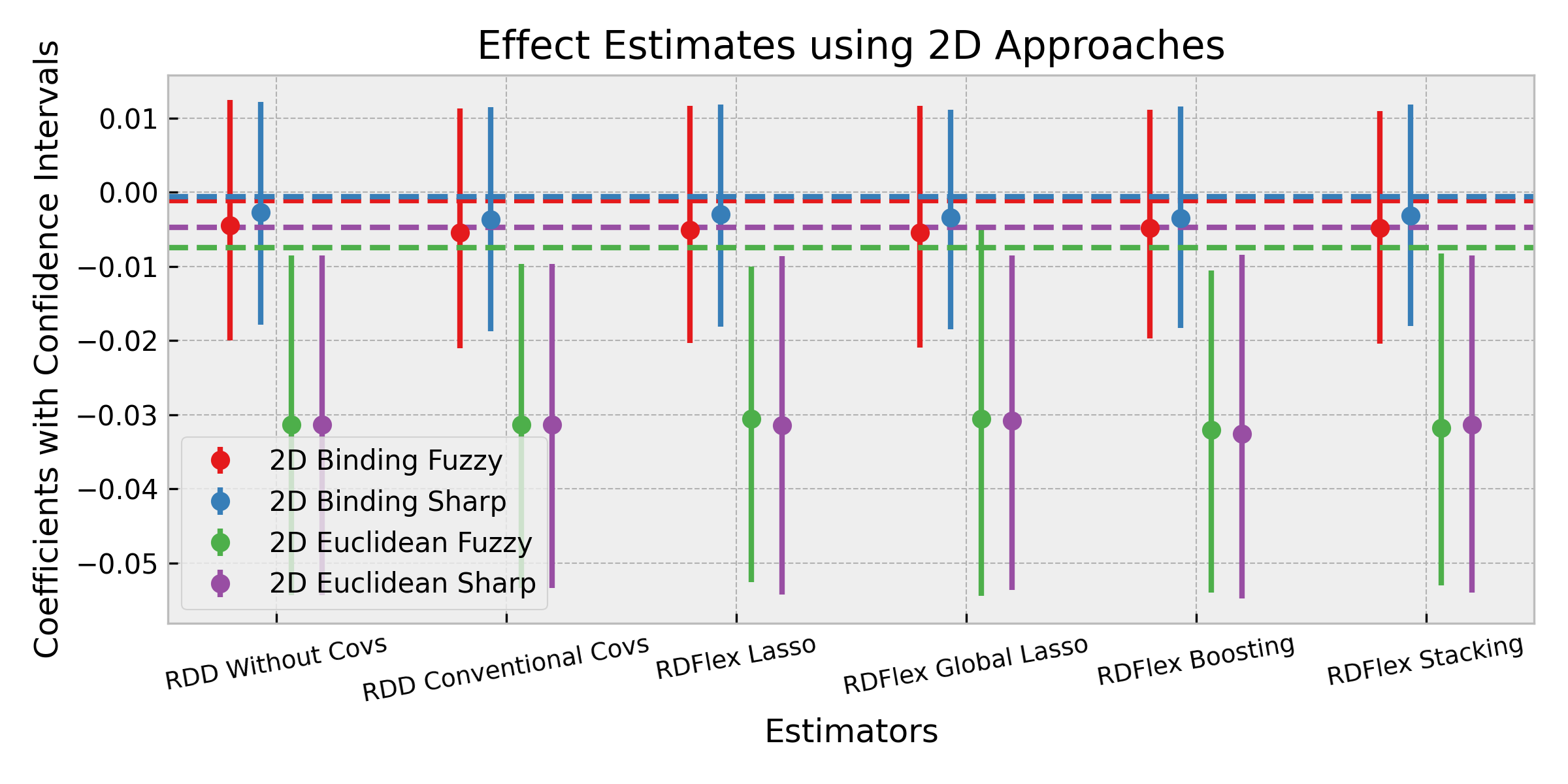}
\caption{Median estimates with median 95\% confidence intervals for estimators that use two dimensions at once.}
\label{fig:2dres}
\end{figure}

The results in Figure~\ref{fig:2dres} show oracles very close to zero, indicating that when collapsing both dimensions of the decision rule into one,
it is no longer possible to obtain insights into the individual rules.
Interestingly, the estimator based on the Euclidean distance exhibits a significant bias across all methods.

In our specific setting, there is no straightforward interpretation of the RDD effect estimate at a two-dimensional threshold,
which contrasts with previous applications where score components are more directly comparable, such as test scores or geographical distances.

\subsection{Acknowledging Operator}
We now compare the results from the previous section to those for the ``acknowledging'' operator,
who accepts the decision of the system without using additional domain knowledge.
In this setting, the data follow a sharp two-dimensional MRD.
The comparison of the effects at \(c_Y\) and \(c_D\) is shown in Figure~\ref{fig:res_ack_comparison}.

\begin{figure}[htb]
    \centering
    \begin{subfigure}[b]{0.48\linewidth}
        \includegraphics[width=\linewidth]{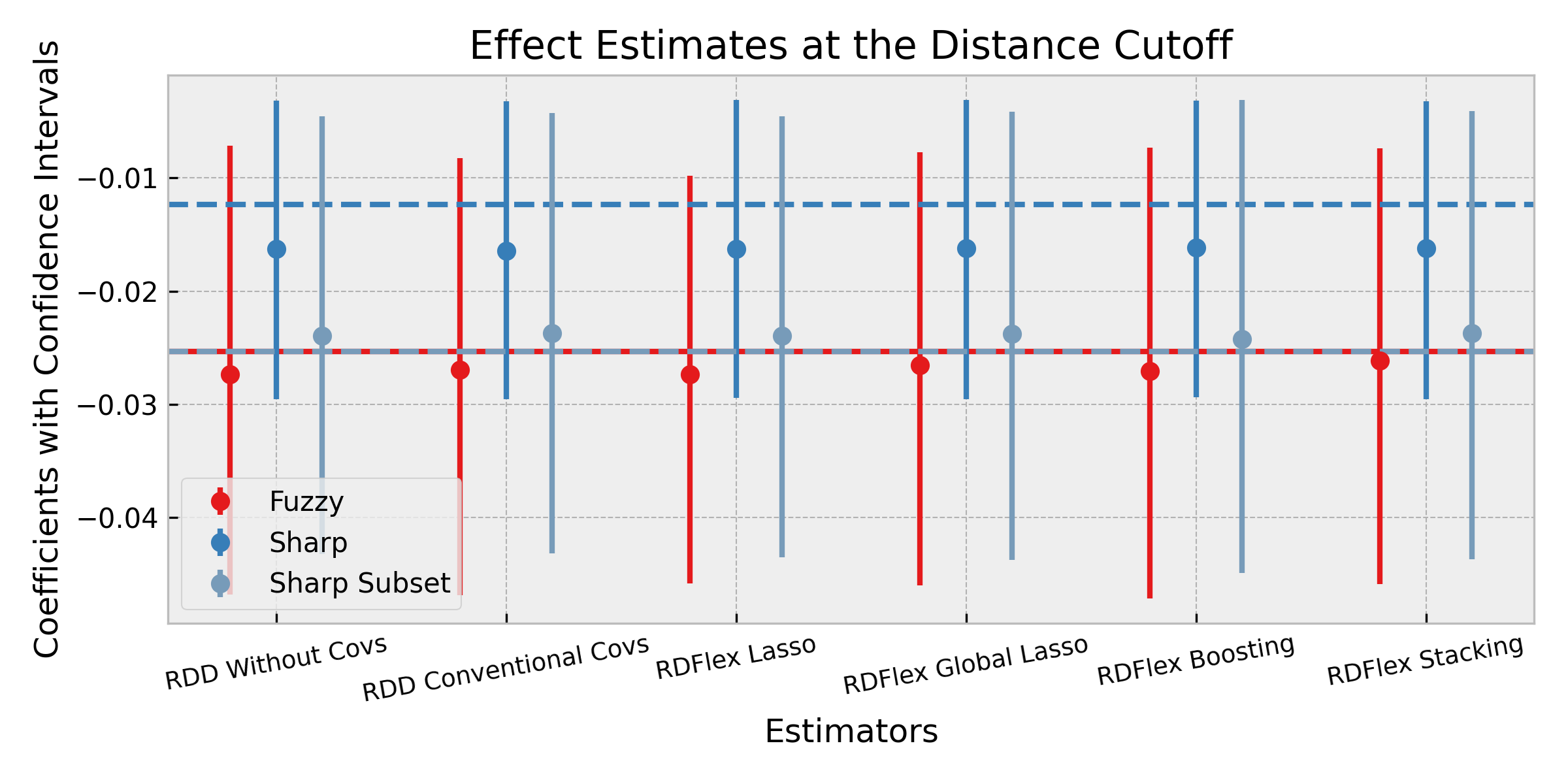}
        \caption{Results for fuzzy and sharp estimators at the cutoff \(c_D\) both on the full sample and on the subset.}
        \label{fig:res_ack_dis}
    \end{subfigure}
    \hspace{0.02\linewidth}
    \begin{subfigure}[b]{0.48\linewidth}
        \includegraphics[width=\linewidth]{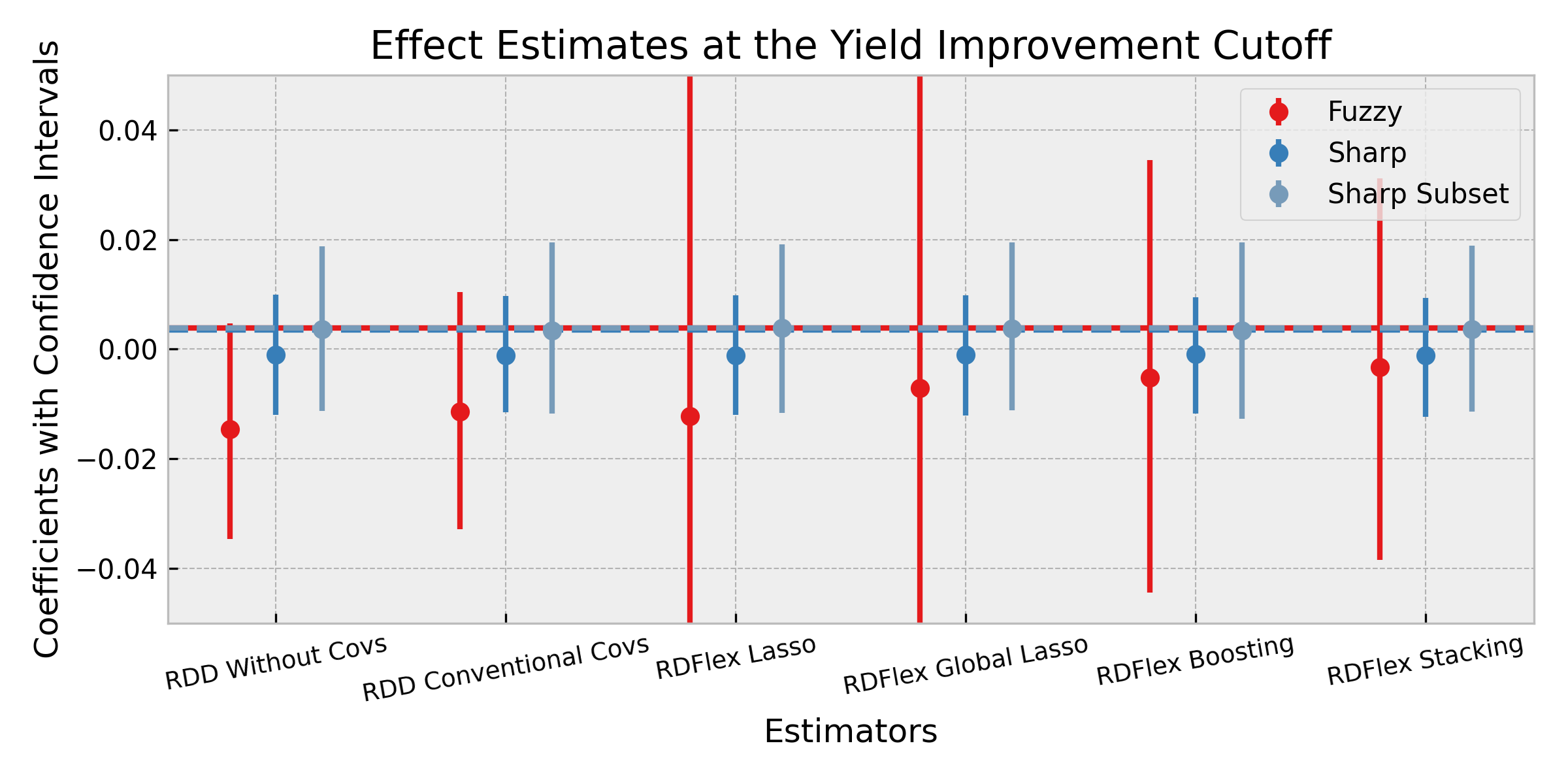}
        \caption{Results for fuzzy and sharp estimators at the cutoff \(c_Y\) both on the full sample and on the subset.}
        \label{fig:res_ack_yie}
    \end{subfigure}
    \caption{
      Comparison of median estimates with median 95\% confidence intervals at \(c_D\) and \(c_Y\) on the simulated data with the ``acknowledging'' operator assumption.
      Different colors depict complier effect (fuzzy) as well as intent-to-treat effect (sharp).
      The subset estimator excludes nevertakers according to Theorem~\ref{thm:identification:none_change_drop}.
    }
    \label{fig:res_ack_comparison}
\end{figure}

For \(c_D\), the estimated effects are smaller than under the cautious operator.
Lots that the cautious operator withheld from rework due to concerns about degradation are reworked in this case.
This leads to a larger overall negative effect and suggests that the threshold should be moved even further than in the cautious operator case.
The sharp subset estimator yields the same effect as the fuzzy estimator on the full data,
because subsetting according to \(I_Y\) produces a sharp design in which all individuals comply with threshold movements.
Across estimators, the subset versions display slightly smaller bias and standard errors.
In contrast, the sharp estimator on the full sample, which corresponds to an ITT estimator,
produces an effect closer to zero because it includes nevertakers.

At the cutoff \(c_Y\), the operator’s behavior has little impact on effect estimates.
This can be explained by the fact that relatively few individuals in the neighborhood of the cutoff are affected.

\section{Empirical Results}
This section presents results from estimation on the real data, which consist of \(n=9{,}103\) observations from the production system.
In addition to the covariates used above, we include shop-floor workload.
Our primary interest in the real system is evaluating the rule \(I_D\).

As shown in Table~\ref{tab:real_data}, we estimate a positive effect.
The subset estimates display lower variance because uncertainty from nevertakers (according to \(I_Y\)) is removed.
For the ITT estimators, ML adjustment improves variance, whereas for the fuzzy estimators no clear improvement is observed.

\begin{table}[htb]
    \centering
    \scalebox{0.5}{%
    \begin{tabular}{llrrrrrrrrr}
\toprule
             &                 &   Coef &   s.e. &  CI 2.5\% &  CI 97.5\% &  RMSE left &  Log loss left &  RMSE right &  Log loss right &  \% s.e. change \\
setting & method &        &        &          &           &            &                &             &                 &                 \\
\midrule
Fuzzy & RDD Conventional Covs & 0.1031 & 0.0508 &   0.0035 &    0.2028 &            &                &             &                 &         22.2532 \\
             & RDD Without Covs & 0.0387 & 0.0416 &  -0.0428 &    0.1202 &            &                &             &                 &          0.0000 \\
             & RDFlex Boosting & 0.0741 & 0.0524 &  -0.0400 &    0.1882 &     0.1432 &         0.0222 &      0.1594 &          0.6264 &         25.9991 \\
             & RDFlex Global Lasso & 0.0519 & 0.0461 &  -0.0484 &    0.1522 &     0.1597 &         0.0108 &      0.1510 &          0.6653 &         10.8590 \\
             & RDFlex Local Lasso & 0.0999 & 0.0470 &  -0.0004 &    0.2003 &     0.1452 &         0.0141 &      0.1523 &          0.6142 &         12.9249 \\
             & RDFlex Stacking & 0.1097 & 0.0500 &   0.0023 &    0.2171 &     0.1429 &         0.0188 &      0.1570 &          0.6322 &         20.1333 \\
\cline{1-9}
Fuzzy Subset & RDD Conventional Covs & 0.0630 & 0.0426 &  -0.0205 &    0.1466 &            &                &             &                 &         36.5324 \\
             & RDD Without Covs & 0.0369 & 0.0312 &  -0.0243 &    0.0981 &            &                &             &                 &          0.0000 \\
             & RDFlex Boosting & 0.0581 & 0.0332 &  -0.0163 &    0.1325 &     0.1423 &         0.0200 &      0.1562 &          0.5158 &          6.4133 \\
             & RDFlex Global Lasso & 0.0270 & 0.0375 &  -0.0537 &    0.1076 &     0.1548 &         0.0113 &      0.1520 &          0.5731 &         20.2628 \\
             & RDFlex Local Lasso & 0.0713 & 0.0304 &   0.0031 &    0.1395 &     0.1410 &         0.0209 &      0.1507 &          0.5424 &         -2.5667 \\
             & RDFlex Stacking & 0.0768 & 0.0319 &   0.0048 &    0.1487 &     0.1394 &         0.0121 &      0.1530 &          0.5018 &          2.1538 \\
\cline{1-9}
Sharp & RDD Conventional Covs & 0.0393 & 0.0208 &  -0.0015 &    0.0802 &            &                &             &                 &          2.2467 \\
             & RDD Without Covs & 0.0165 & 0.0204 &  -0.0235 &    0.0564 &            &                &             &                 &          0.0000 \\
             & RDFlex Boosting & 0.0311 & 0.0176 &  -0.0093 &    0.0715 &     0.1417 &                &      0.1575 &                 &        -13.5386 \\
             & RDFlex Global Lasso & 0.0200 & 0.0171 &  -0.0193 &    0.0593 &     0.1545 &                &      0.1512 &                 &        -16.3636 \\
             & RDFlex Local Lasso & 0.0357 & 0.0167 &  -0.0026 &    0.0740 &     0.1425 &                &      0.1513 &                 &        -17.9717 \\
             & RDFlex Stacking & 0.0367 & 0.0168 &  -0.0018 &    0.0753 &     0.1408 &                &      0.1530 &                 &        -17.5203 \\
\cline{1-9}
Sharp Subset & RDD Conventional Covs & 0.0362 & 0.0205 &  -0.0040 &    0.0764 &            &                &             &                 &         -2.3944 \\
             & RDD Without Covs & 0.0098 & 0.0210 &  -0.0314 &    0.0509 &            &                &             &                 &          0.0000 \\
             & RDFlex Boosting & 0.0222 & 0.0196 &  -0.0225 &    0.0669 &     0.1424 &                &      0.1590 &                 &         -6.6630 \\
             & RDFlex Global Lasso & 0.0102 & 0.0179 &  -0.0306 &    0.0509 &     0.1533 &                &      0.1518 &                 &        -14.7987 \\
             & RDFlex Local Lasso & 0.0364 & 0.0184 &  -0.0031 &    0.0758 &     0.1408 &                &      0.1525 &                 &        -12.2509 \\
             & RDFlex Stacking & 0.0307 & 0.0181 &  -0.0093 &    0.0708 &     0.1395 &                &      0.1540 &                 &        -14.0691 \\
\cline{1-9}
\bottomrule
\end{tabular}

    }
    \caption{Coefficient, standard error, and quality of fit for the different estimators for \(c_D\) in the real data.}
    \label{tab:real_data}
\end{table}

The comparison in Figure~\ref{fig:rdplots} highlights differences between the semi-synthetic and real data.
The semi-synthetic results (panel a) suggest that the rework cutoff is miscalibrated, leading to negative estimated effects,
whereas the real data (panel b) show a positive jump at the current cutoff, consistent with a beneficial rework threshold.

\begin{figure}[htb]
    \centering
    \begin{subfigure}[b]{0.4\linewidth}
        \includegraphics[width=\linewidth]{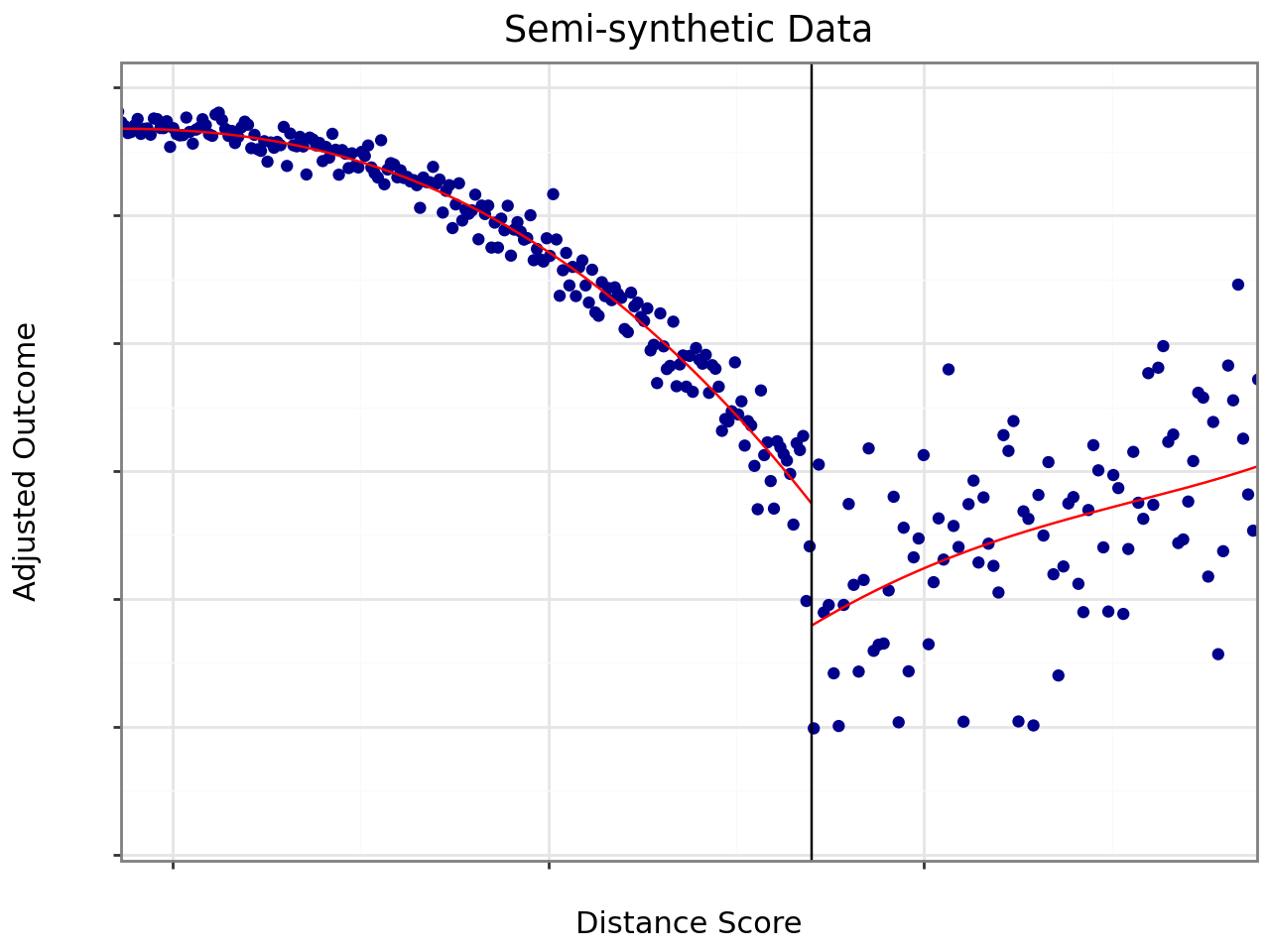}
        \caption{RDD plot for the semi-synthetic process.}
    \end{subfigure}
    \hspace{0.16\linewidth}
    \begin{subfigure}[b]{0.4\linewidth}
        \includegraphics[width=\linewidth]{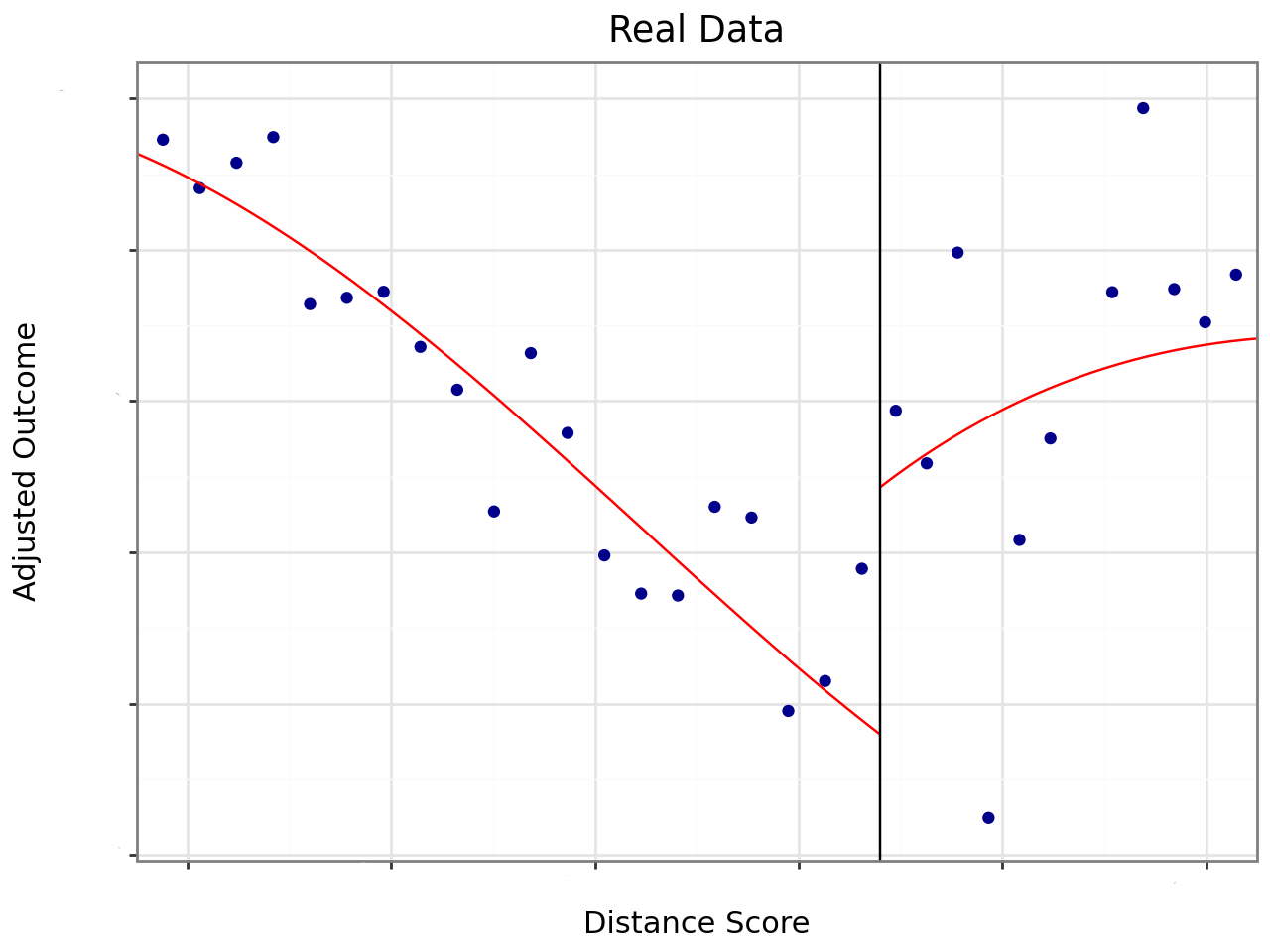}
        \caption{RDD plot for the real production data.}
    \end{subfigure}
    \caption{Comparison of local linear regressions around the cutoff for the semi-synthetic (a) and real (b) data. The opposite sign of the estimated effects can be explained by calibration of the semi-synthetic process.}
    \label{fig:rdplots}
\end{figure}

\paragraph{Validation Check.}
We validate the application results using a pseudo-cutoff test (see \cite{Cattaneo2019}).
We compare the estimated ITT effect at the true cutoff \(c_D\) with estimates at a hypothetical lower and higher cutoff.

As shown in Table~\ref{tab:validation}, both pseudo cutoffs produce effects close to zero with wide confidence intervals,
whereas the effect at the true cutoff is nearly significant at the 95\% level.

\begin{table}[H]
\centering
    \scalebox{0.6}{
\begin{tabular}{lrrr}
\toprule
       Cutoff &     Coef &  2.5 \% CI &  97.5 \% CI \\
\midrule
 lower pseudo & 0.002919 & -0.014796 &   0.020634 \\
         real & 0.039332 & -0.001524 &   0.080188 \\
higher pseudo & 0.041137 & -0.015333 &   0.097606 \\
\bottomrule
\end{tabular}}
\caption{Coefficient and confidence interval of the sharp real-data estimation at pseudo cutoffs for \(c_D\).}
\label{tab:validation}
\end{table}

\section{Conclusion}
This paper improves the applicability of regression discontinuity design (RDD) in operations and management.
The multi-score RDD framework, combined with modern machine-learning-based adjustment, provides new opportunities to evaluate complex decision settings and policies.
In settings with strict decision boundaries, policy-makers face little to no overlap in treatments near the cutoff, limiting the scope for ``what if'' reasoning.

Our theoretical insights rest on three observations.
First, analyzing general assignment rules requires extending the notion of ``binary compliance'' to the broader concept of ``compliance with an assignment mechanism.'' 
In principle, all possible inputs to the mechanism must be considered when assessing compliance.
Second, compliance is always defined relative to the rules under consideration.
Changing the assignment mechanism can change the observed behavior of units.
This relativity should be reflected in theory, as knowledge about specific parts of the mechanism or the final decision rule makes it possible to identify noncompliance.
Third, in multiple dimensions, the complier effect at the cutoff should not depend on the direction from which the cutoff is approached, a property that future estimation methods can exploit. 

Building on these observations, we generalized existing unit behavior types (compliers, nevertakers, alwaystakers) to multi-dimensional cutoff rules and introduced a new category of indecisive units.
We derived rules describing how behavior types evolve when cutoff rules are decomposed into subrules.
We also proved an identification result for the cutoff complier effect in MRD settings and established conditions under which this result remains valid when subsets of alwaystakers and nevertakers are removed, leading to the subset complier effect at the cutoff.

We validated our theoretical results using simulation studies with a semi-synthetic DGP and a real-world application in manufacturing.
These analyses show that subrule-specific estimation can improve decision-making.
In particular, the results confirm that removing identifiable nevertakers and/or alwaystakers makes effect estimation more precise, under conditions specified by the theory.
When combined with flexible ML-based adjustments, variance can be further reduced without introducing bias.
We recommend that practitioners estimate both the complier and the subset complier effect at the cutoff whenever nevertakers or alwaystakers can be identified in practice, and retrospectively validate the required assumptions.
Importantly, the empirical findings underline asymmetries in treatment effects across different cutoff dimensions: the effect of the overall assignment rule generally differs from the effects of its subrules.
Nevertheless, careful evaluation of each score component can provide valuable insights.

We expanded MRD theory by introducing tools to analyze complex cutoff rules and their corresponding subrules, expressed in terms of well-defined unit behavior types.
Although our empirical study specifically bridges econometric RDD methods and industrial process control, the framework is broadly applicable to other complex decision settings, such as multi-criteria loan approval in financial institutions or public health resource allocation where eligibility depends on several risk factors.
Future work could explore the relationship between main rules and subrules in greater depth, study the stability of unit types, or extend the multi-dimensional RDD to hierarchical decision structures across multiple stages.
Developing estimators tailored to the identification theorems also represents a promising direction.

\newpage

\appendix
\section{Proofs}\label{appendix:proofs}
\subsection{Unit categorization}
\lemconstsupp*
Before proving the above statement let us make some simple observations. Let
    \[S(T\,|\,X) \defeq \{ k \in \Nn_{\leq K} \,|\, \lambda \in \Rn \,:\, T_i(X\,|\,0) \neq T(X\,|\,0+\lambda e_k)\}\]
denote the set of local change directions of \(X\). Then one has the following properties:
\begin{enumerate}
  \item \(S(T\,|\,X)\) depends only upon the quadrant of \(X\) since within a quadrant the \(I_k\)`s and \(\overline{I}_k\)`s are constant.

  \item  \(S(T) = \bigcup_{X \in \Rn^K} S(T\,|\, X)\) 

    Since \(T(0 \,|\, c + \lambda e_k) \neq T(0 \,|\, c)\) is equivalent to \(T(-c \,|\, 0 + \lambda e_k) \neq T(-c \,|\, 0)\).

  \item \(S(T\,|\,X)\) can be empty.

    For example let \(T = (I_1 \land I_2) \lor I_3\) and \(X \in \Rn^3_{>0}\). Then \(S(T\,|\, X) = \emptyset\),
    since \(T\) is true as long as there are at least two indicators that are true.
    Changing the cutoff in only one direction \(k\) does not affect \(T(0 + \lambda e_k)\).
    This example can be extended such that changes in multiple directions do not affect \(T\).
\end{enumerate}
\begin{proof}
One direction is clearly trivial. For the other suppose \(T\) is not constant.
Then one has \(0 = T(X\,|\, 0) \neq T(\hat{X}\,|\,0) = 1\) for appropriate \(X,\, \hat{X} \in \Rn^K\).
Define the sequence
\[X^k \defeq X + \sum_{j=1}^k \sprodk{\hat{X}-X}{e_j}e_j \quad 0 \leq k \leq K\]
for \(0 \leq k \leq K\).
Then \(X^K = \hat{X}, \, X^0 = X\) and \(X^k - X^{k-1} = \lambda_k e_k\) with \(\lambda_k \defeq \sprodk{\hat{X}-X}{e_k}\).
We assume that \(S(T\,|\, X^k) = \emptyset\) for all \(k\).
Then
\[ T(X^{k-1}\,|\, 0) = T(X^{k-1}\,|\, 0 - \lambda_k e_k) = T(X^k \,|\, 0) \]
and thus per induction \(0 = T(X\,|\, 0) = T(\hat{X}\,|\, 0) = 1\). Which contradicts the assumption.
Thus, there exists some \(k\) with \(S(T\,|\, X^k) \neq \emptyset\).
Which shows that \(S(T) \neq \emptyset\).
\Halmos
\end{proof}
As a general assumption for the next statements we require that \(T\) is not trivial, that is \(\supp(T) \neq \{0\}\).
\propdecomposition*
\begin{proof}
% disjoint
Using equations \ref{eq:transf_cutoff} one can see that \(X + \lambda Y \in N^T\) for \(X, Y \in N^T\) and \(\lambda \in \Rn_{>0}\).
Further \(T(-X \,|\, c) = T(0 \,|\, c + X) = T(X \,|\, c + X) = T(X-X \,|\, c) = T(0 \,|\, c)\) for \(X \in N^T\) and \(c \in \Rn^K\).
Thus, \(N^T\) is a linear subspace of \(\Rn^K\).
Let \(k \in S(T)\) then there exists \(\lambda \in \Rn,\, c \in \Rn^K\) such that
\(T(0 \,|\, c + \lambda e_k) \neq T(0 \,|\, c)\).
This means that \(-\lambda e_k \notin N\). 
Since \(N^T\) is a linear space this means that \(e_k\) can also not be in \(N\).
Now let \(k \notin S(T)\). 
Then \(T(0 \,|\, c) = T(0 \,|\, c - e_k) = T(e_k \,|\, c)\) for all \(c \in \Rn^K\) and thus \(e_k \in N^T\).
This shows that \(\supp(T) \cap N^T = \{0\}\).
% orthogonal projection
Let \(X^T \defeq \sum_{k \in S(T)} \sprodk{X}{e_k} e_k\) be the projection on to the subspace \(\supp(T)\)
and \(X^{\perp T} \defeq X - X^T\).
Suppose there exists some \(c\) such that \(T(X^{\perp T} \,|\, c) \neq T(0 \,|\, c)\).
Let 
\[ c^k \defeq c - \hat{X} + \sum_{j=1}^k \lambda_j e_j \quad 0 \leq k \leq K\]
with \(\lambda_j \defeq \sprodk{X^{\perp T}}{e_j}\).
Then \(c^0 = c - X^{\perp T}, \, c^K = c\) and \(c^k = c^{k-1} + \lambda_k e_k\) holds.
Further, one has
\[T(0\,|\, c^k) = T(0 \,|\, c^{k-1} + \lambda_k e_k) = T(0 \,|\, c^{k-1})\]
for all \(1 \leq k \leq K\).
Otherwise, \(k \subset S(T)\) and thus \(\lambda_k = 0\) by definition of \(X^{\perp T}\).
Using induction we derive \(T(X^{\perp T} \,|\, c) = T(0\,|\, c^0) = T(0 \,|\, c^K)\) which contradicts the assumption.
For uniqueness suppose there exists another decomposition \(X = Z^T + Z^{\perp T}\) with the above properties.
Then \(Z^T - X^T = Z^{\perp T} - X^{\perp T} \in \supp(T) \cap N^T = \{0 \}\).
\Halmos
\end{proof}
\proequivcutoffrules*
\begin{proof}
Note that \(T_i(c) = T_i(X_i \,|\, c) = T(X_i^T \,|\, c) = T(0 \,|\, c - X_i^T) = T(0\,|\, \hat{c})\)
and \(D_i(c) = D(X_i - c \,|\, 0) = D(\hat{X}_i - \hat{c})\) with \(\hat{c} \defeq c - X_i^T \in \supp(T)\)
by applying equation (\ref{eq:transf_cutoff}) and Proposition \ref{prop:decomposition}.
\Halmos
\end{proof}
\propdisjoint*
\begin{proof}
It is easy to see that \(\alwayst(T, D) \cap \nevert(T,D) = \emptyset\)
and \(\comp(T, D) \cap \defier(T, D) = \emptyset\).
For \(i \in C \defeq \nevert(T, D) \cup \alwayst(T,D)\)
it follows that \(D(X_i^{\perp T} - c)\) is constant for all \(c \in \supp(T)\).
Suppose that \(i \in C \cap \comp(T, D)\).
The latter would mean, that \(T(0 \,|\, c)  = D(X_i^{\perp T} - c)\) for all \(c \in \supp(T)\).
This contradicts \(\supp(T) \neq \{0\}\).
Now suppose that \(i \in C \cap \defier(T, D)\). 
The latter would mean that \(T(0 \,|\, c) \neq D(X_i^{\perp T}  - c)\) for all \(c \in \supp(T)\),
which implies that \(T\) is constant on \(\supp(T)\).
This again contradicts \(\supp(T) \neq \{0\}\).
\Halmos
\end{proof}
From now on whenever we assume that \(D\) is a cutoff rule 
we suppose that \(D\) does not depend on \(I_{k,i}(c)\) with \(c \neq 0\),
that is \(T\) and \(D\) are synchronous regarding their cutoffs.
\propexhaustive*
\begin{proof}
% (exhaustive --> dim supp = 1)   <-> (dim supp > 1 --> not exhaustive)
To show the first implication we suppose that \(\dim(\supp(T)) > 1\) 
and construct a \(D\) such that \(\indes(T, \, D) \neq \emptyset\).
Since \(\dim(\supp(T)) > 1\) here exist \(l, k \in S(T)\) with \(l \neq k\).
Define 
\[ 
    D = \left(T \land I_k \land I_l\right)
      \lor \left(\overline{T} \land \overline{I_k} \land I_l\right)
      \lor \left(I_k \land \overline{I_l}\right)
\]
and let \(X_i\) be such that \(X_{k,i} = 0 = X_{l,i} \).
Then \(D(X_i \,|\, 0) = 0\) and \(D(X_i \,|\, e_k) = 1\).
This shows that \(i \notin \alwayst(T,\,D) \cup \nevert(T,\,D)\) since \(0, \, e_k \in \supp(T)\).
Further
\[ D(X_i \,|\, e_k + e_l) = T(X_i \,|\, e_k + e_l) \]
and
\[ D(X_i \,|\, e_l) = \overline{T}(X_i \,|\, \hat{c}) \neq T(X_i \,|\, e_l) .\]
Thus, \(i \in \indes(T,\,D)\) since \(e_l, \, e_k \in \supp(T)\).

% dim supp = 1 --> exhaustive
For the other implication we suppose that \(\dim(\supp(T)) = 1\).
Since \(\supp(T) \simeq \Rn\) one has \(T_i = I_{k,i}\) with \(k \in S(T)\).
Further, suppose \(i\) is in neither of the mentioned sets. Then 
\(T_i(c) \neq D_i(c)\) and \(T_i(\hat{c}) = D_i(\hat{c})\) for some \(c, \, \hat{c} \in \supp(T)\).
Thus, \(\hat{c} = c + \lambda e_k\) for appropriate \(\lambda \in \Rn\).
If \(T_i(c) = T_i(\hat{c})\) then \(D_i(c) \neq D_i(\hat{c})\). 
Since \(D\) is a cutoff rule and \(I_{i,j}(c) = I_{i,j}(\hat{c})\) for 
\(j \neq k\) we conclude \(T_i(c) = I_{k,i}(c) \neq I_{k,i}(\hat{c}) = T_i(\hat{c})\).
This contradicts the assumption.
If otherwise \(T_i(c) \neq T_i(\hat{c})\) one has \(D_i(c) = D_i(\hat{c})\).
Which would imply that \(D_i(c)\) is constant for \(c \in \supp(T)\),
and thus \(i \in \alwayst(T, D) \cup \nevert(T, D)\).
\Halmos
\end{proof}
\subsection{Examples}\label{appendix:examples}
\paragraph{Example (OR-Rules)}\label{ex:or_rule}
Let \(D \defeq \bigvee_{j=1}^K I_{j}\) and \(T \defeq \bigvee_{j=1}^k I_{j} \) for \(k \in \Nn_{< K}\).
Then \(\supp(T) = \Rn^k \times \{0\}^{K-k}\).
The potential outcomes reduce to \(Y_i  = Y_i(0, 0)(1-D_i) + Y_i(0, 1)D_i(1-T_i) + Y(1, 1)T_i\).
Further we have the following unit categorizations:
\begin{enumerate}
  \item \(\comp(T,\, D) = \{i \,|\,  X_i \in \Rn^k \times \Rn^{K-k}_{\leq 0} \}\)

    Note that in this case the additional or-conditions are zero, reducing
    the rule \(D_i\) to \(T_i\).

  \item \(\alwayst(T,\, D) = \{ i \,|\, \exists k < j \leq K \, : \,  X_{j,i} > 0 \} = \{i \,|\, X_i \in \Rn^k \times (\Rn^{K-k} \setminus \Rn^{K-k}_{\leq 0}) \} \)

    As long as there is any additional or condition that is always true, cutoff changes in \(\supp(T)\)
    do not affect \(D\).

  \item \(\nevert(T,\, D) = \emptyset\)

    Choose \(c \in \supp(T)\) such that \(X_{i,j} > c\) for \(0 \leq j \leq k\), then \(D_i = 1\).

  \item \(\defier(T, \, D) = \emptyset\)
    
    Note that \(T_i(c) = 1\) implies \(D_i(c) = 1\) for all \(c \in \supp(T)\).

\end{enumerate}
\paragraph{Example (XOR dominant rule)}
Let \(D \defeq (I_1 \lor I_2) \land (\overline{I}_1 \lor \overline{I}_2) \) 
and \(T \defeq I_1\).
Then \(\supp(T) = \Rn \times \{0\}\).
The potential outcomes formulation does not simplify much:
\[
  Y_i  = Y_i(0, 0)(1-T_i)(1-I_{2,i})
  + Y_i(1, 0) T_i I_{2,i} 
  + Y_i(0, 1)(1-T_i) I_{2,i}
  + Y(1, 1)T_i(1-I_{2,i})
\]
Note if \(X_{2,i} > 0\), one has \(D_i(c) = \overline{T}_i(c)\)
and otherwise \(D_i(c) = T_i(c)\) for all \(c \in \supp(T)\).
Thus:
\begin{enumerate}
  \item \(\comp(T,\, D) = \{i \,|\,  X_{2,i} \leq 0 \}\)

  \item \(\alwayst(T,\, D) = \emptyset\)

  \item \(\nevert(T,\, D) = \emptyset\)

  \item \(\defier(T, \, D) = \{i \,|\, X_{2,i} > 0\}\)
\end{enumerate}

\paragraph{Nevertakers, Compliers, Alwaystakers}
We now investigate potential setup for cutoff rules that admit compliers, alwaystakers and nevertakers.
In accordance with the empirical part let \(T \defeq I_1 \square I_2\) for \(\square \in \{\lor, \land\}\) and \(G = I_k\) for \(k = 1, 2\) and further let \(D\) be a cutoff rule.
Thus, only the number of additional variables in \(D\) limit the possible unit categories.
That is, with one additional cutoff variable \(I_3\) one 
could realize each tuple of the different item types (e.g. defier and nevertakers).
A common assumption in causal literature is that there are no defying or indecisive units. 
Thus, to have a setting with each of the remaining three categories one would require 
one more variable \(I_4\).
This leaves a single choice open (nevertakers, alwaystakers or compliers)
for the remaining region in \(\Rn^4\).
As in case of the cautious operator we decide to opt for compliance.
Thus, we arrive at
\[ D = (T \land I_3) \lor (\overline{I}_3 \land I_4)\]
for a model of the actual treatment decision with:
\begin{align*}
  \comp(T, D) &= \{ i \,|\, X_{3,i} > 0 \} \\
  \alwayst(T, D) &= \{ i \,|\, X_{3,i} \leq 0, X_{4,i} > 0 \} \\
  \nevert(T, D) &= \{ i \,|\, X_{3,i} \leq 0, X_{4,i} \leq 0 \} \\
  \defier(T, D) &= \emptyset 
\end{align*}
For the case \(\square = \land\) and \(G \defeq I_1\) one has 
\begin{align*}
  \comp(G, D) &= \{ i \,|\, X_{3,i} > 0, X_{2, i} > 0 \} \\
  \alwayst(G, D) &= \{ i \,|\, X_{3,i} \leq 0, X_{4,i} > 0 \} \\
  \nevert(G, D) &= \{ i \,|\, X_{3,i} \leq 0, X_{4,i} \leq 0 \}
                  \cup \{ i \,|\, X_{2,i} \leq 0, X_{4,i} \leq 0 \}
                  \cup \{ i \,|\, X_{2,i} \leq 0, X_{3,i} > 0 \} \\
  \defier(G, D) &= \emptyset 
\end{align*}
That is \(X_4\) could be ignored for estimating the complier effect of \(G\) due to our specific choice,
whereas \(X_3\) and \(X_2\) are the switches required to identify \(\comp(G, D)\) completely.
Again \(X_2\) might be known from \(T\), which helps to reduce the observational dataset accordingly.

\subsection{Inheritance}\label{appendix:general_principles}
The following lemma is the basis for duality statements involving the unit categories.
\begin{lemma}\label{lem:supp_negate}
  The decomposition in Proposition \ref{prop:decomposition} is stable under negation,
  that is
  \[ X^T_i = X^{\overline{T}}_i \mbox{ and } X_i^{\perp T} = X_i^{\perp \overline{T}}\]
  In particular
  \(\supp(T) = \supp(\overline{T})\) and \(N^T = N^{\overline{T}}\).
\end{lemma}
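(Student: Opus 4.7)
The plan is to establish the two set equalities $\supp(T) = \supp(\overline{T})$ and $N^T = N^{\overline{T}}$ first, and then derive the stability of the decomposition as a direct consequence of Proposition \ref{prop:decomposition}.

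First I would show $S(T) = S(\overline{T})$. The defining condition for $k \in S(T)$ is the existence of some $c \in \Rn^K$ and $\lambda \in \Rn$ with $T(0 \,|\, c + \lambda e_k) \neq T(0 \,|\, c)$. Since negation $b \mapsto 1-b$ is a bijection on $\{0,1\}$, we have $T(0\,|\,c') \neq T(0\,|\,c)$ if and only if $\overline{T}(0\,|\,c') \neq \overline{T}(0\,|\,c)$. Hence the defining condition is symmetric under negation, so $S(T) = S(\overline{T})$, and by the definition of $\supp$ as the linear span of the standard basis vectors indexed by $S(T)$, we conclude $\supp(T) = \supp(\overline{T})$.

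Next I would show $N^T = N^{\overline{T}}$. By definition, $X \in N^T$ iff $T(X \,|\, c) = T(0 \,|\, c)$ for all $c \in \Rn^K$. Applying negation to both sides preserves this equation, so this is equivalent to $\overline{T}(X \,|\, c) = \overline{T}(0 \,|\, c)$ for all $c$, i.e.\ $X \in N^{\overline{T}}$.

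Finally, for the stability of the decomposition, Proposition \ref{prop:decomposition} gives the unique decomposition $X = X^T + X^{\perp T}$ with $X^T \in \supp(T)$ and $X^{\perp T} \in N^T$, realized concretely by the orthogonal projection $P_T(X) = \sum_{k \in S(T)} \sprodk{X}{e_k} e_k$ onto $\supp(T)$. Since $S(T) = S(\overline{T})$, we have $P_T = P_{\overline{T}}$, so $X^T = X^{\overline{T}}$, and subtracting gives $X^{\perp T} = X^{\perp \overline{T}}$. The main obstacle is essentially notational bookkeeping: one just needs to verify that every defining condition used in Proposition \ref{prop:decomposition} and in the definitions of $S(\cdot)$ and $N^{(\cdot)}$ is invariant under postcomposition with negation, which follows from the bijectivity of negation on $\{0,1\}$.
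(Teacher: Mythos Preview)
Your proof is correct and follows essentially the same approach as the paper: first establish $S(T)=S(\overline{T})$ and $N^T=N^{\overline{T}}$ via invariance of (in)equalities under negation, then deduce stability of the decomposition from Proposition~\ref{prop:decomposition}. The only cosmetic difference is that the paper finishes by invoking uniqueness of the direct-sum decomposition (observing $X_i^T - X_i^{\overline{T}} = X_i^{\perp \overline{T}} - X_i^{\perp T} \in \supp(T)\cap N^T = \{0\}$), whereas you use the explicit projection formula $P_T = P_{\overline{T}}$; both are equally valid.
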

\begin{proof}
  If \(T(0\,|\, c) \neq T(0\,|\, c + \lambda e_k)\) holds,
  the same is true for \(\overline{T}\) and vice versa.
  This shows that \(S(T) = S(\overline{T})\) and \(\supp(T) = \supp(\overline{T})\).
  Now let \(X \in \Rn^K\) such that \(T(X \,|\, c) = T(X \,|\, 0)\) for all \(c \in \Rn^K\).
  Then the same equality holds for \(\overline{T}\) and vice versa.
  Thus, \(N^T = N^{\overline{T}}\).
  Using that the decomposition in Proposition \ref{prop:decomposition} is a direct sum 
  of linear spaces we get
  \[
      X_i^T - X_i^{\overline{T}} 
      = X_i^{\perp \overline{T}} - X_i^{\perp T} \in \supp(T) \cap N^T
      = \{0\}
  \]
  which concludes the proof.
  \Halmos
\end{proof}
\propdualities*
\begin{proof}
Using Lemma \ref{lem:supp_negate} we have
\( D(X_i^{\perp T} - c) = D(X_i^{\perp \overline{T}} - c)\)
for \(X_i \in \Rn^K\) and \(c \in \supp(T) = \supp(\overline{T})\).
The equalities in of Definition \ref{def:unit_categories} are easily checked.
\Halmos
\end{proof}
\propboundatnt*
\begin{proof}
One has \((X_i^{\perp G})^{\perp T} = X_i^{\perp T}\).
To see this note that
\(X_i = (X_i^{\perp G})^T + X_i^G + (X_i^{\perp G})^{\perp T}\) with
\((X_i^{\perp G})^T + X_i^G \in \supp(T)\).
Then the claim follows by the uniqueness of the decomposition according to Proposition \ref{prop:decomposition} .
With this one obtains
\begin{align*}
    D(X_i^{\perp G} - c^G) = D\left((X_i^{\perp G})^{\perp T} + (X_i^{\perp G})^T - c^G\right) 
                           = D(X_i^{\perp T} - c)
\end{align*}
with \(c = c^G - (X_i^{\perp G})^T \in \supp(T)\)
which concludes the proof.
\Halmos
\end{proof}
\begin{lemma}\label{lem:supp_operation}
  Let \(G, \, H\) be cutoff rules on \(\Rn^K\) with \(\supp(T) = \supp(G) \oplus \supp(H)\).
  Then \(N^G = \supp(H) \oplus N^T\).
\end{lemma}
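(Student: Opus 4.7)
The plan is to reduce the claim to a combinatorial identity on the index sets $S(\cdot)$, using the fact that every support in this paper is a \emph{coordinate} subspace of $\Rn^K$, i.e.\ spanned by a subset of the standard basis $e_1,\dots,e_K$.

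First I would record the structural identity $N^G = \mathrm{span}\{e_j : j \notin S(G)\}$, and similarly for $N^T$. This follows directly from Proposition \ref{prop:decomposition}: the projection formula $P_G(X)=\sum_{k\in S(G)}\sprodk{X}{e_k}e_k$ shows that $X\in N^G$ iff $\sprodk{X}{e_k}=0$ for every $k\in S(G)$, so the standard basis vectors $e_j$ with $j\notin S(G)$ form a basis of $N^G$.

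Next I would extract the key combinatorial fact from the hypothesis $\supp(T)=\supp(G)\oplus\supp(H)$. Because each of $\supp(G)$, $\supp(H)$, and $\supp(T)$ equals the span of the corresponding $e_k$'s, the direct-sum hypothesis forces $S(G)\cap S(H)=\emptyset$ (otherwise a shared $e_k$ would lie in $\supp(G)\cap\supp(H)$, contradicting the triviality of this intersection). A uniqueness-of-coordinates argument then gives $S(T)=S(G)\sqcup S(H)$: every $e_k$ with $k\in S(T)$ lies in $\supp(G)\oplus\supp(H)$ and by uniqueness of the standard-basis expansion must actually belong to one of the two summands, while the reverse inclusion is immediate.

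From here the lemma is a one-line coordinate bookkeeping check:
\[
\{1,\dots,K\}\setminus S(G) \;=\; S(H)\,\sqcup\,\bigl(\{1,\dots,K\}\setminus S(T)\bigr),
\]
using $S(G)\sqcup S(H)=S(T)$. Taking spans on both sides yields
\[
N^G \;=\; \mathrm{span}\{e_j : j\notin S(G)\} \;=\; \supp(H)\,\oplus\,N^T,
\]
where the direct sum on the right is legitimate because $\supp(H)\cap N^T\subseteq\supp(T)\cap N^T=\{0\}$ by Proposition \ref{prop:decomposition}.

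The main (and really only) obstacle is ensuring that the passage from the abstract direct-sum hypothesis on supports to the disjoint-union statement $S(T)=S(G)\sqcup S(H)$ is airtight; this rests entirely on the fact that all supports in question are coordinate subspaces, so that uniqueness of expansion in $e_1,\dots,e_K$ can be invoked. Once this point is made, the remainder is pure index manipulation.
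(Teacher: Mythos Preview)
Your argument is correct and genuinely different from the paper's. You first identify $N^G$ explicitly as the coordinate subspace $\mathrm{span}\{e_j: j\notin S(G)\}$ (a legitimate consequence of the uniqueness part of Proposition~\ref{prop:decomposition}), then reduce the direct-sum hypothesis $\supp(T)=\supp(G)\oplus\supp(H)$ to the combinatorial statement $S(T)=S(G)\sqcup S(H)$, after which the lemma becomes the set identity $\{1,\dots,K\}\setminus S(G)=S(H)\sqcup(\{1,\dots,K\}\setminus S(T))$ read off by taking spans. The paper instead argues more abstractly at the subspace level: it first checks $\supp(H)\cap N^T=\{0\}$, then shows the inclusions $N^T\subset N^G$ and $\supp(H)\subset N^G$ separately (using $\supp(G)\subset\supp(T)$ and $\supp(G)\cap\supp(H)=\{0\}$), and for the reverse inclusion picks $X\in N^G$, decomposes it twice via Proposition~\ref{prop:decomposition} as $X=X^G+X^H+X^{\perp T}$, and forces $X^G=0$ from $X^G\in\supp(G)\cap N^G$. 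Your route is shorter and more transparent because it exploits head-on that every support here is a coordinate subspace, turning the whole lemma into index bookkeeping; the paper's route stays closer to the coordinate-free decomposition language used elsewhere and would transfer more directly if the supports were not axis-aligned, but in the present setting your proof is the cleaner of the two.
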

\begin{proof}
Let \(X \in \supp(H) \cap N^T = \{0\}\), then
\(X \in \supp(T)\) since \(\supp(H) \subset \supp(T)\) and thus \(X \in N^T \cap \supp(T) = \{0\}\).
This shows that the sum on the right side is indeed a direct sum, that is \(\supp(H) + N^T = \supp(H) \oplus N^T\).
Employing Proposition \ref{prop:decomposition} one has \(N^G \oplus \supp(G) = \Rn^K\). 
Using this and \(\supp(G) \subset \supp(T)\) it follows that \(N^T \subset N^G\).
Also note that \(\supp(H) \cap \supp(G) = \{0\}\) and thus \(\supp(H) \subset N^G\).
This shows \(\supp(H) \oplus N^T \subset N^G\) since \(N^G\) is a linear space.
Now let \(X \in N^G\). With Proposition \ref{prop:decomposition} one has 
\(X = X^{\perp T} = X^T + X^{\perp T}\) with \(X^T \in \supp(T)\) and \(X^{\perp T} \in N^T \subset N^G\)
and further \(X^T = X^G + X^H\) with \(X^G \in \supp(G)\) and \(X^H \in \supp(H) \subset N^G\).
Thus, \(X^G = X - X^H - X^{\perp T} \in \supp(G) \cap N^G\) and with this
\(X =  X^H + X^{\perp T} \in \supp(H) \oplus N^T\).
\Halmos
\end{proof}
\propoperationsimple*

\begin{proof}
We only show the case \(T = G \land H\) since the other follows in the same way.
Let \(i \in \comp(G, T)\). This is equivalent to 
\[G(X_i \,|\, c^G) = G(X_i \,|\, c^G) \land H(X_i \,|\, c^G) = G(X_i \,|\, c^G) \land H(X_i^H \,|\, 0) \quad \forall \, c^G \in \supp(G) \]
where the last equality follows from \(\supp(G) \subset N^H\)
(Proposition \ref{prop:decomposition}).
Since the support of \(G\) is non-trivial this is equivalent to \(H(X_i^H \,|\, 0) = 1\).
Now let \(i \in \nevert(G, T)\).
This is equivalent to:
\[0 = G(X_i \,|\, c^G) \land H(X_i^H \,|\, 0) \quad \forall \, c^G \in \supp(G) \]
Again since the support of \(G\) is non-trivial this is equivalent to \(H(X_i^H \,|\, 0) = 0\).
\Halmos
\end{proof}
\propboundcompdef*
\begin{proof}
Let \(T = G \land H\) and \(X_i \in \comp(T, D)\) then
\[
  G(0 \,|\, c^G) \land H(0 \,|\, c^H) = 
  G(0\,|\, c) \land H(0 \,|\, c) = 
  T(0 \,|\, c) = 
  D(X_i^{\perp T} - c)
\]
for all \(c = c^G + c^H \in \supp(G) \oplus \supp(H) = \supp(T)\).
The first equality follows from \(c^H \in N^G\) and \(c^G \in N^H\)
using Lemma \ref{lem:supp_operation}.
The above also holds for \(c = c^G - X^H_i\).
Further \(X_i^{\perp G} = X_i^H + X_i^{\perp T}\) according to Lemma \ref{lem:supp_operation}
and with this
\[ G(0 \,|\, c^G) \land H(X_i^H \,|\, 0) = D(X^{\perp G}_i - c^G) \]
for all \(c^G \in \supp(G)\).
If \(H(X_i^H \,|\, 0) = 1\) then it follows that \(i \in \comp(G, D)\).
Otherwise, \(D(X_i^{\perp G} - c^G) = 0\)
for all \(c^G \in \supp(G)\). Since \(\{0\} \neq \supp(G)\) one has \(i \in \nevert(G, D)\).
In summary, using Lemma \ref{prop:operation_simple} one derives
\begin{align*}
  \comp(G, T) \cap \comp(T, D) &\subset \comp(G, D) \\
  \nevert(G, T) \cap \comp(T, D) &\subset \nevert(G, D)
\end{align*}
and with it:
\[ \comp(T, D) \subset \comp(G, D) \cup \nevert(G, D) \]
Using the same argument one derives the analog result for \(T = G \lor H\).
\Halmos
\end{proof}
%
%
%\propandchar*
% TODO
%\begin{proof}{Proof.}
%  Define \(H(X \,|\ c) \defeq T(X^{\perp G} \,|\, c^{\perp G})\).
%
%  Then \(\supp(H) = N^G\).
%  Since for \(c^G \in \supp(G)\) one has \((c^G)^{\perp G} = 0\) and thus
%  \[H(X \,|\, c^G) = T(X^{\perp G} \,|\, 0) = H(X \,|\, 0)\]
%
%  Let \(i \in \nevert(G, \, D)\) and \(c \in \Rn^K\) then either
%  \(X_i - c^{\perp G} \in \comp(G,\, T)\)
%  or \(X_i - c^{\perp G} \in \nevert(G,\, T)\)
%\end{proof}

\section{Additional Numerical Results}\label{app:addres}
In this appendix we present additional results and supplementary material to the simulation results in Section \ref{sec:sim}.

\subsection{Simulation for Cautious Operator}
\paragraph{Cutoff $c_Y$}
\begin{table}[H]
\centering
\scalebox{0.5}{%
\begin{tabular}{llrrrrrrr}
\toprule
 &  & Mean Bias & s.e. & Coverage & RMSE left & Log loss left & RMSE right & Log loss right \\
setting & method &  &  &  &  &  &  &  \\
\midrule
\multirow[t]{6}{*}{Fuzzy} & RDD Conventional Covs & -0.0158 & 0.0124 & 0.6867 &  &  &  &  \\
 & RDD Without Covs & -0.0189 & 0.0117 & 0.5200 &  &  &  &  \\
 & RDFlex Boosting & 0.2453 & 7.8523 & 0.9467 & 0.0268 & 0.0313 & 0.0741 & 0.4072 \\
 & RDFlex Global Lasso & 0.1593 & 9.1373 & 0.9800 & 0.0374 & 0.0132 & 0.0707 & 0.5320 \\
 & RDFlex Lasso & -13.0991 & 690.4369 & 0.9600 & 0.0275 & 0.0202 & 0.0707 & 0.4681 \\
 & RDFlex Stacking & -0.3166 & 16.4743 & 0.9667 & 0.0257 & 0.0104 & 0.0732 & 0.3574 \\
\cline{1-9}
\multirow[t]{6}{*}{Fuzzy Subset} & RDD Conventional Covs & -0.0003 & 0.0085 & 0.9754 &  &  &  &  \\
 & RDD Without Covs & -0.0002 & 0.0085 & 0.9713 &  &  &  &  \\
 & RDFlex Boosting & -0.0002 & 0.0060 & 0.9713 & 0.0359 & 0.1205 & 0.0950 & 0.1473 \\
 & RDFlex Global Lasso & 0.0002 & 0.0057 & 0.9836 & 0.0336 & 0.1394 & 0.0936 & 0.1468 \\
 & RDFlex Lasso & 0.0002 & 0.0057 & 0.9795 & 0.0333 & 0.2688 & 0.0936 & 0.1580 \\
 & RDFlex Stacking & 0.0000 & 0.0058 & 0.9754 & 0.0337 & 0.0321 & 0.0938 & 0.1460 \\
\cline{1-9}
\multirow[t]{6}{*}{Sharp} & RDD Conventional Covs & -0.0043 & 0.0073 & 0.8720 &  &  &  &  \\
 & RDD Without Covs & -0.0050 & 0.0074 & 0.8880 &  &  &  &  \\
 & RDFlex Boosting & -0.0040 & 0.0059 & 0.8640 & 0.0257 &  & 0.0749 &  \\
 & RDFlex Global Lasso & -0.0049 & 0.0075 & 0.8840 & 0.0353 &  & 0.0728 &  \\
 & RDFlex Lasso & -0.0045 & 0.0077 & 0.8800 & 0.0271 &  & 0.0726 &  \\
 & RDFlex Stacking & -0.0048 & 0.0070 & 0.8680 & 0.0255 &  & 0.0732 &  \\
\cline{1-9}
\multirow[t]{6}{*}{Sharp Subset} & RDD Conventional Covs & -0.0005 & 0.0078 & 0.9800 &  &  &  &  \\
 & RDD Without Covs & -0.0004 & 0.0078 & 0.9720 &  &  &  &  \\
 & RDFlex Boosting & -0.0007 & 0.0057 & 0.9840 & 0.0361 &  & 0.0953 &  \\
 & RDFlex Global Lasso & -0.0003 & 0.0054 & 0.9800 & 0.0337 &  & 0.0940 &  \\
 & RDFlex Lasso & -0.0003 & 0.0054 & 0.9760 & 0.0333 &  & 0.0940 &  \\
 & RDFlex Stacking & -0.0004 & 0.0054 & 0.9720 & 0.0337 &  & 0.0939 &  \\
\cline{1-9}
\bottomrule
\end{tabular}

}
\caption{Mean Bias and standard error, coverage and first stage prediction quality of different estimators in estimation at $c_Y$.}\label{tab:yield}
\end{table}
\paragraph{Two-dimensional estimator}
\begin{table}[H]
\centering
\scalebox{0.5}{%
\begin{tabular}{llrrrrrrr}
\toprule
 &  & Mean Bias & s.e. & Coverage & RMSE left & Log loss left & RMSE right & Log loss right \\
setting & method &  &  &  &  &  &  &  \\
\midrule
\multirow[t]{6}{*}{2D Binding Fuzzy} & RDD Conventional Covs & -0.0032 & 0.0092 & 0.5080 &  &  &  &  \\
 & RDD Without Covs & -0.0020 & 0.0092 & 0.5080 &  &  &  &  \\
 & RDFlex Boosting & -0.0023 & 0.0073 & 0.5160 & 0.0292 & 0.0658 & 0.1008 & 0.2655 \\
 & RDFlex Global Lasso & -0.0029 & 0.0078 & 0.5120 & 0.0416 & 0.1128 & 0.0993 & 0.3470 \\
 & RDFlex Lasso & -0.0022 & 0.0074 & 0.5160 & 0.0288 & 0.3225 & 0.0987 & 0.4954 \\
 & RDFlex Stacking & -0.0023 & 0.0073 & 0.5120 & 0.0281 & 0.0155 & 0.0988 & 0.2438 \\
\cline{1-9}
\multirow[t]{6}{*}{2D Binding Sharp} & RDD Conventional Covs & -0.0021 & 0.0081 & 0.9600 &  &  &  &  \\
 & RDD Without Covs & -0.0008 & 0.0081 & 0.9480 &  &  &  &  \\
 & RDFlex Boosting & -0.0019 & 0.0069 & 0.9720 & 0.0298 &  & 0.1011 &  \\
 & RDFlex Global Lasso & -0.0018 & 0.0068 & 0.9520 & 0.0420 &  & 0.0991 &  \\
 & RDFlex Lasso & -0.0011 & 0.0069 & 0.9520 & 0.0288 &  & 0.0987 &  \\
 & RDFlex Stacking & -0.0014 & 0.0069 & 0.9480 & 0.0282 &  & 0.0989 &  \\
\cline{1-9}
\multirow[t]{6}{*}{2D Euclidean Fuzzy} & RDD Conventional Covs & -0.0226 & 0.0119 & 0.4360 &  &  &  &  \\
 & RDD Without Covs & -0.0229 & 0.0122 & 0.4640 &  &  &  &  \\
 & RDFlex Boosting & -0.0234 & 0.0094 & 0.4000 & 0.0255 & 0.0744 & 0.1054 & 0.0654 \\
 & RDFlex Global Lasso & -0.0208 & 0.0106 & 0.5920 & 0.0349 & 0.1123 & 0.1037 & 0.2470 \\
 & RDFlex Lasso & -0.0226 & 0.0094 & 0.4320 & 0.0250 & 0.4624 & 0.1020 & 0.4017 \\
 & RDFlex Stacking & -0.0223 & 0.0098 & 0.4640 & 0.0252 & 0.0179 & 0.1025 & 0.0154 \\
\cline{1-9}
\multirow[t]{6}{*}{2D Euclidean Sharp} & RDD Conventional Covs & -0.0255 & 0.0119 & 0.3120 &  &  &  &  \\
 & RDD Without Covs & -0.0257 & 0.0122 & 0.3160 &  &  &  &  \\
 & RDFlex Boosting & -0.0262 & 0.0104 & 0.3560 & 0.0254 &  & 0.1068 &  \\
 & RDFlex Global Lasso & -0.0253 & 0.0102 & 0.3440 & 0.0350 &  & 0.1038 &  \\
 & RDFlex Lasso & -0.0257 & 0.0103 & 0.3480 & 0.0247 &  & 0.1031 &  \\
 & RDFlex Stacking & -0.0258 & 0.0103 & 0.3200 & 0.0252 &  & 0.1034 &  \\
\cline{1-9}
\bottomrule
\end{tabular}

}
\caption{Mean Bias and standard error, coverage and first stage prediction quality of different estimators in estimation with two-dimensional estimators.}\label{tab:binding}
\end{table}

\subsection{Simulation for Acknowledging Operator}
\paragraph{Cutoff $c_D$}
\begin{table}[H]
\centering
\scalebox{0.5}{%
\begin{tabular}{llrrrrrrr}
\toprule
 &  & Mean Bias & s.e. & Coverage & RMSE left & Log loss left & RMSE right & Log loss right \\
setting & method &  &  &  &  &  &  &  \\
\midrule
\multirow[t]{6}{*}{Fuzzy} & RDD Conventional Covs & -0.0019 & 0.0102 & 0.9735 &  &  &  &  \\
 & RDD Without Covs & -0.0017 & 0.0103 & 0.9823 &  &  &  &  \\
 & RDFlex Boosting & -0.0018 & 0.0086 & 0.9912 & 0.0243 & 0.0580 & 0.0988 & 0.3837 \\
 & RDFlex Global Lasso & -0.0020 & 0.0086 & 1.0000 & 0.0395 & 0.1020 & 0.0964 & 0.5188 \\
 & RDFlex Lasso & -0.0021 & 0.0083 & 0.9823 & 0.0242 & 0.1130 & 0.0958 & 0.4996 \\
 & RDFlex Stacking & -0.0018 & 0.0085 & 0.9823 & 0.0243 & 0.0135 & 0.0960 & 0.3813 \\
\cline{1-9}
\multirow[t]{6}{*}{Sharp} & RDD Conventional Covs & -0.0042 & 0.0067 & 0.9320 &  &  &  &  \\
 & RDD Without Covs & -0.0042 & 0.0067 & 0.9360 &  &  &  &  \\
 & RDFlex Boosting & -0.0042 & 0.0058 & 0.9200 & 0.0243 &  & 0.0984 &  \\
 & RDFlex Global Lasso & -0.0042 & 0.0057 & 0.9360 & 0.0396 &  & 0.0956 &  \\
 & RDFlex Lasso & -0.0042 & 0.0057 & 0.9360 & 0.0241 &  & 0.0951 &  \\
 & RDFlex Stacking & -0.0042 & 0.0057 & 0.9320 & 0.0243 &  & 0.0954 &  \\
\cline{1-9}
\multirow[t]{6}{*}{Sharp Subset} & RDD Conventional Covs & 0.0010 & 0.0101 & 0.9720 &  &  &  &  \\
 & RDD Without Covs & 0.0009 & 0.0102 & 0.9720 &  &  &  &  \\
 & RDFlex Boosting & 0.0005 & 0.0089 & 0.9760 & 0.0261 &  & 0.1122 &  \\
 & RDFlex Global Lasso & 0.0009 & 0.0084 & 0.9720 & 0.0255 &  & 0.1083 &  \\
 & RDFlex Lasso & 0.0009 & 0.0085 & 0.9720 & 0.0247 &  & 0.1078 &  \\
 & RDFlex Stacking & 0.0009 & 0.0085 & 0.9720 & 0.0258 &  & 0.1080 &  \\
\cline{1-9}
\bottomrule
\end{tabular}

}
\caption{Mean Bias and standard error, coverage and first stage prediction quality of different estimators in estimation at $c_D$.}\label{tab:dis_nonvtk}
\end{table}
\paragraph{Cutoff $c_Y$}
\begin{table}[H]
\centering
\scalebox{0.5}{%
\begin{tabular}{llrrrrrrr}
\toprule
 &  & Mean Bias & s.e. & Coverage & RMSE left & Log loss left & RMSE right & Log loss right \\
setting & method &  &  &  &  &  &  &  \\
\midrule
\multirow[t]{6}{*}{Fuzzy} & RDD Conventional Covs & -0.0160 & 0.0125 & 0.6267 &  &  &  &  \\
 & RDD Without Covs & -0.0193 & 0.0115 & 0.5067 &  &  &  &  \\
 & RDFlex Boosting & -0.0390 & 0.4699 & 0.9800 & 0.0265 & 0.0338 & 0.0784 & 0.3830 \\
 & RDFlex Global Lasso & 0.0835 & 3.5647 & 0.9867 & 0.0380 & 0.0134 & 0.0747 & 0.5802 \\
 & RDFlex Lasso & -0.0697 & 4.5786 & 0.9867 & 0.0276 & 0.0213 & 0.0744 & 0.4486 \\
 & RDFlex Stacking & 0.0001 & 1.7755 & 0.9600 & 0.0257 & 0.0096 & 0.0777 & 0.3278 \\
\cline{1-9}
\multirow[t]{6}{*}{Sharp} & RDD Conventional Covs & -0.0035 & 0.0081 & 0.8480 &  &  &  &  \\
 & RDD Without Covs & -0.0044 & 0.0079 & 0.8600 &  &  &  &  \\
 & RDFlex Boosting & -0.0037 & 0.0069 & 0.8480 & 0.0257 &  & 0.0789 &  \\
 & RDFlex Global Lasso & -0.0040 & 0.0081 & 0.8600 & 0.0357 &  & 0.0768 &  \\
 & RDFlex Lasso & -0.0041 & 0.0086 & 0.8680 & 0.0271 &  & 0.0766 &  \\
 & RDFlex Stacking & -0.0042 & 0.0077 & 0.8600 & 0.0256 &  & 0.0771 &  \\
\cline{1-9}
\multirow[t]{6}{*}{Sharp Subset} & RDD Conventional Covs & -0.0008 & 0.0081 & 0.9760 &  &  &  &  \\
 & RDD Without Covs & -0.0008 & 0.0080 & 0.9760 &  &  &  &  \\
 & RDFlex Boosting & -0.0007 & 0.0059 & 0.9720 & 0.0366 &  & 0.0985 &  \\
 & RDFlex Global Lasso & -0.0006 & 0.0056 & 0.9800 & 0.0339 &  & 0.0970 &  \\
 & RDFlex Lasso & -0.0006 & 0.0056 & 0.9800 & 0.0336 &  & 0.0971 &  \\
 & RDFlex Stacking & -0.0007 & 0.0055 & 0.9760 & 0.0333 &  & 0.0970 &  \\
\cline{1-9}
\bottomrule
\end{tabular}

}
\caption{Mean Bias and standard error, coverage and first stage prediction quality of different estimators in estimation at $c_Y$.}\label{tab:yield_nonvtk}
\end{table}

\section*{Acknowledgements}
This work was funded by the Bavarian Joint Research Program (BayVFP) – Digitization (Funding reference: DIK0294/01)
and the German Research Foundation (DFG) (Grant no 448795504).
The research partners ams-OSRAM and Economic AI kindly thank the the VDI/VDE-IT Munich for the organization
and the Free State of Bavaria as well as the Federal Republic of Germany, for the financial support. 

%\section*{Author Contributions}
%O.S. wrote the chapters ``Introduction'', ``Related Literature'', ``Simulation Study'', ``Empirical Results'',
%''Appendix B Additional Numerical Results'' and drafted the subsection ``Rework'' in ``Empirical Applications''.
%P.A.S. envisaged and wrote the sections ``General Identification Strategies in Multi-Score RDD'',
%``Empirical Application'' (revising the ``Rework subsection'') and the ``Appendix A Proofs''.
%All the authors revised and improved the ``Abstract'', ``Introduction'' and ``Conclusion''
%sections and proofread the other sections.
%The principal ideas and intuition for the theory are a product of discussions
%between S.K., J.O., P.A.S. and M.S. and where formalized by by P.A.S. .
%P.A.S. programmed the first stable draft of the DGP, which was later adjusted by S.K. and O.S.,
%and tuned by P.A.S. to reflect the empirical data.
%Further P.A.S compiled and preprocessed the dataset for the empirical study.
%O.S. drafted the estimation script which was revised and extended by P.A.S. to a benchmark tool.
%O.S. carried out the estimation, tuning and summarization of the results.
\bibliographystyle{plainnat}
\bibliography{mybib.bib}

\begin{thebibliography}{44}
\providecommand{\natexlab}[1]{#1}
\providecommand{\url}[1]{\texttt{#1}}
\expandafter\ifx\csname urlstyle\endcsname\relax
  \providecommand{\doi}[1]{doi: #1}\else
  \providecommand{\doi}{doi: \begingroup \urlstyle{rm}\Url}\fi

\bibitem[An et~al.(2024)An, Branson, and Miratrix]{an2024}
Lily An, Zach Branson, and Luke Miratrix.
\newblock Using gaussian process regression in two-dimensional regression
  discontinuity designs. edworkingpaper no. 24-1043.
\newblock \emph{Annenberg Institute for School Reform at Brown University},
  2024.

\bibitem[Angrist and Lavy(1999)]{Angrist1999}
J.~D. Angrist and V.~Lavy.
\newblock Using maimonides’ rule to estimate the effect of class size on
  scholastic achievement.
\newblock \emph{The Quarterly Journal of Economics}, 114\penalty0 (2):\penalty0
  533–575, May 1999.
\newblock ISSN 1531-4650.
\newblock \doi{10.1162/003355399556061}.
\newblock URL \url{http://dx.doi.org/10.1162/003355399556061}.

\bibitem[Arai et~al.(2025)Arai, Otsu, and Seo]{Arai2025}
Yoichi Arai, Taisuke Otsu, and Myung~Hwan Seo.
\newblock Regression discontinuity design with potentially many covariates.
\newblock \emph{Econometric Theory}, page 1–36, February 2025.
\newblock ISSN 1469-4360.
\newblock \doi{10.1017/s0266466624000239}.
\newblock URL \url{http://dx.doi.org/10.1017/S0266466624000239}.

\bibitem[Austin(2011)]{Austin2011}
Peter~C Austin.
\newblock An introduction to propensity score methods for reducing the effects
  of confounding in observational studies.
\newblock \emph{Multivariate Behav. Res.}, 46\penalty0 (3):\penalty0 399--424,
  May 2011.

\bibitem[Black(1999)]{Black1999}
S~E Black.
\newblock Do better schools matter? parental valuation of elementary education.
\newblock \emph{Q. J. Econ.}, 114\penalty0 (2):\penalty0 577--599, May 1999.

\bibitem[Calonico et~al.(2014)Calonico, Cattaneo, and Titiunik]{Calonico2014}
Sebastian Calonico, Matias~D Cattaneo, and Rocio Titiunik.
\newblock Robust nonparametric confidence intervals for
  regression-discontinuity designs.
\newblock \emph{Econometrica}, 82\penalty0 (6):\penalty0 2295--2326, November
  2014.

\bibitem[Calonico et~al.(2019{\natexlab{a}})Calonico, Cattaneo, and
  Farrell]{Calonico2019b}
Sebastian Calonico, Matias~D Cattaneo, and Max~H Farrell.
\newblock Optimal bandwidth choice for robust bias-corrected inference in
  regression discontinuity designs.
\newblock \emph{The Econometrics Journal}, 23\penalty0 (2):\penalty0 192–210,
  November 2019{\natexlab{a}}.
\newblock ISSN 1368-423X.
\newblock \doi{10.1093/ectj/utz022}.
\newblock URL \url{http://dx.doi.org/10.1093/ectj/utz022}.

\bibitem[Calonico et~al.(2019{\natexlab{b}})Calonico, Cattaneo, Farrell, and
  Titiunik]{calonico2019}
Sebastian Calonico, Matias~D. Cattaneo, Max~H. Farrell, and Rocío Titiunik.
\newblock Regression discontinuity designs using covariates.
\newblock \emph{The Review of Economics and Statistics}, 101\penalty0
  (3):\penalty0 442--451, 07 2019{\natexlab{b}}.
\newblock ISSN 0034-6535.
\newblock \doi{10.1162/rest_a_00760}.
\newblock URL \url{https://doi.org/10.1162/rest\_a\_00760}.

\bibitem[Calvo et~al.(2019)Calvo, Cui, and Serpa]{Calvo2019}
Eduard Calvo, Ruomeng Cui, and Juan~Camilo Serpa.
\newblock Oversight and efficiency in public projects: A regression
  discontinuity analysis.
\newblock \emph{Manage. Sci.}, 65\penalty0 (12):\penalty0 5651--5675, December
  2019.

\bibitem[Card et~al.(2015)Card, Lee, Pei, and Weber]{Card2015}
David Card, David~S. Lee, Zhuan Pei, and Andrea Weber.
\newblock Inference on causal effects in a generalized regression kink design.
\newblock \emph{Econometrica}, 83\penalty0 (6):\penalty0 2453–2483, 2015.
\newblock ISSN 0012-9682.
\newblock \doi{10.3982/ecta11224}.
\newblock URL \url{http://dx.doi.org/10.3982/ECTA11224}.

\bibitem[Cattaneo et~al.(2019)Cattaneo, Idrobo, and Titiunik]{Cattaneo2019}
Matias~D. Cattaneo, Nicolas Idrobo, and Rocio Titiunik.
\newblock A practical introduction to regression discontinuity designs:
  Foundations.
\newblock November 2019.
\newblock \doi{10.1017/9781108684606}.

\bibitem[Cattaneo et~al.(2025)Cattaneo, Titiunik, and Yu]{Cattaneo2025}
Matias~D. Cattaneo, Rocio Titiunik, and Ruiqi~Rae Yu.
\newblock Estimation and inference in boundary discontinuity designs, 2025.
\newblock URL \url{https://arxiv.org/abs/2505.05670}.

\bibitem[Chernozhukov et~al.(2024)Chernozhukov, Hansen, Kallus, Spindler, and
  Syrgkanis]{chernozhukov2024}
Victor Chernozhukov, Christian Hansen, Nathan Kallus, Martin Spindler, and
  Vasilis Syrgkanis.
\newblock Applied causal inference powered by ml and ai, 2024.
\newblock URL \url{https://arxiv.org/abs/2403.02467}.

\bibitem[Cho et~al.(2017)Cho, Park, Kim, and Schubert]{Cho2017}
Jaehee Cho, Jun~Hyuk Park, Jong~Kyu Kim, and E.~Fred Schubert.
\newblock White light‐emitting diodes: History, progress, and future.
\newblock \emph{Laser \& Photonics Reviews}, 11, 2017.

\bibitem[Choi and Lee(2018)]{Choi2018}
Jin-young Choi and Myoung-jae Lee.
\newblock Regression discontinuity with multiple running variables allowing
  partial effects.
\newblock \emph{Political Analysis}, 26\penalty0 (3):\penalty0 258–274, July
  2018.
\newblock ISSN 1476-4989.
\newblock \doi{10.1017/pan.2018.13}.
\newblock URL \url{http://dx.doi.org/10.1017/pan.2018.13}.

\bibitem[Choi and Lee(2023)]{Choi2023}
Jin-young Choi and Myoung-jae Lee.
\newblock Complier and monotonicity for fuzzy multi-score regression
  discontinuity with partial effects.
\newblock \emph{Economics Letters}, 228:\penalty0 111169, 2023.
\newblock ISSN 0165-1765.
\newblock \doi{https://doi.org/10.1016/j.econlet.2023.111169}.
\newblock URL
  \url{https://www.sciencedirect.com/science/article/pii/S0165176523001945}.

\bibitem[Flammer(2015)]{Flammer2015}
Caroline Flammer.
\newblock Does corporate social responsibility lead to superior financial
  performance? a regression discontinuity approach.
\newblock \emph{Management Science}, 61\penalty0 (11):\penalty0 2549–2568,
  November 2015.
\newblock ISSN 1526-5501.
\newblock \doi{10.1287/mnsc.2014.2038}.
\newblock URL \url{http://dx.doi.org/10.1287/mnsc.2014.2038}.

\bibitem[Hahn et~al.(2001)Hahn, Todd, and der Klaauw]{Hahn2001}
Jinyong Hahn, Petra Todd, and Wilbert~Van der Klaauw.
\newblock Identification and estimation of treatment effects with a
  regression-discontinuity design.
\newblock \emph{Econometrica}, 69\penalty0 (1):\penalty0 201--209, 2001.
\newblock ISSN 00129682, 14680262.
\newblock URL \url{http://www.jstor.org/stable/2692190}.

\bibitem[Hartmann et~al.(2011)Hartmann, Nair, and Narayanan]{Hartmann2011}
Wesley Hartmann, Harikesh~S. Nair, and Sridhar Narayanan.
\newblock Identifying causal marketing mix effects using a regression
  discontinuity design.
\newblock \emph{Marketing Science}, 30\penalty0 (6):\penalty0 1079–1097,
  November 2011.
\newblock ISSN 1526-548X.
\newblock \doi{10.1287/mksc.1110.0670}.
\newblock URL \url{http://dx.doi.org/10.1287/mksc.1110.0670}.

\bibitem[Ho et~al.(2017)Ho, Lim, Reza, and Xia]{Ho2017}
Teck-Hua Ho, Noah Lim, Sadat Reza, and Xiaoyu Xia.
\newblock Om forum—causal inference models in operations management.
\newblock \emph{Manufacturing \& Service Operations Management}, 19\penalty0
  (4):\penalty0 509--525, 2017.
\newblock \doi{10.1287/msom.2017.0659}.
\newblock URL \url{https://doi.org/10.1287/msom.2017.0659}.

\bibitem[Hu et~al.(2021)Hu, Chopra, and Chen]{Hu2021}
Kejia Hu, Sunil Chopra, and Yuche Chen.
\newblock The effect of tightening standards on automakers’ non-compliance.
\newblock \emph{Production and Operations Management}, 30\penalty0
  (9):\penalty0 3094--3115, 2021.

\bibitem[H\"{u}nermund et~al.(2021)H\"{u}nermund, Kaminski, and
  Schmitt]{Hnermund2021}
Paul H\"{u}nermund, Jermain Kaminski, and Carla Schmitt.
\newblock Causal machine learning and business decision making.
\newblock \emph{SSRN Electronic Journal}, 2021.
\newblock ISSN 1556-5068.
\newblock \doi{10.2139/ssrn.3867326}.
\newblock URL \url{http://dx.doi.org/10.2139/ssrn.3867326}.

\bibitem[Imbens and Wager(2019)]{Imbens2019}
Guido Imbens and Stefan Wager.
\newblock Optimized regression discontinuity designs.
\newblock \emph{The Review of Economics and Statistics}, 101\penalty0
  (2):\penalty0 264–278, May 2019.
\newblock ISSN 1530-9142.
\newblock \doi{10.1162/rest_a_00793}.
\newblock URL \url{http://dx.doi.org/10.1162/rest_a_00793}.

\bibitem[Imbens and Zajonc(2009)]{Imbens2009}
Guido Imbens and Tristan Zajonc.
\newblock Regression discontinuity design with vector-argument assignment
  rules.
\newblock \emph{Unpublished paper}, page~8, 2009.

\bibitem[Imbens and Angrist(1994)]{Imbens1994}
Guido~W. Imbens and Joshua~D. Angrist.
\newblock Identification and estimation of local average treatment effects.
\newblock \emph{Econometrica}, 62\penalty0 (2):\penalty0 467, March 1994.
\newblock ISSN 0012-9682.
\newblock \doi{10.2307/2951620}.

\bibitem[Imbens and Lemieux(2008)]{Imbens2008}
Guido~W. Imbens and Thomas Lemieux.
\newblock Regression discontinuity designs: A guide to practice.
\newblock \emph{Journal of Econometrics}, 142\penalty0 (2):\penalty0 615–635,
  February 2008.
\newblock ISSN 0304-4076.
\newblock \doi{10.1016/j.jeconom.2007.05.001}.
\newblock URL \url{http://dx.doi.org/10.1016/j.jeconom.2007.05.001}.

\bibitem[Keele and Titiunik(2015)]{keele2015}
Luke~J Keele and Rocio Titiunik.
\newblock Geographic boundaries as regression discontinuities.
\newblock \emph{Political Analysis}, 23\penalty0 (1):\penalty0 127--155, 2015.

\bibitem[Kreiss and Rothe(2022)]{Kreiss2022}
Alexander Kreiss and Christoph Rothe.
\newblock Inference in regression discontinuity designs with high-dimensional
  covariates.
\newblock \emph{The Econometrics Journal}, 26\penalty0 (2):\penalty0 105–123,
  December 2022.
\newblock ISSN 1368-423X.
\newblock \doi{10.1093/ectj/utac029}.
\newblock URL \url{http://dx.doi.org/10.1093/ectj/utac029}.

\bibitem[Lee(2008)]{Lee2008}
David~S. Lee.
\newblock Randomized experiments from non-random selection in u.s. house
  elections.
\newblock \emph{Journal of Econometrics}, 142\penalty0 (2):\penalty0 675–697,
  February 2008.
\newblock ISSN 0304-4076.
\newblock \doi{10.1016/j.jeconom.2007.05.004}.
\newblock URL \url{http://dx.doi.org/10.1016/j.jeconom.2007.05.004}.

\bibitem[Lee and Lemieux(2010)]{Lee2010}
David~S Lee and Thomas Lemieux.
\newblock Regression discontinuity designs in economics.
\newblock \emph{Journal of Economic Literature}, 48\penalty0 (2):\penalty0
  281–355, June 2010.
\newblock ISSN 0022-0515.
\newblock \doi{10.1257/jel.48.2.281}.
\newblock URL \url{http://dx.doi.org/10.1257/jel.48.2.281}.

\bibitem[Leung et~al.(2020)Leung, Li, and Sun]{Leung2020}
Woon~Sau Leung, Jing Li, and Jiong Sun.
\newblock Labor unionization and supply-chain partners’ performance.
\newblock \emph{Production and Operations Management}, 29\penalty0
  (5):\penalty0 1325--1353, 2020.

\bibitem[Liu and Qi(2024)]{liu2024}
Yiqi Liu and Yuan Qi.
\newblock Using forests in multivariate regression discontinuity designs, 2024.
\newblock URL \url{https://arxiv.org/abs/2303.11721}.

\bibitem[Mithas et~al.(2022)Mithas, Chen, Lin, and
  De~Oliveira~Silveira]{Mithas2022}
Sunil Mithas, Yanzhen Chen, Yatang Lin, and Alysson De~Oliveira~Silveira.
\newblock On the causality and plausibility of treatment effects in operations
  management research.
\newblock \emph{Production and Operations Management}, 31\penalty0
  (12):\penalty0 4558–4571, December 2022.
\newblock ISSN 1937-5956.
\newblock \doi{10.1111/poms.13863}.
\newblock URL \url{http://dx.doi.org/10.1111/poms.13863}.

\bibitem[Noack et~al.(2024)Noack, Olma, and Rothe]{noack2024}
Claudia Noack, Tomasz Olma, and Christoph Rothe.
\newblock Flexible covariate adjustments in regression discontinuity designs,
  2024.
\newblock \url{https://arxiv.org/abs/2107.07942}.

\bibitem[Papay et~al.(2011)Papay, Willett, and Murnane]{papay2011}
John~P. Papay, John~B. Willett, and Richard~J. Murnane.
\newblock Extending the regression-discontinuity approach to multiple
  assignment variables.
\newblock \emph{Journal of Econometrics}, 161\penalty0 (2):\penalty0 203--207,
  2011.
\newblock ISSN 0304-4076.
\newblock \doi{https://doi.org/10.1016/j.jeconom.2010.12.008}.
\newblock URL
  \url{https://www.sciencedirect.com/science/article/pii/S0304407610002538}.

\bibitem[Porter et~al.(2017)Porter, Reardon, Unlu, Bloom, and
  Cimpian]{porter2017}
Kristin~E Porter, Sean~F Reardon, Fatih Unlu, Howard~S Bloom, and Joseph~R
  Cimpian.
\newblock Estimating causal effects of education interventions using a
  two-rating regression discontinuity design: Lessons from a simulation study
  and an application.
\newblock \emph{Journal of Research on Educational Effectiveness}, 10\penalty0
  (1):\penalty0 138--167, 2017.

\bibitem[Reardon and Robinson(2012)]{Reardon2012}
Sean~F. Reardon and Joseph~P. Robinson.
\newblock Regression discontinuity designs with multiple rating-score
  variables.
\newblock \emph{Journal of Research on Educational Effectiveness}, 5\penalty0
  (1):\penalty0 83–104, January 2012.
\newblock ISSN 1934-5739.
\newblock \doi{10.1080/19345747.2011.609583}.
\newblock URL \url{http://dx.doi.org/10.1080/19345747.2011.609583}.

\bibitem[Rubin(1974)]{Rubin1974}
Donald~B. Rubin.
\newblock Estimating causal effects of treatments in randomized and
  nonrandomized studies.
\newblock \emph{Journal of Educational Psychology}, 66\penalty0 (5):\penalty0
  688–701, October 1974.
\newblock ISSN 0022-0663.
\newblock \doi{10.1037/h0037350}.
\newblock URL \url{http://dx.doi.org/10.1037/h0037350}.

\bibitem[Rubin(2005)]{rubin2005}
Donald~B. Rubin.
\newblock Causal inference using potential outcomes: Design, modeling,
  decisions.
\newblock \emph{Journal of the American Statistical Association}, 100\penalty0
  (469):\penalty0 322--331, 2005.
\newblock ISSN 01621459.
\newblock URL \url{http://www.jstor.org/stable/27590541}.

\bibitem[Sabaei et~al.(2015)Sabaei, Erkoyuncu, and Roy]{Sabaei2015}
Davood Sabaei, John Erkoyuncu, and Rajkumar Roy.
\newblock A review of multi-criteria decision making methods for enhanced
  maintenance delivery.
\newblock \emph{Procedia CIRP}, 37:\penalty0 30--35, 2015.

\bibitem[Sawada et~al.(2025)Sawada, Ishihara, Kurisu, and Matsuda]{Sawada2025}
Masayuki Sawada, Takuya Ishihara, Daisuke Kurisu, and Yasumasa Matsuda.
\newblock Local-polynomial estimation for multivariate regression discontinuity
  designs, 2025.
\newblock URL \url{https://arxiv.org/abs/2402.08941}.

\bibitem[Schwarz et~al.(2024)Schwarz, Schacht, Klaassen, Grünbaum, Imhof, and
  Spindler]{schwarz2024}
Philipp Schwarz, Oliver Schacht, Sven Klaassen, Daniel Grünbaum, Sebastian
  Imhof, and Martin Spindler.
\newblock Management decisions in manufacturing using causal machine learning
  -- to rework, or not to rework?, 2024.
\newblock \url{https://arxiv.org/abs/2406.11308}.

\bibitem[Thistlethwaite and Campbell(1960)]{thistlethwaite1960}
Donald~L Thistlethwaite and Donald~T Campbell.
\newblock Regression-discontinuity analysis: An alternative to the ex post
  facto experiment.
\newblock \emph{J. Educ. Psychol.}, 51\penalty0 (6):\penalty0 309--317,
  December 1960.

\bibitem[Wong et~al.(2013)Wong, Steiner, and Cook]{wong2013}
Vivian~C Wong, Peter~M Steiner, and Thomas~D Cook.
\newblock Analyzing regression-discontinuity designs with multiple assignment
  variables: A comparative study of four estimation methods.
\newblock \emph{Journal of Educational and Behavioral Statistics}, 38\penalty0
  (2):\penalty0 107--141, 2013.

\end{thebibliography}

\end{document}